\documentclass[journal]{IEEEtran}   
\usepackage{multirow}
\usepackage{diagbox}
\usepackage{amssymb}
\usepackage{amsmath}
\usepackage{graphicx}
\usepackage{cite}
\usepackage{citesort}
\usepackage{subfigure}
\usepackage{graphicx,epstopdf}
\usepackage{epsfig}	
\usepackage{cite,graphicx,amsmath,amssymb}
\usepackage{comment}

\usepackage{amssymb}
\usepackage{amsmath}
\usepackage{cite}
\usepackage{url}
\usepackage{xcolor}
\usepackage{cite,graphicx,amsmath,amssymb}
\usepackage{subfigure}
\usepackage{citesort}
\usepackage{fancyhdr}
\usepackage{mdwmath}
\usepackage{mdwtab}
\usepackage{caption}
\usepackage{amsthm}
\usepackage{lipsum}

\usepackage [latin1]{inputenc}

\newtheorem{theorem}{Theorem}

\newtheorem{lemma}{Lemma}

\newtheorem{corollary}{Corollary}

\newtheorem{remark}{Remark}  

\makeatletter
\def\ScaleIfNeeded{%
	\ifdim\Gin@nat@width>\linewidth \linewidth \else \Gin@nat@width
	\fi } \makeatother

\begin{document}
	
\title{Exploiting Active RIS in NOMA Networks with Hardware Impairments}

\author{Xinwei~Yue,~\IEEEmembership{Senior Member,~IEEE}, Meiqi Song, Chongjun Ouyang, Yuanwei\ Liu,~\IEEEmembership{Fellow,~IEEE}, Tian Li,~\IEEEmembership{Member,~IEEE} and Tianwei Hou,~\IEEEmembership{Member,~IEEE}
	
	\thanks{X. Yue and M. Song are with the Key Laboratory of Information and Communication Systems, Ministry of Information Industry and also with the Key Laboratory of Modern Measurement $\&$ Control Technology, Ministry of Education, Beijing Information Science and Technology University, Beijing 100101, China (email: \{xinwei.yue and meiqi.song\}@bistu.edu.cn).}
	\thanks{C. Ouyang is with the School of Electrical and Electronic Engineering, University College Dublin, Dublin, D04 V1W8, Ireland (e-mail: chongjun.ouyang@ucd.ie).}
	\thanks{Y. Liu is with the School of Electronic Engineering and Computer Science, Queen Mary University of London, London E1 4NS, U.K. (email: yuanwei.liu@qmul.ac.uk).}
	\thanks{T. Li is with the 54th Research Institute of China Electronics Technology Group Corporation, Shijiazhuang Hebei 050081, China. (email: t.li@ieee.org).}
	\thanks{T. Hou is with the School of Electronic and Information Engineering, Beijing Jiaotong University, Beijing 100044, China, and also with the Institute for Digital Communications, Friedrich-Alexander Universit{\"a}t Erlangen-N{\"u}rnberg (FAU), 91054 Erlangen, Germany (email: twhou@bjtu.edu.cn). }
}
	
\maketitle
	
\begin{abstract}
	Active reconfigurable intelligent surface (ARIS) is a promising way to compensate for multiplicative fading attenuation by amplifying and reflecting event signals to selected users. This paper investigates the performance of ARIS assisted non-orthogonal multiple access (NOMA) networks over cascaded Nakagami-$m$ fading channels. The effects of hardware impairments (HIS) and reflection coefficients on ARIS-NOMA networks with imperfect successive interference cancellation (ipSIC) and perfect successive interference cancellation (pSIC) are considered.
	More specifically, we develop new precise and asymptotic expressions of outage probability and ergodic data rate with ipSIC/pSIC for ARIS-NOMA-HIS networks. According to the approximated analyses, the diversity orders and multiplexing gains for couple of non-orthogonal users are attained in detail. Additionally, the energy efficiency of ARIS-NOMA-HIS networks is surveyed in delay-limited and delay-tolerant transmission schemes.
	The simulation findings are presented to demonstrate that: i) The outage behaviors and ergodic data rates of ARIS-NOMA-HIS networks precede that of ARIS aided orthogonal multiple access (OMA) and passive reconfigurable intelligent surface (PRIS) aided OMA; ii) As the reflection coefficient of ARIS increases, ARIS-NOMA-HIS networks have the ability to provide the strengthened outage performance; and iii) ARIS-NOMA-HIS networks are more energy efficient than ARIS/PRIS-OMA networks and conventional cooperative schemes.
\end{abstract}
	
\begin{keywords}
	Active reconfigurable intelligent surface, non-orthogonal multiple access, outage behavior, ergodic data rate, energy efficiency.
\end{keywords}
	
\section{Introduction}
As the diversification of business demand increases, wireless communications are currently confronted with unpredictable challenges \cite{Chenshanzhi6G2020,Nguyen6GIOT2022}. It is forecasted that ten thousand mobile users and smart equipment will be linked to the networks by the end of 2022 \cite{Cisco2019}.
The aims of the sixth-generation (6G) networks are predicted to support the key performance indexes, i.e., higher data rate, massive connections, low latency, as well as higher full coverage, and so on. Non-orthogonal multiple access (NOMA) has been referred to as one of the physical layer technologies \cite{Yuanwei2022NOMANGMA,Xinyue20226GNOMA}, which solves the problems of limited access for multiple users from the views of information-theoretic.
Furthermore, the integration of NOMA with emerging wireless techniques was explored towards 6G networks \cite{Yuanwei2022NOMA}. Another noteworthy technique, i.e., reconfigurable intelligent surface (RIS) has sparked the enthusiasm of research from both academic and industrial communities \cite{Cui2014metamaterials,Hum2014metamaterials}. It has been confirmed to accomplish the control of incident electromagnetic waves through the programmable device.
	
In particular, the RIS is generally arranged by a mount of passive electromagnetic units designed carefully, which can adapt to changes in the environment \cite{WuTowards2020PRIS,Basharat2021PRIS6G,Yuanwei2021PRIS}.
Up to now, numerous treatises have introduced this type of passive RIS (PRIS) into wireless communication networks. In \cite{Huang2019PRIS}, a new design of energy efficiency for PRIS aided multi-user networks was highlighted by taking the realistic power consumption into account. Furthermore, 
the effect of phase shifting error on outage behaviors was evaluated for PRIS based wireless networks \cite{Tianxiong2021PRISOutage}. 
From the practical viewpoint, the authors in \cite{Alexandros2021PRISHardware} researched the erogdic throughput of PRIS with imperfect hardware conditions. The deployment issue of PRIS was discussed in \cite{Chen2022PRISperformance}, where the reflecting probability of PRIS enabled networks was investigated by exploiting stochastic geometry.
To shed light on the overlay mode, the authors of \cite{Yiyang2021PRISD2DNakagami} applied PRIS to device-to-device communications under Nakagami-$m$ fading channels. With economizing the energy consumption \cite{Bouanani2021PRISWPT}, the average error probability of PRIS based wireless power transfer (WPT) was surveyed comprehensively.
	
Obviously, combining PRIS and NOMA can effectively realize the enhanced spectrum efficiency and dynamic environment configuration for 6G networks \cite{Zhiguo2022RISNOMA,Basar2022PRISNOMA,Yuanwei2022PRISNOMA}.
In \cite{Arthur2020PRISNOMA}, the attractive advantages of applying PRIS to NOMA networks were highlighted from the viewpoint of user fairness and system scalability. As further progress, the authors in \cite{Beixiong2020PRISNOMA} revealed the superiorities of PRIS-NOMA relative to PRIS-orthogonal multiple access (OMA) from the viewpoint of different rate configurations. By using on-off control strategy \cite{Ding2020PIRS}, the effect of hardware impairments (HIS) on PRIS-NOMA networks performance was evaluated carefully.
The authors of \cite{Hou2020PRISNOMA} analyzed the PRIS-NOMA networks' user prioritization outage characteristics. Furthermore, the authors of \cite{Yue2022IRSNOMA} investigated ergodic data rate and energy efficiency of PRIS-NOMA by using suboptimal scheme. Conditioned on imperfect detection conditions, the couple error probability of large PRIS-NOMA networks was outlined in \cite{Bariah2021LargePRISNOMAperformance}.
The adoption of PRIS to help two-way NOMA communications was highlighted in \cite{Yue2022PRISTwoNOMA}, where the message was interchanged between user nodes. In addition, the authors of \cite{Hemanth2020PRISNOMAHardware} discussed the effect of imperfect HIS on outage performance for PRIS-NOMA. The ergodic capacity of PRIS-NOMA was elucidated by taking into consideration the HIS \cite{Shaikh2022PRISNOMAHardware}.
	
Except that the RIS works in the passive mode, it can also operate in the active mode by integrating an additional low-power amplifier on each reflecting element\cite{Amato2018ReflectionAmplifiers,Loncar2020ActiveMetasurfaces}. The type of active RIS (ARIS) has aroused widespread attention from both academic and industrial circles \cite{CuiTiejun2022ARIS,Linglong2022ActiveRIS,Basar2022AIRSEmpowered}.
ARIS is able to increase system capacity by controlling the phase and amplitude of reflected signal.
More particularly, the authors in \cite{CuiTiejun2022ARIS} presented a novel amplifier based metasurface to accomplish the frequency and space domain control of incident electromagnetic waves. In \cite{Linglong2022ActiveRIS}, the multiplicative fading influence caused by PRIS was relieved as a sacrifice of power dissipation on the active reflection element. Even though ARIS integrates a low-power amplifier at a slightly higher cost than PRIS, it is still less expensive than other repeaters (e.g., amplify-and-forward (AF) relay and decode-and-forward (DF) relay).
With the emphasis on this issue, the author of \cite{Basar2022AIRSEmpowered} proved that ARIS is capable of turning the double-fading path loss of PRIS into the additive form.
As a further step, the sub-connected structure of ARIS was presented to regulate each element's phase shift independently \cite{Linglong2022ActiveFully}.
The exploitation of ARIS to wireless communications was outlined in \cite{Liang2021ActiveRIS}, where the ARIS-assisted links were strengthened with the increase of active elements. The bit error probability of ARIS based wireless systems was evaluated thoroughly \cite{George2022ARIS}.
On closer inspection, the authors of \cite{You2021ActiveRIS} identified that ARIS should be arranged around users with the small amplification power.
In \cite{Pan2022ARIS}, the achievable rate of ARIS exceeded that of PRIS on the condition of same power budget. In contrast to PRIS \cite{Qingqing2021activeRIS}, the ARIS-WPT systems had the potential to actualize the enhanced energy-efficiency. On this basis, the authors in \cite{Chen2022ARISAccess} further discussed the system throughput of ARIS aided uplink NOMA communication.
	
\subsection{Motivation and Contributions}
The previous research treatises have supplied a comprehensive viewpoint that applying PRIS in NOMA networks can further enhance spectrum efficiency and cell coverage.As previously stated in \cite{Linglong2022ActiveRIS,Basar2022AIRSEmpowered}, the multiplicative fading influence is inherent in PRIS-assisted wireless communications, where the system performance is affected by the double path loss at longer distances.
To solve this problem, it may be necessary to install a large number of passive reflective elements or to place the PRIS close to the transmitter or receiver. However this approach is not feasible in practice.The appearance of ARIS's studies focuses on solving the problem of double path loss brought by PRIS. Although previous work has validated the advantages of applying PRIS to NOMA \cite{CoMP-NOMA}, treatises on the potential performance gains by integrating ARIS with NOMA are still in their infancy. In \cite{Qingqing2021activeRIS}, the throughput maximization of ARIS-NOMA was discussed under the energy-limited communication scenarios, while there is no detailed survey of the outage probability and ergodic data rate of network performance.
In addition, most of the known studies have been analyzed under ideal hardware conditions, whereas HIS usually have a non-negligible impact in practical applications \cite{Emil2013Hardware,Chongwen2022hardware}.
Hence it is also worth considering how HIS affects ARIS-NOMA networks' performance. In practice, the successive interference cancellation (SIC) process can lead to decoding errors due to factors such as instrumentation errors and incorrect propagation.
Sparked by the above discussions, we concentrate on how ARIS-NOMA networks with HIS perform over cascade Nakagami-$m$ fading channels where imperfect successive interference cancellation (ipSIC) results in residual interference. More precisely, new exact and approximative expressions of outage probability and ergodic data rate with ipSIC/perfect SIC (pSIC) are deduced of ARIS-NOMA networks.
The influence of HIS on ARIS-NOMA networks performance is taken into consideration. The power consumption of ARIS-NOMA-HIS networks is further highlighted on the condition of delay-limited and delay-tolerant transmission schemes. Depending on these specific works, the foremost contributions of this paper are briefly summed up as follows:
\begin{enumerate}
\item We introduce the application of ARIS to NOMA networks in the presence of HIS. We derive the exact and asymptotic expressions of outage probability with ipSIC/pSIC for ARIS-NOMA-HIS networks over cascade Nakagami-$m$ fading channels. Utilizing theoretical analysis, we further acquire diversity orders of ARIS-NOMA-HIS networks, which are related to both channel ordering and the number of reflective components. We derive the outage probability for ARIS-OMA-HIS networks as a benchmark for comparison.
\item  We contrast ARIS-NOMA-HIS and PRIS-NOMA-HIS with that of conventional DF and AF relays in terms of the outage behaviors. We further demonstrate that ARIS-NOMA-HIS has a higher outage probability than traditional cooperative communication systems. We regulate the distances between BS and ARIS, where deploying ARIS on BS side results in better outage behaviors. By adding ARIS elements number, we find that the outage performance of ARIS first gets improved and then deteriorates.
\item  We derive the ergodic data rate of nearby and distant users with regard to ARIS-NOMA-HIS and PRIS-NOMA-HIS networks, respectively. We note that users' ergodic data rate converges to the throughput ceiling at high signal-to-noise ratio (SNR). As the reflection amplification factor increases, ARIS-NOMA-HIS has the ability to deliver an improved ergodic data rate. We derive the ergodic data rate of ARIS-OMA-HIS networks, and verify that ARIS-NOMA-HIS performs is superior to ARIS-OMA-HIS in terms of ergodic data rate.
\item  We assess the system throughput of ARIS-NOMA-HIS and PRIS-NOMA-HIS networks under delay-limited and delay-tolerant transmission schemes. In delay-limited schemes, we compare the system throughput between ARIS-NOMA-HIS and PRIS-NOMA-HIS networks. In delay-tolerant schemes, we compare the rates of ARIS-NOMA-HIS and PRIS-NOMA-HIS and found that both reached an upper bound at high SNRs. We also compare the energy efficiency of the two modes and find that ARIS-NOMA-HIS networks perform much better than the other comparative benchmarks, which have a high level of stability.
\end{enumerate}
	
\subsection{Organization and Notations}
The rest of this paper is structured as follows. Section \ref{Network Model} introduces the network model of ARIS-NOMA-HIS. The exact and approximate outage probabilities of ARIS-NOMA-HIS networks are highlighted prudentially in Section \ref{Outage Probability}. Section \ref{Ergodic Rate} evaluate the ergodic data rates for ARIS-NOMA-HIS networks, and then present the numerical analyses in Section \ref{Numerical Results}. Finally, Section \ref{Conclusion} summarizes this manuscript. The proofs of the maths are gathered in the appendix.
	
The key notations in this paper are presented as follows. The symbol ${\left(  \cdot  \right)^H}$ represents conjugate transpose operation. The  cumulative distribution function (CDF) and probability density function (PDF) of random variable $X$ are denoted by ${F_X}\left(  \cdot  \right)$ and ${f_X}\left(  \cdot  \right)$, respectively. $\mathbb{E}\{\cdot\}$ and $\mathbb{D}\{\cdot\}$ denote the expectation and variance operations, respectively. $diag\left(\cdot\right)$ denotes the diagonal matrix with element $one$.
	
\section{Network Model}\label{Network Model}
\begin{figure}[t!]
	\centering
	\includegraphics[width= 2.8in, height=1.8in]{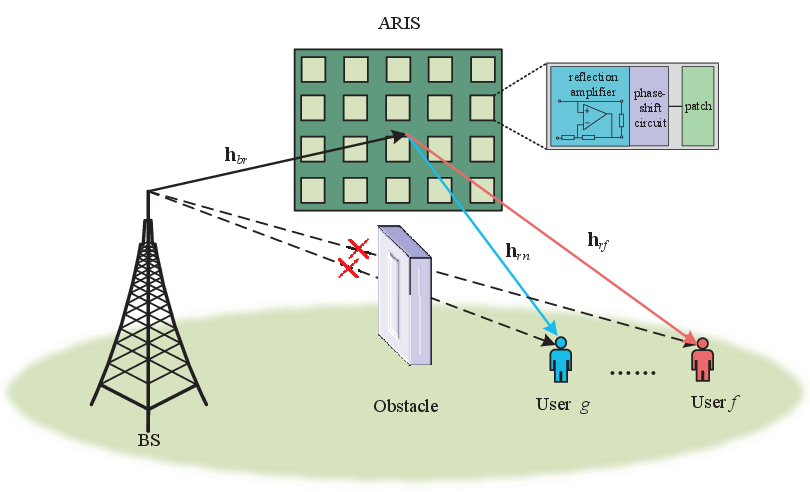}
	\caption{System model of ARIS-assisted NOMA-HIS communications, where ARIS is capable of amplifying the superposed signals, and then reflecting to users.}
	\label{System_model}
\end{figure}
We consider an ARIS-aided NOMA communication scenario with the presence of HIS, as shown in Fig. \ref{System_model}, in which the single antenna BS servers $K$ single antenna users by means of $L$ active reflecting elements. Specifically, existing ARIS architectures not only have the same circuitry to control phase shift as PRIS, but also integrate power amplifiers to amplify the radiated signal. The incident superposed signals at ARIS are amplified with a visible gain and then reflected toward the desired users. The inherent multiplicative attenuation in PRIS-assisted systems will be effectively tackled by active elements for the price of low power consumption.
In actuality, we focus attention on that BS-users' direct links obstructed by high-rise buildings \cite{WuTowards2020PRIS,Emil2020IRS}, where the deployment of ARIS can provide the line-of-sight across BS and non-orthogonal users. BS and users' side distortions are considered to more closely match practical scenarios.
To capture real channel conditions, we assume that the wireless links of ARIS-NOMA networks go through large-scale fading and Nakagami-$m$ fading\footnote{It is noteworthy that the derivations of performance analyses for ARIS-aided NOMA networks can be easily applied to multiple fading channels, i.e., the Gaussian, Reyleigh and Rician fading channels.}.
Let ${\bf{h}}_{br}={\sqrt \eta }d_{br}^{ - \frac{\alpha }{2}} \left[{h}_{br}^1,\cdots,{h}_{br}^l,\cdots,{h}_{br}^L\right]^{H}$ and ${\bf{h}}_{ri}=\sqrt \eta  d_{ri}^{ - \frac{\alpha }{2}} \left[{h}_{ri}^1,\cdots,{h}_{ri}^l,\cdots,{h}_{ri}^L\right]^{H}$ denote the baseband channel coefficients between BS and ARIS, and between ARIS and user $i$ ($1 \le i \le K$).
${h_{br}^l}$ and ${h_{ri}^l}$ denote the complex channel factors between BS and ARIS's $l$-th reflective element, and between ARIS's $l$-th reflective element and user $i$, respectively. $d_{br}$ and $d_{ri}$ are the distances of BS to ARIS and ARIS to user $i$, respectively. $\alpha$ is the path loss exponent, while $\eta  $ is the frequency dependent factor.
The reflection matrix at ARIS is defined as
	${\bf{\Phi}}={\sqrt{\beta}diag}\left({e}^{j{\theta_1}},\cdots,{e}^{j{\theta_l}},\cdots,{e}^{j{\theta_L}}\right)$, where ${\theta _l} \in \left[ {0,2\pi } \right)$ and ${\beta}$ indicate the $l$-th reflection element's phase shift and reflection amplification coefficient, respectively.
The reflection elements of ARIS are embedded in active amplifiers that have tunneling diodes or resistive converters. Based on the principle of electromagnetic scattering, each element can reflect and amplify an incoming radio frequency (RF) signal with an adjustable amplitude and phase. This allows ARIS to achieve reflection amplification coefficient greater than one, i.e., $\beta >1$ \cite{Linglong2022ActiveRIS,Liang2021ActiveRIS}. In practice, a negative resistive component such as a tunnel diode can be used to amplify the incident signal by converting the direct current bias power to RF power.
Broadly speaking, the baseband cascade channel gains transmitted by BS towards ARIS, and thereafter to the considerable users are sorted as
	${\left| {{\bf{h}}_{rK}^H{\bf{\Phi}} {{\bf{h}}_{br}}} \right|^2} >  \cdots  > {\left| {{\bf{h}}_{rg}^H{\bf{\Phi}} {{\bf{h}}_{br}}} \right|^2} >  \cdots  > {\left| {{\bf{h}}_{rf}^H{\bf{\Phi}} {{\bf{h}}_{br}}} \right|^2} >  \cdots  > {\left| {{\bf{h}}_{r1}^H{\bf{\Phi}} {{\bf{h}}_{br}}} \right|^2}$.
	Note that the positions of BS and ARIS are usually fixed in real scenarios, the effect of large-scale fading is essentially constant. Furthermore, the signal undergoes the same attenuation and phase shift modulation and is then reflected back to users by ARIS. This means that the difference in the signal-to-noise ratio (SINR) received by users is mainly caused by the large-scale fading in ${\bf{h}}_{ri}$, which are independent of each other \cite{Yue2020Unifid}. In addition, the hardware characteristics of receivers in a real situation can also have an impact on SINR, and these hardware characteristics can also be considered as independent variables. Therefore, in the following steps, the marginal distribution can be utilized to further calculate user's outage probability and ergodic data rata.
In particular, a couple of users, i.e., the near-end user $g$ and far-end user $f$ are picked out to carry out the NOMA mechanism. We suppose that the instantaneous channel state information (CSI) is accessible on BS and ARIS with help of channel evaluation or compressive sensing algorithms \cite{Zhengbeixiong2022IRSestimation,Swindlehurst2022IRSestimation}. To release this ideal assumption, the impact of imperfect CSI on system performance for ARIS-NOMA networks will be discussed in simulation results part.
	
\subsection{Signal Model}
The BS transfers superimposed signals, i.e., $x = \sqrt {{a_g}{P_b}} {x_g} + \sqrt {{a_f}{P_b}} {x_f}$ to user $\phi$ via ARIS, where $x_g$ and $x_f$ are normalized the unity power signals for the user $g$ and the user $f$, respectively. To warrant non-orthogonal users' fairness, the power allocation factors $a_g$ and $a_f$ for user $g$ and user $f$ satisfy the relations ${a_g}<{a_f}$ and ${a_g}+{a_f}=1$. $P_b$ denotes the BS transmission power. Unlike PRIS, thanks to the use of active elements like tunnel or Gunn diodes, ARIS will amplify the superimposed signals and the noise received at itself.
	Within these contexts, the received signals at user $\phi$ ($\phi  \in \left\{ {g, f} \right\}$) can be expressed as
\begin{align}\label{The received signals of users}
	{y_\phi } = {\sqrt \beta}{\bf{h}}_{r\phi}^H{\bf{\Theta}} {{\bf{h}}_{br}}\left( {x + {{\varepsilon }_b}} \right) + {\sqrt \beta}{\bf{h}}_{r\phi }^H{\bf{\Theta}} {{\bf{n}}_r} + {{\varepsilon }_\phi } + {n_\phi },
\end{align}
where ${\bf{\Theta}}={ diag}\left({e}^{j{\theta_1}},\cdots,{e}^{j{\theta_l}},\cdots,{e}^{j{\theta_L}}\right)$ denotes the phase shift matrix at ARIS.
	${{\varepsilon }_b}$ and ${{\varepsilon }_\phi}$ indicate the distortion from HIS at BS and user $\phi$ with ${{\varepsilon }_b}\sim{\cal C}{\cal N}\left({0,\kappa_b^2{P_b}}\right)$ and ${{\varepsilon }_\phi}\sim{\cal C}{\cal N}\left({0,\kappa_\phi^2{P_b}{{\left|{{\bf{h}}_{br}^H{\bf{\Theta}} {{\bf{h}}_{r\phi }}}\right|}^2}}\right)$ \cite{Emil2013Hardware}. The factors $\kappa_b$ and $\kappa_\phi$ indicate the level of HIS, which can be quantified by the size of the error vector.
	${\bf{n}}_r \sim {\cal C}{\cal N}\left( {0,{N_{tn}}{{\bf{I}}_L}} \right)$
	denotes the thermal noise at ARIS with the noise power $N_{tn}$, and  ${{\bf{I}}_L} \in \mathbb{C}{^{L \times 1}} $ is the identity matrix.
	$n_\phi$ indicates the Gaussian white noise generated at the user $\phi$ with ${n_\phi}\sim{\cal C}{\cal N}\left({0,{\sigma ^2}}\right)$.
	
Based on NOMA principle, user $f$'s information will be first detected by user $g$ with better channel conditions, then subtract it before decoding its own signal. Thus the corresponding received SINR can be written as
\begin{align}\label{The SINR of user g to detect user f}
	{\gamma _{g \to f}} = \frac{{\beta {P_b}{{\left| {{\bf{h}}_{rg}^H{\bf{\Theta }} {{\bf{h}}_{br}}} \right|}^2}{a_f}}}{{\beta {P_b}{{\left| {{\bf{h}}_{rg}^H{\bf{\Theta }} {{\bf{h}}_{br}}} \right|}^2}\chi  + \xi \beta {N_{tn}}{{\left\| {{\bf{h}}_{rg}^H{\bf{\Theta }} } \right\|}^2} + {\sigma ^2}}},
\end{align}
where ${\chi }={{a_g} + \kappa _b^2 + \kappa _g^2}$ and $\xi$ is the convert parameter. In ARIS networks $\xi$ is set to $1$. More precisely, when $\xi$ is equal to $zero$, ARIS will be divided into PRIS. After weeding out the information of user $f$, the receiving SINR of user $g$ decoding its own signal can be given by
\begin{align}\label{The SINR of user g}
	{\gamma _g} = \frac{{\beta {P_b}{{\left| {{\bf{h}}_{rg}^H{\bf{\Theta }} {{\bf{h}}_{br}}} \right|}^2}{a_g}}}{{\beta {P_b}{{\left| {{\bf{h}}_{rg}^H{\bf{\Theta }} {{\bf{h}}_{br}}} \right|}^2}{\chi _g} + \varpi {P_b}{{\left| {{h_{\rm{I}}}} \right|}^2} + \xi \beta {N_{tn}}{{\left\| {{\bf{h}}_{rg}^H{\bf{\Theta }} } \right\|}^2} + {\sigma ^2}}},
\end{align}
where ${{\chi _g}}={\kappa _b^2+\kappa_g^2}$ and $\varpi=1$ indicates the ipSIC case. ${h_I} \sim {\cal C}{\cal N}\left( {0,{\Omega _I}} \right)$ donotes the residual interference from ipSIC. Since the channel conditions of user $f$ are poor, it treats $x_g$ as interference and thus directly detects its own signal $x_f$. For the time being, the SINR for user $f$ detects $x_f$ can be written as
\begin{align}\label{The SINR of user f}
	{\gamma _f} = \frac{{\beta {P_b}{{\left| {{\bf{h}}_{rf}^H{\bf{\Theta }} {{\bf{h}}_{br}}} \right|}^2}{a_f}}}{{\beta {P_b}{{\left| {{\bf{h}}_{rf}^H{\bf{\Theta }} {{\bf{h}}_{br}}} \right|}^2}{\chi _f} + \xi \beta {N_{tn}}{{\left\| {{\bf{h}}_{rf}^H{\bf{\Theta }} } \right\|}^2} + {\sigma ^2}}},
\end{align}
where ${{\chi _f}} = \left( {{a_g} + \kappa _b^2 + \kappa _f^2} \right)$.
	
\subsection{ARIS-OMA with Hardware Impairments}
With the aim of display the performance advantages of ARIS-NOMA, we utilize the ARIS-OMA with HIS as a benchmark for comparison. Based on the previous subsection the SINR of user $o$ is given by
\begin{align}\label{The SINR of user o}
	{\gamma _o} = \frac{{\beta {P_b}{{\left| {{\bf{h}}_{ro}^H{\bf{\Theta }}{{\bf{h}}_{br}}} \right|}^2}}}{{\beta {P_b}{{\left| {{\bf{h}}_{ro}^H{\bf{\Theta }}{{\bf{h}}_{br}}} \right|}^2}{\chi _o} + \xi \beta {N_{tn}}{{\left\| {{\bf{h}}_{ro}^H{\bf{\Theta }}} \right\|}^2} + {\sigma ^2}}},
\end{align}
where ${{\bf{h}}_{ro}} = {\sqrt \eta }d_{ro}^{ - \frac{\alpha }{2}} \left[ {h_{ro}^1, \cdots ,h_{ro}^l, \cdots ,h_{ro}^L} \right]^H$ denotes the complex channel coefficient between ARIS and user $o$. $d_{ro}$ is the distance between ARIS and user $o$ and ${{\chi _o}} = \left( {\kappa _b^2 + \kappa _o^2} \right)$.
	
\section{Outage probability}\label{Outage Probability}
In this section, we evaluate ARIS-NOMA/OMA-HIS networks performance based on outage behaviour, where the approximate outage probability formulas for user $g$ and user $f$ are derived over cascade Nakagami-$m$ fading channels. Additionally, we approximate the thermal noise caused by ARIS as a constant for the purpose of facilitating computational analysis \cite{Hassibi2023,Jose2009}.
	
\subsection{The Outage Probability of User g}
By the sequence of the user decoding order in SIC progress, the outage incidents of user $g$ are defined as that: 1) User $g$ cannot detect user $f$'s signal $x_f$; 2) User $g$ can decode $x_f$, but cannot decode its own signal $x_g$. Its supplementary event can be easily represented that user $g$ is able to detect the signal $x_f$, and then decode the information $x_g$ \cite{Ding2014performance,Yue2018ExploitingNOMA}.
Under these circumstances, the outage probability of user $g$ with ipSIC for ARIS-NOMA-HIS networks is given by
	\begin{small}
		\begin{align}\label{the outage probability of user g}
			P_{ARIS,g}^{HIS} = {\rm{Pr}}\left( {{\gamma _{g \to f}} < {\gamma _{t{h_f}}}} \right) + {\rm{Pr}}\left( {{\gamma _{g \to f}} > {\gamma _{t{h_f}}},{\gamma _g} < {\gamma _{t{h_g}}}} \right),
		\end{align}
	\end{small}
	where $\gamma_{th_{g}}={2^{R_{g}}-1}$ and $\gamma_{th_{f}}={2^{R_{f}}-1}$ denote the target SNRs of signals $x_g$ and $x_f$ detected by user $g$ and user $f$, respectively. $R_{g}$ and $R_{f}$ are the target rates of user ${g}$ and user $f$, respectively. In this case, the approximate outage probability for user $g$ is able to be given in the following theorem.
	
\begin{theorem}\label{Theorem1:the OP of user g}
	In terms of cascade Nakagami-$m$ fading channels, the outage probability of user $g$ with ipSIC for ARIS-NOMA-HIS networks can be approximated as
	\begin{align}\label{the OP with ipSIC of user g ARIS-NOMA-HIS}
		P_{{ARIS},g}^{{HIS,ipSIC}} \approx& {\Psi _g}\sum\limits_{k = 0}^{K - g} {{
				{K - g}  \choose
				k  }\frac{{{{\left( { - 1} \right)}^k}}}{{\left( {g + k} \right){\psi _g}}}}\sum\limits_{u = 1}^U {{H_u}} x_u^{{m_g} - 1}  \nonumber  \\
		&\times{\left[ {\gamma \left( {{b_g} + 1,\frac{{\sqrt {{\varphi _g}{x_u} + {\vartheta _g}} }}{{{c_g}\sqrt {\beta {m_g}} }}} \right)} \right]^{g + k}},
	\end{align}
	where ${\Psi _g} = \frac{{K!}}{{\left( {K - g} \right)!\left( {g - 1} \right)!}}$, ${\psi _g} = \Gamma \left( {m_g} \right){\left[ {\Gamma \left( {{b_g} + 1} \right)} \right]^{g + k}}$, ${\varphi _g} = d_{br}^\alpha d_{rg}^\alpha {\varsigma _g}\varpi {P_b} $, ${\vartheta _g} = {m_g}d_{br}^\alpha d_{rg}^\alpha {\varsigma _g}\left( {\xi \beta {N_{tn}}L{\omega _{rg}} + {\sigma ^2}} \right)$, ${\varsigma _g} = \frac{{{\gamma _{t{h_g}}}}}{{{P_b}\left( {{a_g} - {\gamma _{t{h_g}}}\chi _g^2} \right)}}$ with the condition of ${a_g} > {\gamma _{t{h_g}}}{\chi _g^2}$,
	${b_g} = \frac{{L\mu _g^2}}{{{\Omega _g}}} - 1$, ${c_g} = \frac{{{\Omega _g}}}{{{\mu _g}}}$, ${\mu _g} = \frac{{\Gamma \left( {{m_r} + \frac{1}{2}} \right)\Gamma \left( {{m_g} + \frac{1}{2}} \right)}}{{\Gamma \left( {{m_r}} \right)\Gamma \left( {{m_g}} \right)\sqrt {{m_r}{m_g}} }}$, ${\Omega _g} = 1 - \frac{1}{{{m_r}{m_g}}}{\left[ {\frac{{\Gamma \left( {{m_r} + \frac{1}{2}} \right)\Gamma \left( {{m_g} + \frac{1}{2}} \right)}}{{\Gamma \left( {{m_r}} \right)\Gamma \left( {{m_g}} \right)}}} \right]^2}$,
	$m_r$ and $m_g$ denote the Nakagami-$m$ fading parameters from the BS to ARIS, and from ARIS to non-orthogonal user $g$, respectively.
	$\Gamma \left(  \cdot  \right)$ is the gamma function \cite[Eq. (8.310.1)]{2000gradshteyn}.
	$H_u$ and $x_u$ represent the Gauss-Laguerre integration's weight and abscissas, separately.
	Particularly, $x_u$ is the $u$-th zero point of Laguerre polynomial ${L_U\left(x\right)}={\frac{e^x}{U!}}{\frac{d}{dx^U}}{\left(x^Ue^{-x}\right)}$ and the $u$-th weight can be denoted by ${H_u}={\frac{\left({U!}\right)^{2}x_u}{\left[L_{U+1}\left(x_u\right)\right]^{2}}}$. $U$ is a tradeoff parametric to certify the complexity-accuracy. $\gamma \left( {a,x} \right) = \int_0^x {{t^{a - 1}}{e^{ - t}}dt} $ denotes the lower incomplete gamma function \cite[Eq. (8.350.1)]{2000gradshteyn}.
\end{theorem}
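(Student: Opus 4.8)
The plan is to collapse the two-term outage event in \eqref{the outage probability of user g} into a single threshold event on the cascade channel gain, to characterize the distribution of that gain through a moment-matched Gamma approximation, and finally to average over the residual interference by Gauss--Laguerre quadrature, with the multiuser ordering introduced through an order-statistic expansion.

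First I would exploit the fact that, for a fixed realization of the residual interference $|h_I|^2$, both ${\gamma_{g\to f}}$ in \eqref{The SINR of user g to detect user f} and ${\gamma_g}$ in \eqref{The SINR of user g} are strictly increasing in the cascade gain $X=|{\bf{h}}_{rg}^H{\bf{\Theta}}{\bf{h}}_{br}|^2$. Hence each SINR inequality is equivalent to a threshold on $X$: ${\gamma_{g\to f}}<\gamma_{th_f}\Leftrightarrow X<\tau_f$ (deterministic) and ${\gamma_g}<\gamma_{th_g}\Leftrightarrow X<\tau_g(|h_I|^2)$. Substituting these into \eqref{the outage probability of user g} telescopes the sum to $\Pr\!\left(X<\max(\tau_f,\tau_g)\right)$; under the usual NOMA premise that decoding the user's own signal is the bottleneck (equivalently $\tau_g\ge\tau_f$), together with the positivity/feasibility condition of the statement that keeps the threshold finite, this reduces to $\Pr({\gamma_g}<\gamma_{th_g})$ and yields $X<\varsigma_g\beta^{-1}d_{br}^{\alpha}d_{rg}^{\alpha}\big(\varpi P_b|h_I|^2+\xi\beta N_{tn}\|{\bf{h}}_{rg}^H{\bf{\Theta}}\|^2+\sigma^2\big)$. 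Following the text, I would then replace the ARIS-amplified noise term $\|{\bf{h}}_{rg}^H{\bf{\Theta}}\|^2=\sum_l|h_{rg}^l|^2$ by its mean $L\omega_{rg}$, which decouples it from $X$ and absorbs the noise into the constant carried by $\vartheta_g$.

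Next I would characterize the cascade gain itself. With the phases co-phased, $\sqrt{X}$ is proportional to $S=\sum_{l=1}^L|h_{rg}^l||h_{br}^l|$, a sum of $L$ independent products of two Nakagami-$m$ variables with parameters $m_r,m_g$. Computing the per-term mean and variance recovers exactly $\mu_g$ and $\Omega_g$ as defined in the statement, so by independence $S$ has mean $L\mu_g$ and variance $L\Omega_g$; matching these two moments to a Gamma law gives shape $b_g+1=L\mu_g^2/\Omega_g$ and scale $c_g=\Omega_g/\mu_g$, hence $F_S(s)=\gamma(b_g+1,s/c_g)/\Gamma(b_g+1)$. Because the constraint is on $X$ while $S\propto\sqrt{X}$, taking the square root of the threshold is what produces the $\sqrt{\varphi_g(\cdot)+\vartheta_g}/(c_g\sqrt{\beta m_g})$ argument of the incomplete gamma, in which $\varphi_g$ retains the ipSIC factor $\varpi P_b$ and $\vartheta_g$ the constant noise. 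Finally I would reinstate the two averages: the ordering fixes $g$ as the $g$-th weakest of the $K$ cascade gains, so the i.i.d. order-statistic identity $F_{(g)}=\Psi_g\sum_{k}\binom{K-g}{k}\frac{(-1)^k}{g+k}[F_S]^{g+k}$ contributes the binomial sum, the prefactor $\Psi_g$, and the $[\Gamma(b_g+1)]^{g+k}$ piece of $\psi_g$, while averaging the conditional CDF over the residual-interference power gives an integral of the form $\int_0^\infty \frac{x^{m_g-1}e^{-x}}{\Gamma(m_g)}[\,\cdots\,]^{g+k}\,dx$, which I would evaluate by the $U$-point Gauss--Laguerre rule $\sum_u H_u x_u^{m_g-1}[\,\cdots\,]$.

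\textbf{Main obstacle.} I expect the crux to be the distributional step: a sum of products of Nakagami variables has no closed-form law, so the whole result rests on justifying the two-moment Gamma approximation of $S$ and on correctly transporting the square-root relation between $X$ and $S$ into the incomplete-gamma argument. A secondary subtlety is the order-statistic expansion: since ${\bf{h}}_{br}$ is common to all users the ordered gains are genuinely dependent, and the derivation must lean on the paper's observation that, once the fixed large-scale effects are removed, the user-distinguishing randomness lives in the independent ${\bf{h}}_{ri}$, allowing the classical i.i.d. expansion to stand. Confirming that $\tau_g\ge\tau_f$, so that the first term of \eqref{the outage probability of user g} is fully absorbed and $\gamma_{th_f}$ drops out, is the remaining item to verify.
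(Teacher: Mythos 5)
Your proposal is correct and follows essentially the same route as the paper's Appendix~A: collapse the two outage events into a single threshold on the cascade gain (the paper implicitly assumes the own-signal threshold dominates, which you make explicit via the $\max(\tau_f,\tau_g)$ identity), approximate the ARIS thermal-noise term by its mean $L\omega_{rg}$, apply a two-moment Gamma fit to $S=\sum_l |h_{br}^l h_{rg}^l|$ (the paper's Laguerre-expansion CDF with parameters $b_g$, $c_g$ is exactly this moment match), insert the i.i.d.\ order-statistic expansion for the $g$-th ordered gain, and average over the Gamma-distributed residual interference via the change of variables that produces the $m_g$ factors and the $U$-point Gauss--Laguerre rule. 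The only cosmetic difference is that the paper states the event reduction as "simple arithmetic operations" without the monotonicity/telescoping justification you supply.
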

\begin{proof}
	See Appendix~A.
\end{proof}
	
\begin{corollary}\label{Corollary:the OP with pSIC of user g ARIS-NOMA-HIS}
	When the parameter $\varpi $ is set to be zero, the closed-form outage probability of user $g$ with pSIC for ARIS-NOMA-HIS can be given by
	\begin{align}\label{the OP with pSIC of user g ARIS-NOMA-HIS}
		P_{ARIS,g}^{HIS,pSIC} =& {\Psi _g}\sum\limits_{k = 0}^{K - g} {{
				{K - g}  \choose
				k  }\frac{{{{\left( { - 1} \right)}^k}}}{{g + k}}} \left[ {\frac{1}{{\Gamma \left( {{b_g} + 1} \right)}}} \right. \nonumber \\
		&  \times {\left. {\gamma \left( {{b_g} + 1,\frac{{\sqrt {{\vartheta _{g,pSIC}}} }}{{{c_g}\sqrt \beta  }}} \right)} \right]^{g + k}},
	\end{align}
	where ${\vartheta _{g,pSIC}} = d_{br}^\alpha d_{rg}^\alpha {\varsigma _g}\left( {\xi \beta {N_{tn}}L{\omega _{rg}} + {\sigma ^2}} \right)$.
\end{corollary}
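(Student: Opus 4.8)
The plan is to derive this corollary as the perfect-SIC specialization of Theorem~\ref{Theorem1:the OP of user g}: the parameter $\varpi$ was introduced in \eqref{The SINR of user g} precisely to switch the residual self-interference on ($\varpi=1$, ipSIC) and off ($\varpi=0$, pSIC). Setting $\varpi=0$ in the ipSIC expression \eqref{the OP with ipSIC of user g ARIS-NOMA-HIS} and tracking how each quantity simplifies should therefore yield \eqref{the OP with pSIC of user g ARIS-NOMA-HIS} directly. Observe that the SINR $\gamma_{g\to f}$ in \eqref{The SINR of user g to detect user f} carries no residual-interference term, so the first outage event $\Pr(\gamma_{g\to f}<\gamma_{th_f})$ is untouched by SIC; only the $\Pr(\gamma_g<\gamma_{th_g})$ contribution changes, consistent with the single structural term appearing in the theorem.

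The key observation is that $\varphi_g=d_{br}^\alpha d_{rg}^\alpha\varsigma_g\varpi P_b$ vanishes at $\varpi=0$. Hence the argument of the lower incomplete gamma, $\tfrac{\sqrt{\varphi_g x_u+\vartheta_g}}{c_g\sqrt{\beta m_g}}$, loses its dependence on the Gauss--Laguerre abscissa $x_u$ and collapses to $\tfrac{\sqrt{\vartheta_g}}{c_g\sqrt{\beta m_g}}$. Since $\vartheta_g=m_g\,d_{br}^\alpha d_{rg}^\alpha\varsigma_g(\xi\beta N_{tn}L\omega_{rg}+\sigma^2)=m_g\,\vartheta_{g,pSIC}$, the factor $\sqrt{m_g}$ cancels against the $\sqrt{m_g}$ in the denominator, leaving the argument $\tfrac{\sqrt{\vartheta_{g,pSIC}}}{c_g\sqrt\beta}$ claimed in the corollary. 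I would then pull this now-constant incomplete gamma out of the finite sum.

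What remains is $\sum_{u=1}^U H_u x_u^{m_g-1}$, which is the $U$-node Gauss--Laguerre rule applied to $\int_0^\infty x^{m_g-1}e^{-x}\,dx=\Gamma(m_g)$; for a monomial integrand this rule is exact, so the sum equals $\Gamma(m_g)$ and the quadrature approximation disappears---this is what promotes the ``$\approx$'' of the theorem to the equality in the corollary. Substituting, and using $\psi_g=\Gamma(m_g)\,[\Gamma(b_g+1)]^{g+k}$, the $\Gamma(m_g)$ cancels term-by-term in the $k$-sum while $[\Gamma(b_g+1)]^{g+k}$ merges with $[\gamma(\cdot)]^{g+k}$ to give the bracketed ratio $\big[\tfrac{1}{\Gamma(b_g+1)}\gamma(b_g+1,\tfrac{\sqrt{\vartheta_{g,pSIC}}}{c_g\sqrt\beta})\big]^{g+k}$, which is the asserted closed form.

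The step needing the most care is justifying the strict equality, i.e., confirming that $\varpi=0$ removes the \emph{only} source of quadrature error rather than merely one of several. I would verify this with an independent direct derivation: with $\varpi=0$ and the ARIS thermal-noise norm treated as the constant $L\omega_{rg}$, $\gamma_g$ becomes a monotone function of the single cascade gain $|\mathbf{h}_{rg}^H\mathbf{\Theta}\mathbf{h}_{br}|^2$, so the event $\gamma_g<\gamma_{th_g}$ is just that gain lying below a deterministic threshold and no interference integral ever arises. Applying the gamma approximation of the cascade amplitude together with the $g$-th order-statistic expansion (the binomial identity generating $\Psi_g\binom{K-g}{k}\tfrac{(-1)^k}{g+k}$ and the $(g+k)$-th power) then reproduces \eqref{the OP with pSIC of user g ARIS-NOMA-HIS} term for term, confirming that the expression is exact up to the same cascade-channel approximation already embedded in the theorem.
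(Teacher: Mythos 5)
Your conclusion is right, and your fallback ``independent direct derivation'' is in fact the paper's (implicit) proof: in Appendix~A, setting $\varpi=0$ makes the integrand of \eqref{the Appendix for OP expression of user g} independent of $x$, so $\int_0^\infty f_X(x)\,dx=1$ collapses the interference integral entirely and the outage probability is exactly $F_Y\bigl(\sqrt{\zeta_g\beta^{-1}n_g}\bigr)$; substituting the sorted CDF \eqref{the Appendix: the approximated CDF of Y} and noting $\zeta_g n_g=\vartheta_{g,pSIC}$ gives \eqref{the OP with pSIC of user g ARIS-NOMA-HIS} with no quadrature step at all. That is the real reason the corollary is stated with an equality (equality relative to the Laguerre-polynomial CDF approximation of the cascade channel, which both the theorem and the corollary inherit).

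Your primary route, however, contains a genuine error: specializing \eqref{the OP with ipSIC of user g ARIS-NOMA-HIS} at $\varpi=0$ and claiming $\sum_{u=1}^U H_u x_u^{m_g-1}=\Gamma(m_g)$ \emph{exactly} because ``for a monomial integrand this rule is exact.'' Gauss--Laguerre quadrature with $U$ nodes is exact only for polynomial integrands of degree at most $2U-1$, and $x^{m_g-1}$ is a polynomial only when $m_g$ is a positive integer. The paper explicitly works with fractional Nakagami parameters (the simulations use $m=0.5$ and $m=0.7$), for which the quadrature sum merely approximates $\Gamma(m_g)$: for instance, with $U=1$ one has $H_1=x_1=1$, so the sum equals $1$ while $\Gamma(0.5)=\sqrt{\pi}$. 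Consequently, setting $\varpi=0$ in the theorem yields the corollary's right-hand side multiplied by the residual factor $\Gamma(m_g)^{-1}\sum_{u}H_u x_u^{m_g-1}\neq 1$, and the ``$\approx$'' cannot be promoted to ``$=$'' along that path. The correct logic is the one in your verification paragraph, inverted in priority: the pSIC result should be derived \emph{before} any discretization, since at $\varpi=0$ the interference integral disappears and no quadrature is ever introduced; the specialization of the theorem then serves only as an approximate consistency check.
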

	
\begin{corollary}\label{Corollary:the OP with ipSIC of user g PRIS-NOMA-HIS}
	When the parameter $\xi $ is set to be zero, the outage probability of user $g$ with ipSIC for PRIS-NOMA-HIS is approximated as
	\begin{align}\label{the OP with ipSIC of user g PRIS-NOMA-HIS}
		P_{PRIS,g}^{HIS,ipSIC} \approx& {\Psi _g}\sum\limits_{k = 0}^{K - g} {{
				{K - g}  \choose
				k  }\frac{{{{\left( { - 1} \right)}^k}}}{{\left( {g + k} \right){\psi _g}}}}\sum\limits_{u = 1}^U {{H_u}} x_u^{{m_g} - 1} \nonumber \\
		&\times{\left[ {\gamma \left( {{b_g} + 1,\frac{{\sqrt {{\varphi _g}{x_u} + {\vartheta _{PRIS,g}}} }}{{{c_g}\sqrt {{m_g}} }}} \right)} \right]^{g + k}} ,
	\end{align}
	where ${\vartheta _{PRIS,g}} = {m_g}d_{br}^\alpha d_{rg}^\alpha {\varsigma _g}{\sigma ^2}$.
\end{corollary}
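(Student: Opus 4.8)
The plan is to derive Corollary~\ref{Corollary:the OP with ipSIC of user g PRIS-NOMA-HIS} not from scratch but as a direct specialization of Theorem~\ref{Theorem1:the OP of user g}, since a passive RIS is simply the active one with the amplification switched off. Concretely, PRIS differs from ARIS in exactly two respects: its reflecting elements do not amplify the thermal noise gathered at the surface, which the model encodes by the convert parameter $\xi=0$ in \eqref{The SINR of user g}; and its coefficients are lossless and unit-modulus, so the reflection-amplification factor reduces to $\beta=1$. The residual interference from imperfect SIC remains active ($\varpi=1$), and the hardware-distortion coefficient $\chi_g$ together with the cascade-channel statistics are left untouched.

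First I would insert $\xi=0$ and $\beta=1$ into the SINR \eqref{The SINR of user g} and re-express the outage event \eqref{the outage probability of user g}. With $\xi=0$ the amplified-noise contribution $\xi\beta N_{tn}\|{\bf{h}}_{rg}^H{\bf{\Theta}}\|^2$ vanishes, leaving $\sigma^2$ as the only additive noise; with $\beta=1$ the threshold multiplier $1/\beta$ becomes unity. Crucially, the cascade magnitude $|{\bf{h}}_{rg}^H{\bf{\Theta}}{\bf{h}}_{br}|$ and its moment-matched Gamma approximation depend on neither $\xi$ nor $\beta$, so the order statistics of user $g$ and all the approximation constants $\Psi_g,\psi_g,b_g,c_g,\mu_g,\Omega_g$ (including the $\sqrt{m_g}$ that originates from the Nakagami-$m$ parameter rather than from amplification) carry over verbatim.

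Next I would propagate the two substitutions through the constants of \eqref{the OP with ipSIC of user g ARIS-NOMA-HIS}. Setting $\xi=0$ collapses $\vartheta_g=m_g d_{br}^\alpha d_{rg}^\alpha\varsigma_g(\xi\beta N_{tn}L\omega_{rg}+\sigma^2)$ to $\vartheta_{PRIS,g}=m_g d_{br}^\alpha d_{rg}^\alpha\varsigma_g\sigma^2$, while setting $\beta=1$ turns the denominator $c_g\sqrt{\beta m_g}$ into $c_g\sqrt{m_g}$; the interference coefficient $\varphi_g=d_{br}^\alpha d_{rg}^\alpha\varsigma_g\varpi P_b$ carries no factor of $\beta$ or $\xi$ and is therefore unchanged. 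Because the averaging over the exponential $|h_I|^2$ that produced the Gauss--Laguerre sum $\sum_u H_u x_u^{m_g-1}$ acts only on the $\varphi_g x_u$ branch, whose coefficient is untouched, the quadrature structure is preserved and the stated expression \eqref{the OP with ipSIC of user g PRIS-NOMA-HIS} follows immediately.

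The step I expect to need the most care is justifying the reduction $\beta\to1$, because the hypothesis of the corollary names only $\xi=0$: I would verify against the passive-RIS signal model that a lossless passive surface reflects with unit modulus, which is exactly what erases the $\sqrt{\beta}$ from the denominator while leaving the channel-driven $\sqrt{m_g}$ in place. I would also confirm that no hidden $\beta$-dependence sneaks in through $\varphi_g$ or through the Gauss--Laguerre weights $H_u$; a short inspection shows the interference branch is $\beta$-free, so the specialization is clean and no re-derivation of the quadrature is required.
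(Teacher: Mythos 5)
Your proposal is correct and takes essentially the same route as the paper, which offers no separate proof and treats the corollary as a direct specialization of Theorem~\ref{Theorem1:the OP of user g}: setting $\xi=0$ collapses $\vartheta_g$ to $\vartheta_{PRIS,g}$, setting $\beta=1$ reduces the denominator $c_g\sqrt{\beta m_g}$ to $c_g\sqrt{m_g}$, and the order statistics, moment-matching constants, and Gauss--Laguerre quadrature structure carry over verbatim since they are independent of $\xi$ and $\beta$. Your careful point that $\beta=1$ must be invoked even though the hypothesis names only $\xi=0$ is also well taken---the paper states this passive-surface convention explicitly only later (in the ergodic-rate section), so you correctly filled a small omission in the corollary's stated hypothesis.
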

	
\begin{corollary}\label{Corollary:the OP with pSIC of user g PRIS-NOMA-HIS}
	When the parameter $\varpi  $ and $\xi $ are both set to be zero, the closed-form outage probability of user $g$ with pSIC for PRIS-NOMA-HIS can be written as
	\begin{align}\label{the OP with pSIC of user g PRIS-NOMA-HIS}
		P_{{PRIS},g}^{HIS,pSIC} =& {\Psi _g}\sum\limits_{k = 0}^{K - g} {{
				{K - g}  \choose
				k  }\frac{{{{\left( { - 1} \right)}^k}}}{{g + k}}} \left[ {\frac{1}{{\Gamma \left( {{b_g} + 1} \right)}}} \right.\nonumber \\
		&  \times {\left. {\gamma \left( {{b_g} + 1,\frac{{\sqrt {d_{br}^\alpha d_{rg}^\alpha {\varsigma _g}{\sigma ^2}} }}{{{c_g}}}} \right)} \right]^{g + k}}.
	\end{align}
\end{corollary}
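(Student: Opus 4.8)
The plan is to obtain this closed-form result as a specialization of the approximate PRIS-ipSIC expression in Corollary~\ref{Corollary:the OP with ipSIC of user g PRIS-NOMA-HIS}, letting $\varpi\to 0$ so that imperfect SIC becomes perfect SIC; equivalently, it is the double specialization $\xi=0,\ \varpi=0$ of Theorem~\ref{Theorem1:the OP of user g}. Conceptually, $\xi=0$ (together with the passive-reflection condition $\beta=1$) has already reduced the active surface to a passive one, removing the amplified thermal-noise term, so the sole remaining simplification is removal of the residual-interference contribution controlled by $\varpi$.

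First I would set $\varpi=0$ in \eqref{the OP with ipSIC of user g PRIS-NOMA-HIS}. Since $\varphi_g=d_{br}^\alpha d_{rg}^\alpha\varsigma_g\varpi P_b$ is proportional to $\varpi$, this annihilates $\varphi_g$, so that the argument of the lower incomplete gamma function collapses to $\sqrt{\vartheta_{PRIS,g}}/(c_g\sqrt{m_g})$, which is now constant with respect to the Gauss--Laguerre node $x_u$.

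The main step is the evaluation of the residual $u$-sum. Pulling the now node-independent incomplete gamma factor outside the summation leaves $\sum_{u=1}^U H_u x_u^{m_g-1}$, which is exactly the $U$-point Gauss--Laguerre rule applied to $f(x)=x^{m_g-1}$, whose value is $\int_0^\infty x^{m_g-1}e^{-x}\,dx=\Gamma(m_g)$. For an integer Nakagami parameter $m_g$ the integrand is a polynomial of degree $m_g-1\le 2U-1$, so the quadrature is \emph{exact}; this is precisely why the approximation in Theorem~\ref{Theorem1:the OP of user g} sharpens into an equality once $\varphi_g$ vanishes (and, through $\xi=0$, the constant-thermal-noise approximation also drops out). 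Substituting $\sum_{u}H_u x_u^{m_g-1}=\Gamma(m_g)$ then cancels the $\Gamma(m_g)$ carried inside $\psi_g=\Gamma(m_g)[\Gamma(b_g+1)]^{g+k}$, leaving the clean factor $1/[\Gamma(b_g+1)]^{g+k}$.

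It remains only to tidy the gamma argument. Substituting $\vartheta_{PRIS,g}=m_g d_{br}^\alpha d_{rg}^\alpha\varsigma_g\sigma^2$ lets the $m_g$ under the square root cancel the $\sqrt{m_g}$ in the denominator, giving $\sqrt{d_{br}^\alpha d_{rg}^\alpha\varsigma_g\sigma^2}/c_g$ and hence exactly \eqref{the OP with pSIC of user g PRIS-NOMA-HIS}. I expect the only genuinely substantive point to be justifying the exactness of the Gauss--Laguerre collapse; everything else is algebraic bookkeeping in which the active-mode parameters $\beta,\xi,\varpi$ drop out as anticipated.
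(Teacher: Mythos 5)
Your algebra and the final formula are right, but the route has a genuine gap: as written, it establishes the corollary only for integer Nakagami parameters. After setting $\varpi=0$ in Corollary~\ref{Corollary:the OP with ipSIC of user g PRIS-NOMA-HIS}, your argument hinges on the collapse $\sum_{u=1}^{U}H_u x_u^{m_g-1}=\Gamma(m_g)$, which is exact only when $x^{m_g-1}$ is a polynomial of degree at most $2U-1$, i.e.\ when $m_g$ is a positive integer with $m_g\le 2U$. Nakagami fading only requires $m_g\ge 1/2$, and the paper's own numerical results use $m=0.5$ and $m=0.7$; for such non-integer $m_g$ your specialization recovers the right-hand side of \eqref{the OP with pSIC of user g PRIS-NOMA-HIS} only up to a residual, $U$-dependent quadrature error, whereas the corollary asserts a closed form in which $U$ does not appear at all.

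The repair (and the paper's implicit route) is to specialize \emph{before} quadrature rather than after. In the exact representation \eqref{the Appendix for OP expression of user g} of Appendix~A, setting $\varpi=0$ makes the argument of $F_Y$ independent of the integration variable $x$, so the factor $\left[\gamma\left(\cdot\right)\right]^{g+k}$ pulls out of the integral and what remains is $\int_0^\infty f_X\left(x\right)dx=1$ exactly --- the normalization of the Gamma density $\frac{m_g^{m_g}}{\Gamma\left(m_g\right)}x^{m_g-1}e^{-m_g x}$ --- valid for every admissible $m_g$. The Gauss--Laguerre step is never invoked, which is precisely why Corollary~\ref{Corollary:the OP with pSIC of user g ARIS-NOMA-HIS} and Corollary~\ref{Corollary:the OP with pSIC of user g PRIS-NOMA-HIS} are stated as closed forms rather than approximations; your reading that the quadrature ``sharpens into an equality'' is the symptom, but the cause is the trivialization of the $x$-integral, not polynomial exactness of the quadrature rule. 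From there, $\xi=0$ (together with $\beta=1$ for the passive surface, as you correctly note) gives \eqref{the OP with pSIC of user g PRIS-NOMA-HIS}; your simplification of the gamma-function argument, $\sqrt{\vartheta_{PRIS,g}}\big/\left(c_g\sqrt{m_g}\right)=\sqrt{d_{br}^\alpha d_{rg}^\alpha\varsigma_g\sigma^2}\big/c_g$, is correct as is.
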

	
\subsection{The Outage Probability of User f}
With respect to user $f$ with worse channel conditions, the interrupt is generated if it cannot detect or decode its own message $x_f$. In this case, the outage probability of user $f$ for ARIS-NOMA-HIS networks is given by
\begin{align}\label{the outage probability of user f}
	P_{ARIS,f}^{HIS} = {\rm{Pr}}\left( {{\gamma _f} < {\gamma _{t{h_f}}}} \right).
\end{align}
	
\begin{theorem}\label{Theorem2:the OP of user f}
	Conditioned on cascade Nakagami-$m$ fading channels, the closed-form outage probability of user $f$ for ARIS-NOMA-HIS networks can be expressed as
		\begin{align}\label{the OP of user f ARIS-NOMA-HIS}
			P_{ARIS,f}^{HIS} =&{\Psi _f}\sum\limits_{k = 0}^{K - f} {{
					{K - f}  \choose
					k  }\frac{{{{\left( { - 1} \right)}^k}}}{{f + k}}} \left[ {\frac{1}{{\Gamma \left( {{b_f} + 1} \right)}}} \right. \nonumber \\
			&  \times {\left. {\gamma \left( {{b_f} + 1,\frac{{\sqrt {{\varphi _f} + {\vartheta _f}} }}{{{c_f}\sqrt \beta  }}} \right)} \right]^{f + k}},
	\end{align}
	where ${\Psi _f} = \frac{{K!}}{{\left( {K - f} \right)!\left( {f - 1} \right)!}}$, ${\varphi _f} = \xi \beta {N_{tn}}d_{br}^\alpha d_{rf}^\alpha {\varsigma _f}L{\omega _{rf}}$, ${\vartheta _f} = d_{br}^\alpha d_{rf}^\alpha {\varsigma _f}{\sigma ^2}$, ${\varsigma _f} = \frac{{{\gamma _{t{h_f}}}}}{{{P_b}\left( {{a_f} - {\gamma _{t{h_f}}}{\chi _f}} \right)}}$ with the condition of ${a_f} > {\gamma _{t{h_f}}}{\chi _f}$, ${\psi _f} = \Gamma \left( {m_f} \right){\left[ {\Gamma \left( {{b_f} + 1} \right)} \right]^{f + k}}$, ${b_f} = \frac{{L\mu _f^2}}{{{\Omega _f}}} - 1$, ${c_f} = \frac{{{\Omega _f}}}{{{\mu _f}}}$, ${\mu _f} = \frac{{\Gamma \left( {{m_r} + \frac{1}{2}} \right)\Gamma \left( {{m_f} + \frac{1}{2}} \right)}}{{\Gamma \left( {{m_r}} \right)\Gamma \left( {{m_f}} \right)\sqrt {{m_r}{m_f}} }}$, ${\Omega _f} = 1 - \frac{1}{{{m_r}{m_f}}}{\left[ {\frac{{\Gamma \left( {{m_r} + \frac{1}{2}} \right)\Gamma \left( {{m_f} + \frac{1}{2}} \right)}}{{\Gamma \left( {{m_r}} \right)\Gamma \left( {{m_f}} \right)}}} \right]^2}$, $m_f$ denote the multipath fading parameter from ARIS to non-orthogonal user $f$.
\end{theorem}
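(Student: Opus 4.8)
The plan is to evaluate the single probability in \eqref{the outage probability of user f} directly. The key simplification relative to Theorem~\ref{Theorem1:the OP of user g} is that user $f$ executes no SIC step, so no residual-interference random variable appears in \eqref{The SINR of user f}; this is precisely what turns the result into a genuine closed form rather than the Gauss--Laguerre series obtained for user $g$.

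First I would substitute \eqref{The SINR of user f} into $\Pr\left(\gamma_f<\gamma_{th_f}\right)$ and invoke the constant-noise approximation announced in Section~\ref{Outage Probability}, replacing the random $\left\|{\bf h}_{rf}^H{\bf\Theta}\right\|^2$ by its mean $L\omega_{rf}$. Abbreviating the cascade gain by $X=\left|{\bf h}_{rf}^H{\bf\Theta}{\bf h}_{br}\right|^2$, the outage event $\gamma_f<\gamma_{th_f}$ then reduces to $\beta P_b X\left(a_f-\gamma_{th_f}\chi_f\right)<\gamma_{th_f}\left(\xi\beta N_{tn}L\omega_{rf}+\sigma^2\right)$. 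Under the stated admissibility condition $a_f>\gamma_{th_f}\chi_f$ the coefficient of $X$ is positive, so the event rearranges to $X<\varsigma_f\left(\xi\beta N_{tn}L\omega_{rf}+\sigma^2\right)/\beta$ with $\varsigma_f$ as defined; the outage probability is therefore the CDF of the ordered gain $X$ evaluated at this threshold. When the condition fails the probability collapses to unity, which is why the constraint is attached to the statement.

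The technical heart is the distribution of $X$. With the ARIS phases aligned to the cascade link, $\sqrt{X}$ is proportional to $S:=\sum_{l=1}^{L}\left|h_{br}^l\right|\left|h_{rf}^l\right|$, a sum of $L$ independent products of a Nakagami-$m_r$ and a Nakagami-$m_f$ variate. Because the exact law of a sum of double-Nakagami products is intractable, I would match its first two moments to a Gamma law: the Nakagami fractional moment $\EX{|h|}=\Gamma\left(m+\tfrac12\right)/\left[\Gamma(m)\sqrt{m}\right]$ together with independence gives $\EX{S}=L\mu_f$ and variance $\mathbb{D}\left\{S\right\}=L\Omega_f$ with $\mu_f,\Omega_f$ exactly as in the theorem, and solving the moment equations $k\theta=L\mu_f$, $k\theta^2=L\Omega_f$ returns shape $k=b_f+1$ and scale $\theta=c_f$. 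Hence $S$ is approximately $\mathrm{Gamma}\left(b_f+1,c_f\right)$, so its CDF is $\gamma\left(b_f+1,\,s/c_f\right)/\Gamma\left(b_f+1\right)$; pushing the $X$-threshold through the relation $X\propto S^2$ and collecting the large-scale factors $d_{br}^\alpha d_{rf}^\alpha$ produces the argument $\sqrt{\varphi_f+\vartheta_f}/\left(c_f\sqrt\beta\right)$. I expect this moment-matching step to be the main obstacle, since it is both where the calculation ceases to be exact and where all of $\mu_f,\Omega_f,b_f,c_f$ are pinned down.

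It remains to account for user ordering. Since the gains are ranked as in Section~\ref{Network Model}, user $f$ owns the $f$-th smallest of $K$ independent and identically distributed cascade gains, whose PDF is $\Psi_f\left[F(x)\right]^{f-1}\left[1-F(x)\right]^{K-f}f(x)$ with $\Psi_f=K!/\left[(K-f)!(f-1)!\right]$. Expanding $\left[1-F(x)\right]^{K-f}=\sum_{k=0}^{K-f}\binom{K-f}{k}(-1)^k\left[F(x)\right]^{k}$ and integrating $\left[F(x)\right]^{f+k-1}f(x)$ over the outage region yields the factor $\left[F(\cdot)\right]^{f+k}/(f+k)$, which assembles into exactly \eqref{the OP of user f ARIS-NOMA-HIS}.
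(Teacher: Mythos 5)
Your proposal is correct and follows essentially the same route as the paper's Appendix~B: reduce the outage event to a threshold on the cascade gain under the condition $a_f>\gamma_{th_f}\chi_f$ (with the ARIS thermal-noise term treated as the constant $\xi\beta N_{tn}L\omega_{rf}$), approximate the sum $\sum_{l=1}^{L}|h_{br}^l||h_{rf}^l|$ by a Gamma law whose shape $b_f+1$ and scale $c_f$ come from matching the first two moments (the paper obtains the same Gamma form as the leading term of a Laguerre-polynomial expansion), and then apply the order-statistics expansion for the $f$-th smallest of $K$ gains to get the $\binom{K-f}{k}(-1)^k/(f+k)$ series. The only cosmetic difference is that you re-derive the ordered CDF by integrating the order-statistic PDF, while the paper quotes it directly.
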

\begin{proof}
	See Appendix~B.
\end{proof}
	
\begin{corollary}\label{Corollary:the OP of user f PRIS-NOMA-HIS}
	When the parameter $\xi$ is set to be zero, the closed-form outage probability of user $f$ for PRIS-NOMA-HIS can be given by
		\begin{align}\label{the OP of user f PRIS-NOMA-HIS}
			P_{PRIS,f}^{HIS} =& {\Psi _f}\sum\limits_{k = 0}^{K - f} {{
					{K - f}  \choose
					k  }\frac{{{{\left( { - 1} \right)}^k}}}{{f + k}}} \left[ {\frac{1}{{\Gamma \left( {{b_f} + 1} \right)}}} \right. \nonumber \\
			&  \times {\left. {\gamma \left( {{b_f} + 1,\frac{{\sqrt {d_{br}^\alpha d_{rf}^\alpha {\varsigma _f}{\sigma ^2}} }}{{{c_f}}}} \right)} \right]^{f + k}}.
	\end{align}
\end{corollary}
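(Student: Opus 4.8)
The plan is to obtain this corollary as a direct specialization of Theorem~\ref{Theorem2:the OP of user f} to the passive-RIS regime, since passive operation is exactly the case $\xi = 0$ (no amplification of the RIS thermal noise) together with unit reflection $\beta = 1$. First I would return to the SINR $\gamma_f$ in \eqref{The SINR of user f} and set $\xi = 0$, which deletes the amplified thermal-noise term $\xi\beta N_{tn}\|{\bf{h}}_{rf}^H{\bf{\Theta}}\|^2$ and leaves only the receiver noise $\sigma^2$ in the denominator floor. Carrying out the same rearrangement as in the active case, and using the stated feasibility condition $a_f > \gamma_{th_f}\chi_f$ so that the inequality direction is preserved, the outage event $\gamma_f < \gamma_{th_f}$ again reduces to a one-sided threshold event on the coherent cascade sum $Z_f = \sum_{l=1}^L |h_{rf}^l||h_{br}^l|$, now with the threshold $\sqrt{d_{br}^\alpha d_{rf}^\alpha\varsigma_f\sigma^2}$ (the $\sqrt\beta$ factor of Theorem~\ref{Theorem2:the OP of user f} drops out because $\beta = 1$ under passive reflection).

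From there the derivation is verbatim that of Theorem~\ref{Theorem2:the OP of user f}: the Gamma moment-matched marginal CDF $F(x) = \gamma(b_f+1, x/c_f)/\Gamma(b_f+1)$ of $Z_f$, with the unchanged shape $b_f+1 = L\mu_f^2/\Omega_f$ and scale $c_f = \Omega_f/\mu_f$ (neither of which depends on $\xi$ or $\beta$), is composed with the $f$-th order-statistic expansion $\Psi_f\sum_{k=0}^{K-f}\binom{K-f}{k}\frac{(-1)^k}{f+k}[F(\cdot)]^{f+k}$. Equivalently, and more quickly, I would simply insert $\xi = 0$ and $\beta = 1$ into \eqref{the OP of user f ARIS-NOMA-HIS}: setting $\xi = 0$ forces $\varphi_f = 0$, so $\varphi_f + \vartheta_f = \vartheta_f = d_{br}^\alpha d_{rf}^\alpha\varsigma_f\sigma^2$, while $\beta = 1$ clears the $\sqrt\beta$, turning the incomplete-gamma argument into $\sqrt{d_{br}^\alpha d_{rf}^\alpha\varsigma_f\sigma^2}/c_f$ and reproducing the claimed expression exactly.

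There is no substantive obstacle here; the only point that needs care is that passive operation carries two simultaneous parameter settings rather than one. The statement names only $\xi = 0$, but the disappearance of $\sqrt\beta$ shows that the unit-gain convention $\beta = 1$ for PRIS is also in force, as is standard for an ideal passive reflector; I would make this explicit so the argument of $\gamma(b_f+1,\cdot)$ matches. All the Nakagami-$m$ fitting constants $\mu_f$, $\Omega_f$, $b_f$ and $c_f$ are transparent to this specialization because they describe only the cascade-magnitude distribution, which passive versus active operation does not alter.
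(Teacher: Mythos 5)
Your proposal is correct and follows exactly the route the paper intends: the corollary is a direct specialization of Theorem~\ref{Theorem2:the OP of user f}, where $\xi=0$ kills $\varphi_f$ and the remaining incomplete-gamma argument reduces to $\sqrt{d_{br}^\alpha d_{rf}^\alpha \varsigma_f \sigma^2}/c_f$. Your observation that the stated result also silently requires the unit-gain convention $\beta=1$ (otherwise the $\sqrt{\beta}$ would survive in the denominator) is a valid and worthwhile clarification --- the paper leaves this implicit here but states it explicitly later when specializing the ergodic rates to PRIS.
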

	
\subsection{The Outage Probability of User o}
Analogous to the above analytical progresses, the outage probability of orthogonal user $o$ for ARIS-OMA-HIS networks is given by
\begin{align}\label{the outage probability of user o}
	P_{ARIS,o}^{HIS}= {\rm{Pr}}\left( {{\gamma _o} < {\gamma _{t{h_o}}}} \right),
\end{align}
where ${\gamma _{t{h_o}}}$ denotes the target SNRs of user $o$.
\begin{theorem}\label{Theorem3:the OP of user o}
	Conditioned on cascade Nakagami-$m$ fading channels, the outage probability of user $o$ for ARIS-OMA-HIS networks can be expressed as
	\begin{align}\label{the OP of user o ARIS-OMA-HIS}
		P_{ARIS,o}^{HIS} = \frac{1}{{\Gamma \left( {{b_o} + 1} \right)}}\gamma \left( {{b_o} + 1,\frac{{\sqrt {{\varphi _o} + {\vartheta _o}} }}{{{c_o}\sqrt \beta  }}} \right),
	\end{align}
	where ${\varphi _o} = \xi \beta {N_{tn}}d_{br}^\alpha d_{ro}^\alpha {\varsigma _o}L{\omega _{ro}}$, ${\vartheta _o} = d_{br}^\alpha d_{ro}^\alpha {\varsigma _o}{\sigma ^2}$, ${\varsigma _o} = \frac{{{\gamma _{t{h_o}}}}}{{{P_b}\left( {1 - {\gamma _{t{h_o}}}{\chi _o}} \right)}}$ with the condition of ${\gamma _{t{h_o}}}{\chi _o} < 1$, and $m_o$ denote the multipath fading parameter from ARIS to orthogonal user $o$.
\end{theorem}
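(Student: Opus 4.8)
The plan is to follow the same route as the proof of Theorem~\ref{Theorem2:the OP of user f}, but without the order-statistics step, since the orthogonal user $o$ is a single user rather than one drawn from an ordered set of $K$ users. First I would start from the definition in \eqref{the outage probability of user o} and insert the SINR \eqref{The SINR of user o}. Following the constant-noise approximation announced at the start of this section, I replace the random quantity $\left\|{{\bf{h}}_{ro}^H{\bf{\Theta }}}\right\|^2$ by its mean $L\omega_{ro}$, where $\omega_{ro}$ is the per-element average channel power, so that the reflected-noise contribution to the denominator of $\gamma_o$ becomes deterministic.

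Next I would isolate the only remaining random variable, the cascade gain $X=\left|{{\bf{h}}_{ro}^H{\bf{\Theta }}{{\bf{h}}_{br}}}\right|^2$, by rearranging the event $\gamma_o<{\gamma _{t{h_o}}}$ into a linear inequality. Under the stated feasibility condition ${\gamma _{t{h_o}}}\chi_o<1$ the coefficient $1-{\gamma _{t{h_o}}}\chi_o$ is positive, so the inequality direction is preserved and the event reduces to $X<\frac{\varsigma_o\left(\xi\beta N_{tn}L\omega_{ro}+\sigma^2\right)}{\beta}$ with $\varsigma_o$ as given in the statement. Factoring out the large-scale terms $d_{br}^{\alpha}d_{ro}^{\alpha}$ and writing $\sqrt{X}$ through the phase-aligned small-scale sum $Z=\sum_{l=1}^{L}\left|{h_{ro}^l}\right|\left|{h_{br}^l}\right|$ converts this into the threshold event $Z<\frac{\sqrt{\varphi_o+\vartheta_o}}{\sqrt\beta}$, where $\varphi_o+\vartheta_o=d_{br}^{\alpha}d_{ro}^{\alpha}\varsigma_o\left(\xi\beta N_{tn}L\omega_{ro}+\sigma^2\right)$ collects exactly the reflected-noise and receiver-noise pieces claimed.

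The main work, as in Appendix~B, is to supply a tractable distribution for $Z$. I would approximate the sum of $L$ i.i.d.\ double-Nakagami products by a single gamma random variable via moment matching: from the half-integer moments of the Nakagami-$m$ envelope one has $\E\{|{h_{ro}^l}||{h_{br}^l}|\}=\mu_o$ and $\mathrm{Var}\{|{h_{ro}^l}||{h_{br}^l}|\}=\Omega_o$, hence $\E\{Z\}=L\mu_o$ and $\mathrm{Var}\{Z\}=L\Omega_o$. Equating these to the gamma mean and variance yields shape $b_o+1=L\mu_o^2/\Omega_o$ and scale $c_o=\Omega_o/\mu_o$, with $\mu_o$ and $\Omega_o$ assembled from $\Gamma(m_r+\tfrac12)$ and $\Gamma(m_o+\tfrac12)$ exactly as for $\mu_f,\Omega_f$. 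Substituting the gamma CDF $F_Z(z)=\frac{1}{\Gamma(b_o+1)}\gamma\left(b_o+1,z/c_o\right)$ at $z=\sqrt{\varphi_o+\vartheta_o}/\sqrt\beta$ then delivers \eqref{the OP of user o ARIS-OMA-HIS}.

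The hard part is justifying and quantifying the gamma approximation of $Z$; the SINR inequality manipulation and the final CDF substitution are routine. Because there is neither SIC nor channel ordering here, the binomial/order-statistics sum ${\Psi}\sum_k\binom{\cdot}{k}(\cdot)^{\,g+k}$ appearing in Theorems~\ref{Theorem1:the OP of user g} and~\ref{Theorem2:the OP of user f} collapses to a single gamma CDF, which is precisely why the result is obtained in closed form rather than as a finite sum.
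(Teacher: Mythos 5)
Your proposal is correct and follows essentially the same route the paper uses: it is the Appendix~B argument (rearrange the SINR event under the feasibility condition ${\gamma _{t{h_o}}}{\chi _o} < 1$, treat the reflected thermal noise as the constant $\xi \beta {N_{tn}}L{\omega _{ro}}$, and apply the moment-matched gamma/Laguerre approximation to the cascade sum) with the order-statistics step removed, which is exactly why the binomial sum of Theorems~\ref{Theorem1:the OP of user g} and~\ref{Theorem2:the OP of user f} collapses to the single incomplete-gamma CDF in \eqref{the OP of user o ARIS-OMA-HIS}. Your moment-matching parameters $b_o+1 = L\mu_o^2/\Omega_o$ and $c_o = \Omega_o/\mu_o$ reproduce the paper's definitions, so the substitution of the gamma CDF at the threshold $\sqrt{\varphi_o+\vartheta_o}/\left(c_o\sqrt{\beta}\right)$ yields the stated result.
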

	
\begin{corollary}\label{Corollary:the OP of  user o PRIS-OMA-HIS}
	When the parameter $\xi$ is set to be zero, the closed-form outage probability of user $o$ for PRIS-OMA-HIS can be expressed as
	\begin{align}\label{the OP of user o PRIS-OMA-HIS}
		P_{PRIS,o}^{HIS} = \frac{1}{{\Gamma \left( {{b_o} + 1} \right)}}\gamma \left( {{b_o} + 1,\frac{{\sqrt {d_{br}^\alpha d_{ro}^\alpha {\varsigma _o}{\sigma ^2}} }}{{{c_o}}}} \right).
	\end{align}
\end{corollary}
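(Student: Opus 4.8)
The plan is to reduce the outage event to a single CDF evaluation of the phase-aligned cascade amplitude, which is cleaner than the user-$g$ and user-$f$ derivations because the orthogonal user $o$ carries no SIC ordering and hence no binomial/order-statistics layer. Starting from \eqref{the outage probability of user o} with the SINR in \eqref{The SINR of user o}, I would write the coherently combined cascade amplitude as $T=\sum_{l=1}^{L}\lvert h_{br}^{l}\rvert\,\lvert h_{ro}^{l}\rvert$, so that $\lvert\mathbf{h}_{ro}^{H}\mathbf{\Theta}\mathbf{h}_{br}\rvert^{2}\propto T^{2}$, and invoke the stated surrogate that replaces the data-dependent ARIS noise gain $\lVert\mathbf{h}_{ro}^{H}\mathbf{\Theta}\rVert^{2}$ by the constant $L\omega_{ro}$. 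Rearranging $\gamma_{o}<\gamma_{th_o}$ then gives a linear inequality in $T^{2}$; the one subtle point is that dividing through by $1-\gamma_{th_o}\chi_o$ preserves the inequality direction only when $\gamma_{th_o}\chi_o<1$, which is exactly the feasibility condition attached to $\varsigma_o$. Collecting all deterministic constants into $\varphi_o$ and $\vartheta_o$, the event collapses to $T<\sqrt{\varphi_o+\vartheta_o}/\sqrt\beta$, so that $P_{ARIS,o}^{HIS}=F_{T}\!\left(\sqrt{\varphi_o+\vartheta_o}/\sqrt\beta\right)$.

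The core step is to supply a tractable $F_T$. Since $T$ is a sum of $L$ i.i.d. products of two Nakagami-$m$ magnitudes (a double-Nakagami cascade) whose exact CDF is intractable, I would approximate $T$ by a Gamma (scaled-Nakagami) variate through moment matching, the same device that fixes $b$ and $c$ in Theorems~\ref{Theorem1:the OP of user g} and \ref{Theorem2:the OP of user f}. Using $\mathbb{E}[\lvert h\rvert]=\Gamma(m+\tfrac12)/(\Gamma(m)\sqrt m)$ and $\mathbb{E}[\lvert h\rvert^{2}]=1$ for a normalized Nakagami-$m$ magnitude, independence across the two hops and across elements yields $\mathbb{E}[T]=L\mu_o$ and $\mathrm{Var}(T)=L\Omega_o$ with $\mu_o,\Omega_o$ exactly as listed in the theorem. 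Matching these two moments to a Gamma law of shape $k$ and scale $\theta$ forces $\theta=\Omega_o/\mu_o=c_o$ and $k=L\mu_o^{2}/\Omega_o=b_o+1$, so that $F_T(t)=\gamma\!\left(b_o+1,\,t/c_o\right)/\Gamma(b_o+1)$.

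Substituting the threshold $t=\sqrt{\varphi_o+\vartheta_o}/\sqrt\beta$ into this CDF then produces \eqref{the OP of user o ARIS-OMA-HIS} verbatim, and the PRIS corollary \eqref{the OP of user o PRIS-OMA-HIS} follows at once by setting $\xi=0$, which annihilates $\varphi_o$ and leaves only $\vartheta_o=d_{br}^{\alpha}d_{ro}^{\alpha}\varsigma_o\sigma^{2}$ inside the incomplete-gamma argument. I expect the main obstacle to be the moment-matching approximation itself: justifying that matching only the first two moments of the cascaded-Nakagami sum to a single Gamma law is accurate across the operating SNR range, and, as a bookkeeping hazard, correctly absorbing the large-scale factors $\eta^{2}d_{br}^{-\alpha}d_{ro}^{-\alpha}$ together with the spread constant behind $\omega_{ro}$ so that they reappear cleanly inside $\varphi_o$ and $\vartheta_o$ rather than as stray prefactors on the incomplete-gamma argument.
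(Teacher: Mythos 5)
Your proposal is correct and follows essentially the same route as the paper: the paper's own treatment (Appendix B, applied ``analogously'' to user $o$ without the order-statistics layer) likewise reduces the outage event to a single CDF evaluation of the coherently combined cascade sum and invokes the identical two-moment Gamma (Laguerre-type) approximation that defines $b_o$ and $c_o$. One minor bookkeeping caveat: setting $\xi=0$ alone still leaves a residual $\sqrt{\beta}$ in the incomplete-gamma argument, so recovering \eqref{the OP of user o PRIS-OMA-HIS} verbatim also requires the PRIS convention $\beta=1$ (no amplification), an implicit step the paper itself glosses over in exactly the same way in its other PRIS corollaries.
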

	
\subsection{Diversity Analysis}\label{Diversity Analysis}
In this section, we pick the diversity order to analyze the outage behaviors for ARIS-NOMA-HIS networks. To put it in another way, the diversity order can describe the rate at which the probability of disruption decreases as SNR increases \cite{Yingjie2023RISNOMA}. As the grading order becomes boosted, the outage probability decays faster. To be more specific, the diversity order is defined as
\begin{align}\label{the definition of diversity order}
	d =  - \mathop {\lim }\limits_{\rho  \to \infty } \frac{{\log \left( {\Pr \left( \rho  \right)} \right)}}{{\log \left( \rho  \right)}},
\end{align}
where $\rho  = \frac{{{P_b}}}{{{\sigma ^2}}}$ means the transmit SNR. ${\Pr \left( \rho  \right)}$ represents the asymptotic outage probability in case of high SNRs.
	
\begin{corollary}\label{Corollary:the asymptotic OP of user g for ARIS}
	When $P_b$ approaches infinity, the asymptotic outage probability of user $g$ with ipSIC/pSIC for ARIS-NOMA-HIS can be respectively expressed as
	\begin{align}\label{the asymptotic OP with ipSIC of user g for ARIS}
		P_{{ARIS},g}^{{ipSIC,}\infty } \approx& {\Psi _g}\sum\limits_{k = 0}^{K - g} {{
				{K - g}  \choose
				k  }\frac{{{{\left( { - 1} \right)}^k}}}{{\left( {g + k} \right){\psi _g}}}} \nonumber\\
		& \times\sum\limits_{u = 1}^U {{H_u}} x_u^{{m_g} - 1} {\left[ {\gamma \left( {{b_g} + 1,\frac{{\sqrt {{\varphi _g}{x_u}} }}{{{c_g}\sqrt {\beta {m_g}} }}} \right)} \right]^{g + k}},
	\end{align}
	and
	\begin{align}\label{the asymptotic OP with pSIC of user g for ARIS}
		&P_{ARIS,g}^{pSIC,\infty } = {\Psi _g}\sum\limits_{k = 0}^{K - g} {{
				{K - g}  \choose
				k  }\frac{{{{\left( { - 1} \right)}^k}}}{{g + k}}} \nonumber\\
		&\times {\left\{ {{{\left[ {\frac{{\Gamma \left( {2{m_r}} \right)\Gamma \left( {2{m_g}} \right){\Upsilon _g}}}{{{\Lambda _g}\Gamma \left( {{m_r} + {m_g} + \frac{1}{2}} \right)}}} \right]}^L}\frac{{{{\left( {{\vartheta _{g,pSIC}}} \right)}^{L{m_r}}}{\beta ^{ - L{m_r}}}}}{{2L{m_r}\left( {2L{m_r} - 1} \right)!}}} \right\}^{g + k}},
	\end{align}
	where $\Upsilon_g  = {\Upsilon _1}F\left( {2{m_r},{m_r} - {m_g} + \frac{1}{2};{m_r} + {m_g} + \frac{1}{2};1} \right)$, ${\Upsilon _1} = {4^{{m_r} - {m_f} + 1}}\sqrt \pi  {\left( {{m_r}{m_f}} \right)^{{m_r}}}$ and $F\left( { \cdot , \cdot ; \cdot ; \cdot } \right)$ is the ordinary hypergeometric function \cite[Eq. (9.100)]{2000gradshteyn}. $\delta  = \left( {2L{m_r} - 1} \right)!$, ${\Lambda _g} = \Gamma \left( {{m_r}} \right)\Gamma \left( {{m_g}} \right)$.
\end{corollary}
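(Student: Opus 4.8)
The plan is to treat the two limits separately, since the ipSIC floor drops out of a bare $P_b\to\infty$ limit of the exact formula in Theorem~\ref{Theorem1:the OP of user g}, whereas the pSIC term requires the genuine near-origin law of the cascade gain in place of the moment-matched Gamma surrogate used for the exact curve.

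First I would start from \eqref{the OP with ipSIC of user g ARIS-NOMA-HIS} and isolate the only $P_b$-dependence, $\varsigma _g=\gamma _{th_g}/[P_b(a_g-\gamma _{th_g}\chi _g^2)]$, which decays like $P_b^{-1}$. Substituting it into the two channel constants exposes their different orders: $\varphi _g=d_{br}^\alpha d_{rg}^\alpha\varsigma _g\varpi P_b$ carries an explicit $P_b$ that cancels the $P_b^{-1}$ of $\varsigma _g$, so $\varsigma _gP_b$ freezes at $\gamma _{th_g}/(a_g-\gamma _{th_g}\chi _g^2)$ and $\varphi _g$ tends to a strictly positive constant, while $\vartheta _g=m_gd_{br}^\alpha d_{rg}^\alpha\varsigma _g(\xi\beta N_{tn}L\omega _{rg}+\sigma ^2)$ has no compensating factor and vanishes. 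Dropping $\vartheta _g$ inside the incomplete-gamma argument of Theorem~\ref{Theorem1:the OP of user g} while keeping $\varphi _g$ at its limit reproduces \eqref{the asymptotic OP with ipSIC of user g for ARIS} directly. I would close this case by noting that the surviving argument $\sqrt{\varphi _gx_u}/(c_g\sqrt{\beta m_g})$ is a nonzero constant, so the lower incomplete gamma does not decay: an irreducible error floor (zero diversity order) is inevitable, and because the argument stays bounded away from $0$ the Gamma moment-match used for the exact curve remains adequate here.

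For pSIC I would begin from Corollary~\ref{Corollary:the OP with pSIC of user g ARIS-NOMA-HIS}, where $\vartheta _{g,pSIC}\propto\varsigma _g\propto P_b^{-1}$ drives the incomplete-gamma argument $z=\sqrt{\vartheta _{g,pSIC}}/(c_g\sqrt\beta)\to0$. The naive step $\gamma(b_g+1,z)\approx z^{b_g+1}/(b_g+1)$ would lock the exponent to the moment-matched shape $b_g+1=L\mu _g^2/\Omega _g$, which is not the true diversity exponent; hence I would abandon the Gamma approximation and extract the exact leading term of the CDF of the cascade sum $T=\sum_{l=1}^{L}|h_{br}^l||h_{rg}^l|$ as $T\to0$. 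The first sub-step is the small-argument expansion of one double-Nakagami product, whose Bessel-$\mathcal{K}$ (equivalently hypergeometric) density collapses to the pure power $w^{2m_r-1}$ when $m_r\le m_g$; evaluating $F(2m_r,m_r-m_g+\tfrac12;m_r+m_g+\tfrac12;1)$ by the Gauss summation $\Gamma(c)\Gamma(c-a-b)/[\Gamma(c-a)\Gamma(c-b)]$ folds the leading coefficient into $\Upsilon _g$ and the prefactor $\Gamma(2m_r)\Gamma(2m_g)/[\Lambda _g\Gamma(m_r+m_g+\tfrac12)]$. The second sub-step is the $L$-fold self-convolution of this power law: passing to the Laplace domain (or iterating the Dirichlet/Beta integral) sends $w^{2m_r-1}$ to $t^{2Lm_r-1}$ with normalization $[\Gamma(2m_r)]^L/\Gamma(2Lm_r)$, giving $F_T(t)\approx B^{L}\,t^{2Lm_r}/(2Lm_r)!$ with $B=\Gamma(2m_r)\Gamma(2m_g)\Upsilon _g/[\Lambda _g\Gamma(m_r+m_g+\tfrac12)]$. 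Setting $t=\sqrt{\vartheta _{g,pSIC}/\beta}$ turns $t^{2Lm_r}$ into $(\vartheta _{g,pSIC})^{Lm_r}\beta^{-Lm_r}$, and inserting this leading term into the order-statistics bracket $\Psi _g\sum_k\binom{K-g}{k}(-1)^k(g+k)^{-1}[\,\cdot\,]^{g+k}$ carried over from Corollary~\ref{Corollary:the OP with pSIC of user g ARIS-NOMA-HIS} yields \eqref{the asymptotic OP with pSIC of user g for ARIS}.

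The hard part will be these two sub-steps: cleanly pinning the product density's near-origin coefficient through the hypergeometric-at-unity identity, and above all carrying out the $L$-fold self-convolution so as to fix simultaneously the exponent $2Lm_r$ and the factorial normalization $2Lm_r(2Lm_r-1)!=(2Lm_r)!$. This route also tacitly assumes $m_r\le m_g$, so that $2m_r$ is the dominant power controlling the decay; the opposite case merely interchanges $m_r$ and $m_g$.
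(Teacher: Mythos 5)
Your proposal is correct and follows essentially the same route the paper's formulas encode (the corollary is stated without an explicit proof): the ipSIC expression \eqref{the asymptotic OP with ipSIC of user g for ARIS} is exactly Theorem~\ref{Theorem1:the OP of user g} with the vanishing $\vartheta_g\propto P_b^{-1}$ dropped while $\varphi_g=d_{br}^\alpha d_{rg}^\alpha\varsigma_g\varpi P_b$ freezes to a constant (hence the error floor), and the pSIC expression \eqref{the asymptotic OP with pSIC of user g for ARIS} is obtained, as you describe, by discarding the moment-matched Gamma surrogate and inserting the true near-origin CDF of the $L$-fold cascade sum, $F_T(t)\approx B^L t^{2Lm_r}/\left[2Lm_r\left(2Lm_r-1\right)!\right]$ with $t=\sqrt{\vartheta_{g,pSIC}/\beta}$ and $B$ the hypergeometric-at-unity coefficient, into the order-statistics bracket of Corollary~\ref{Corollary:the OP with pSIC of user g ARIS-NOMA-HIS}. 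Your two caveats are also accurate: the Gauss summation behind $\Upsilon_g$ requires $m_g>m_r$ (so $2m_r$ is indeed the controlling exponent), and the paper's ${\Upsilon_1}$ should read $m_g$ where it writes $m_f$, a carry-over typo from the user-$f$ corollary.
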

	
\begin{corollary}\label{Corollary:the asymptotic OP of user g for ARIS}
	When $P_b$ approaches infinity, the asymptotic outage probability of user $g$ with ipSIC/pSIC for PRIS-NOMA-HIS can be respectively expressed as
	\begin{align}\label{the asymptotic OP with ipSIC of user g for PRIS}
		P_{PRIS,g}^{ipSIC,\infty } \approx& {\Psi _g}\sum\limits_{k = 0}^{K - g} {{
				{K - g}  \choose
				k  }\frac{{{{\left( { - 1} \right)}^k}}}{{\left( {g + k} \right){\psi _g}}}}  \nonumber\\
		&  \times \sum\limits_{u = 1}^U {{H_u}} x_u^{{m_g} - 1}{\left[ {\gamma \left( {{b_g} + 1,\frac{{\sqrt {{\varphi _g}{x_u}} }}{{{c_g}\sqrt {{m_g}} }}} \right)} \right]^{g + k}},
	\end{align}
	and
	\begin{align}\label{the asymptotic OP with pSIC of user g for PRIS}
		&P_{{PRIS},g}^{pSIC,\infty } = {\Psi _g}\sum\limits_{k = 0}^{K - g} {{
				{K - g}  \choose
				k  }\frac{{{{\left( { - 1} \right)}^k}}}{{g + k}}}  \nonumber\\
		& \times {\left\{ {{{\left[ {\frac{{\Gamma \left( {2{m_r}} \right)\Gamma \left( {2{m_g}} \right){\Upsilon _g}}}{{{\Lambda _g}\Gamma \left( {{m_r} + {m_g} + \frac{1}{2}} \right)}}} \right]}^L}\frac{{{{\left( {d_{br}^\alpha d_{rg}^\alpha {\varsigma _g}{\sigma ^2}} \right)}^{L{m_r}}}}}{{2L{m_r}\left( {2L{m_r} - 1} \right)!}}} \right\}^{g + k}}.
	\end{align}
\end{corollary}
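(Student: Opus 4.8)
The plan is to derive both asymptotic formulas by letting $P_b \to \infty$ in the exact PRIS results, treating the two SIC modes separately according to how the noise term scales against the residual-interference term. The pivotal observation is that $\varsigma_g = \frac{\gamma_{th_g}}{P_b(a_g - \gamma_{th_g}\chi_g^2)}$ scales as $1/P_b$, so any quantity carrying a lone factor of $\varsigma_g$ (such as $\vartheta_{PRIS,g} = m_g d_{br}^\alpha d_{rg}^\alpha \varsigma_g \sigma^2$) vanishes at high SNR, whereas $\varphi_g = d_{br}^\alpha d_{rg}^\alpha \varsigma_g \varpi P_b$ remains of order one because the explicit $P_b$ cancels the $1/P_b$ inside $\varsigma_g$. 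For the ipSIC branch I would therefore start from Corollary \ref{Corollary:the OP with ipSIC of user g PRIS-NOMA-HIS} and pass to the limit inside the finite double sum: since $\vartheta_{PRIS,g} \to 0$ while $\varphi_g \to \frac{d_{br}^\alpha d_{rg}^\alpha \gamma_{th_g}}{a_g - \gamma_{th_g}\chi_g^2} > 0$, each argument $\frac{\sqrt{\varphi_g x_u + \vartheta_{PRIS,g}}}{c_g \sqrt{m_g}}$ converges to $\frac{\sqrt{\varphi_g x_u}}{c_g \sqrt{m_g}}$, which gives \eqref{the asymptotic OP with ipSIC of user g for PRIS}. Because this limiting argument is a strictly positive constant, the ipSIC outage saturates at an error floor, i.e.\ zero diversity order, which is the qualitative content of this half of the corollary.

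The pSIC branch is the real work, and here the Gamma-matched closed form of Corollary \ref{Corollary:the OP with pSIC of user g PRIS-NOMA-HIS} must not be used verbatim: its shape parameter $b_g + 1 = L\mu_g^2/\Omega_g$ is non-integer and does not reproduce the true lower-tail exponent, since moment matching is accurate in the bulk of the distribution but not in the vanishing-threshold regime that controls high SNR. Instead I would recompute the small-argument law of the single-user cascade gain $|{\bf{h}}_{rg}^H{\bf{\Theta}}{\bf{h}}_{br}|^2 = \left(\sum_{l=1}^L |h_{br}^l||h_{rg}^l|\right)^2$ directly. First I expand the density of one product term $R_l = |h_{br}^l||h_{rg}^l|$ about the origin: writing $f_{R_l}$ as the standard Bessel-product integral and retaining the leading $K_{m_r-m_g}$ small-argument term (legitimate when $m_r < m_g$) yields $f_{R_l}(r) \approx a\, r^{2m_r - 1}$ with $a = \frac{2(m_r m_g)^{m_r}\Gamma(m_g - m_r)}{\Gamma(m_r)\Gamma(m_g)}$. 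I then convolve $L$ such leading terms through the Beta-function identity to obtain the sum density $f_T(t) \approx a^L \frac{[\Gamma(2m_r)]^L}{\Gamma(2Lm_r)} t^{2Lm_r - 1}$, and integrate to get $F_T(t) \approx \frac{a^L [\Gamma(2m_r)]^L}{2Lm_r (2Lm_r - 1)!}\, t^{2Lm_r}$.

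It remains to insert the pSIC threshold and repackage the coefficient. With $\xi = 0$ and $\varpi = 0$ the event $\gamma_g < \gamma_{th_g}$ reduces to the cascade gain falling below $d_{br}^\alpha d_{rg}^\alpha \varsigma_g \sigma^2$, so $t^{2Lm_r}$ becomes $(d_{br}^\alpha d_{rg}^\alpha \varsigma_g \sigma^2)^{Lm_r}$ and the single-user CDF is exactly the inner brace of \eqref{the asymptotic OP with pSIC of user g for PRIS}, provided I verify the identity $a^L [\Gamma(2m_r)]^L = \left[\frac{\Gamma(2m_r)\Gamma(2m_g)\Upsilon_g}{\Lambda_g \Gamma(m_r + m_g + \frac{1}{2})}\right]^L$. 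This last step is the main obstacle: it requires the Legendre duplication formula for $\Gamma(2m_r)$ and $\Gamma(2(m_g-m_r))$ together with Gauss's summation ${}_2F_1(2m_r, m_r - m_g + \frac{1}{2}; m_r + m_g + \frac{1}{2}; 1) = \frac{\Gamma(m_r + m_g + \frac{1}{2})\Gamma(2(m_g-m_r))}{\Gamma(m_g - m_r + \frac{1}{2})\Gamma(2m_g)}$, whose convergence condition $m_g > m_r$ is precisely the regime in which $\min(m_r,m_g) = m_r$ dictates the tail. Raising the resulting brace to the order-statistic power $g+k$ inside $\Psi_g \sum_{k} \binom{K-g}{k}\frac{(-1)^k}{g+k}(\cdot)^{g+k}$ then reproduces \eqref{the asymptotic OP with pSIC of user g for PRIS} verbatim, and the dominant $k=0$ term exhibits the diversity order $gLm_r$. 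Reinstating $\xi$ and $\beta$ in the same computation, which only rescales the gain threshold to $\vartheta_{g,pSIC}/\beta$, recovers the ARIS pSIC statement \eqref{the asymptotic OP with pSIC of user g for ARIS}, so both asymptotes follow from a single tail analysis.
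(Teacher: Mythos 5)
Your proposal is correct. The ipSIC half is exactly the paper's (implicit) argument: the corollary is stated without a dedicated proof, and the only sensible derivation is the one you give, namely letting $P_b\to\infty$ in \eqref{the OP with ipSIC of user g PRIS-NOMA-HIS}, noting $\varsigma_g\propto 1/P_b$ so that $\vartheta_{PRIS,g}\to 0$ while $\varphi_g$ stays finite, which also yields the error floor and zero diversity order of Remark \ref{Remark1:the diversity order of the user g}. For the pSIC half your route differs mechanically from the one the paper's constants betray. The form of the coefficient in \eqref{the asymptotic OP with pSIC of user g for PRIS} --- a ${}_2F_1$ evaluated at unity divided by $\Gamma(m_r+m_g+\tfrac{1}{2})$ --- indicates a derivation via the Laplace transform of the Bessel-form product density (Gradshteyn--Ryzhik 6.621.3), evaluated at large transform variable, raised to the $L$-th power for the i.i.d.\ sum, and inverted term by term. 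You instead work in the real domain: small-argument expansion of $K_{m_r-m_g}$, giving $f_{R_l}(r)\approx a\,r^{2m_r-1}$ with $a=\frac{2(m_rm_g)^{m_r}\Gamma(m_g-m_r)}{\Gamma(m_r)\Gamma(m_g)}$, then an $L$-fold convolution and integration, and finally a verification that $a\,\Gamma(2m_r)$ equals the paper's bracket via Gauss's summation theorem and Legendre duplication --- an identity I confirmed holds exactly. The two routes are the same Tauberian tail analysis; yours buys two things the paper does not offer: a cleaner per-element constant, and an explicit statement of the hidden regime restriction $m_r<m_g$ (the paper's ${}_2F_1$ at unity diverges when $m_g\le m_r$, a condition the paper never mentions even though its own simulations use $m_r=m_g$). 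You also correctly identified the subtle point that the asymptote is \emph{not} the naive limit of the moment-matched closed form \eqref{the OP with pSIC of user g PRIS-NOMA-HIS}, whose Gamma surrogate has the wrong lower-tail exponent $b_g+1=L\mu_g^2/\Omega_g$ rather than $Lm_r$; gliding over this distinction is the most common way to get this corollary wrong, and the paper itself leaves the discrepancy unremarked.
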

	
\begin{remark}\label{Remark1:the diversity order of the user g}
	As $P_b$ approaches infinity, the outage probability can be expressed as a constant plus the power exponent of $P_b$. We note that diversity order is the power exponent, which determines the rate of $P_b$ converging to the error floor. Upon substituting \eqref{the asymptotic OP with ipSIC of user g for ARIS} and \eqref{the asymptotic OP with ipSIC of user g for PRIS} into \eqref{the definition of diversity order}, the diversity order of user $g$ with ipSIC is equivalent to $zero$. It  is due to the effect of residual interference from ipSIC.
	Upon substituting \eqref{the asymptotic OP with pSIC of user g for ARIS} and \eqref{the asymptotic OP with pSIC of user g for PRIS} into \eqref{the definition of diversity order}, the diversity order of user $g$ with pSIC for is equal to $Lm_r\left(g+k\right)$. We can find that diversity order is associated with the number of reflected elements and the order of channels and that the introduction of thermal noise from ARIS does not have an effect on the diversity order.
\end{remark}
	
\begin{corollary}\label{Corollary:the asymptotic OP of user f}
	When $P_b$ approaches infinity, the asymptotic outage probability of user $f$ for ARIS/PRIS-NOMA-HIS can be respectively given by
	\begin{align}\label{the asymptotic OP of user f for ARIS}
		&P_{ARIS,f}^{HIS,\infty } = {\Psi _f}\sum\limits_{k = 0}^{K - f} {{
				{K - f}  \choose
				k  }\frac{{{{\left( { - 1} \right)}^k}}}{{f + k}}} \nonumber\\
		& \times {\left\{ {{{\left[ {\frac{{\Gamma \left( {2{m_r}} \right)\Gamma \left( {2{m_f}} \right){\Upsilon _f}}}{{{\Lambda _f}\Gamma \left( {{m_r} + {m_f} + \frac{1}{2}} \right)}}} \right]}^L}\frac{{{{\left( {{\varphi _f} + {\vartheta _f}} \right)}^{L{m_r}}}}}{{2L{m_r}{\beta ^{L{m_r}}}\left( {2L{m_r} - 1} \right)!}}} \right\}^{f + k}},
	\end{align}
	and
	\begin{align}\label{the asymptotic OP of user f for PRIS}
		&P_{PRIS,f}^{HIS,\infty } = {\Psi _f}\sum\limits_{k = 0}^{K - f} {{
				{K - f}  \choose
				k  }\frac{{{{\left( { - 1} \right)}^k}}}{{f + k}}} \nonumber\\
		& \times {\left\{ {{{\left[ {\frac{{\Gamma \left( {2{m_r}} \right)\Gamma \left( {2{m_f}} \right){\Upsilon _f}}}{{{\Lambda _f}\Gamma \left( {{m_r} + {m_f} + \frac{1}{2}} \right)}}} \right]}^L}\frac{{{{\left( {d_{br}^\alpha d_{rf}^\alpha {\varsigma _f}{\sigma ^2}} \right)}^{L{m_r}}}}}{{2L{m_r}\left( {2L{m_r} - 1} \right)}}} \right\}^{f + k}},
	\end{align}
	where ${\Lambda _f} = \Gamma \left( {{m_r}} \right)\Gamma \left( {{m_f}} \right)$, ${\Upsilon _2} = {4^{{m_r} - {m_f} + 1}}\sqrt \pi  {\left( {{m_r}{m_f}} \right)^{{m_r}}}$, $\Upsilon_f  = {\Upsilon _2}F\left( {2{m_r},{m_r} - {m_f} + \frac{1}{2};{m_r} + {m_f} + \frac{1}{2};1} \right)$.
\end{corollary}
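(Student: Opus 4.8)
The plan is to extract the leading behaviour of the exact single-user result in Theorem~\ref{Theorem2:the OP of user f} as $P_b\to\infty$. Since $\varsigma_f\propto 1/P_b$, both $\varphi_f$ and $\vartheta_f$ vanish like $1/P_b$, so the argument $\frac{\sqrt{\varphi_f+\vartheta_f}}{c_f\sqrt\beta}$ of the lower incomplete gamma function in \eqref{the OP of user f ARIS-NOMA-HIS} tends to $0$. The binomial order-statistics wrapper $\Psi_f\sum_{k}\binom{K-f}{k}\frac{(-1)^k}{f+k}[\,\cdot\,]^{f+k}$ survives the limit unchanged; only the bracket, namely the CDF of the unordered cascaded-channel amplitude evaluated at the outage threshold, must be replaced by its true small-argument form. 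I emphasise that the moment-matched Gamma shape $b_f+1=L\mu_f^2/\Omega_f$ of Theorem~\ref{Theorem2:the OP of user f} is only an approximation and would yield a non-integer, incorrect diversity exponent; the asymptotic analysis therefore restarts from the exact near-zero channel statistics.

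Next I would determine that exact leading term. With coherent phase alignment the relevant normalised amplitude is $T=\sum_{l=1}^{L}|h_{rf}^l|\,|h_{br}^l|$, an i.i.d.\ sum of $L$ products of a Nakagami-$m_r$ and a Nakagami-$m_f$ variable. For one product I would expand the density near the origin as $f_R(r)\simeq\rho\,r^{2m_r-1}$, the exponent $2m_r-1$ being inherited from the BS--ARIS Nakagami-$m_r$ factor; the coefficient follows from $\rho=\frac{2m_r^{m_r}}{\Gamma(m_r)}\,\mathbb{E}\!\left[|h_{rf}^l|^{-2m_r}\right]$, equivalently from the small-argument expansion of the Bessel-$K$ product density, and evaluating the remaining integral produces the Gamma ratios together with the Gauss hypergeometric factor $\Upsilon_f=\Upsilon_2\,F(2m_r,m_r-m_f+\tfrac12;m_r+m_f+\tfrac12;1)$. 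The $L$-fold convolution of the leading monomials then gives $F_T(t)\simeq\frac{\rho^{L}[\Gamma(2m_r)]^{L}}{(2Lm_r)!}\,t^{2Lm_r}$, which reveals both the $[\,\cdot\,]^{L}$ structure and the denominator $2Lm_r(2Lm_r-1)!=(2Lm_r)!$ of \eqref{the asymptotic OP of user f for ARIS}.

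Finally I would assemble the pieces. The user-$f$ outage event corresponds to $T<\sqrt{\varphi_f+\vartheta_f}/\sqrt\beta$, so substituting this threshold into $F_T$ turns the bracket into $\big[\frac{\Gamma(2m_r)\Gamma(2m_f)\Upsilon_f}{\Lambda_f\Gamma(m_r+m_f+\frac12)}\big]^{L}\frac{(\varphi_f+\vartheta_f)^{Lm_r}}{2Lm_r\,\beta^{Lm_r}(2Lm_r-1)!}$; raising it to the power $f+k$ and keeping the binomial sum reproduces \eqref{the asymptotic OP of user f for ARIS}. The PRIS expression \eqref{the asymptotic OP of user f for PRIS} then follows by setting $\xi=0$, which removes $\varphi_f$, and by dropping the active gain $\beta$, leaving $(d_{br}^\alpha d_{rf}^\alpha\varsigma_f\sigma^2)^{Lm_r}$. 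Inserting either bracket into \eqref{the definition of diversity order} shows that the exponent of $P_b$ equals $Lm_r(f+k)$, confirming the claimed diversity order.

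The main obstacle is fixing the exact constant, not merely the exponent. This hinges on (i) the near-origin coefficient of a product of two independent Nakagami variables, whose governing moment $\mathbb{E}[|h_{rf}^l|^{-2m_r}]$ converges only when $m_f>m_r$---equivalently the Gauss summation $F(a,b;c;1)=\frac{\Gamma(c)\Gamma(c-a-b)}{\Gamma(c-a)\Gamma(c-b)}$ demands $c-a-b=2(m_f-m_r)>0$---and (ii) reconciling the resulting Gamma ratios with the compact $\Upsilon_f$ form, which I expect to need the Legendre duplication formula. The order-statistics expansion and the final summation on either side of this computation are routine.
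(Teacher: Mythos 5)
Your proposal is correct and follows essentially the same route the paper's (omitted) proof must take: discard the moment-matched Gamma approximation of Theorem~2, replace the bracket by the exact near-origin expansion of the cascaded-channel sum (single-product density $\rho\,r^{2m_r-1}$ with $\rho=\frac{2(m_rm_f)^{m_r}\Gamma(m_f-m_r)}{\Gamma(m_r)\Gamma(m_f)}$, then the $L$-fold convolution giving the $t^{2Lm_r}/\bigl(2Lm_r(2Lm_r-1)!\bigr)$ factor), substitute the threshold $\sqrt{(\varphi_f+\vartheta_f)/\beta}$, and keep the order-statistics wrapper; indeed, applying Gauss's summation $F(a,b;c;1)=\frac{\Gamma(c)\Gamma(c-a-b)}{\Gamma(c-a)\Gamma(c-b)}$ and the Legendre duplication formula to the paper's $\Upsilon_f$ form collapses it exactly to $\Gamma(2m_r)\rho$ per element, confirming your constants. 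Your side remarks are also sound: the hypergeometric factor at unit argument converges only for $m_f>m_r$, and the naive small-argument expansion of the approximate Gamma CDF in Theorem~2 would yield the wrong (non-integer) exponent, which is precisely why the asymptotics must restart from the exact channel statistics.
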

	
\begin{remark}\label{Remark2:the diversity order of the user f}
	Upon substituting \eqref{the asymptotic OP of user f for ARIS} and \eqref{the asymptotic OP of user f for PRIS} into \eqref{the definition of diversity order}, the diversity orders of user $f$ for ARIS-NOMA-HIS and PRIS-NOMA-HIS networks are equal to $Lm_r\left(f+k\right)$.
\end{remark}
	
\begin{corollary}\label{Corollary:the asymptotic OP of user o}
	When $P_b$ approaches infinity, the asymptotic outage probability of user $o$ for ARIS-OMA-HIS and PRIS-OMA-HIS can be separately expressed as
	\begin{align}\label{the asymptotic OP of user o for ARIS}
		P_{ARIS,o}^{HIS,\infty } = {\left[ {\frac{{\Gamma \left( {2{m_r}} \right)\Gamma \left( {2{m_o}} \right){\Upsilon _o}}}{{{\Lambda _o}\Gamma \left( {{m_r} + {m_o} + \frac{1}{2}} \right)}}} \right]^L}\frac{{{{\left( {{\varphi _o} + {\vartheta _o}} \right)}^{L{m_r}}}}}{{2L{m_r}\delta {\beta ^{L{m_r}}}}},
	\end{align}
	and
	\begin{align}\label{the asymptotic OP of user o for PRIS}
		P_{PRIS,o}^{HIS,\infty } = \frac{1}{{\Gamma \left( {{b_o} + 1} \right)}}\gamma \left( {{b_o} + 1,\frac{{\sqrt {d_{br}^\alpha d_{ro}^\alpha {\varsigma _o}{\sigma ^2}} }}{{{c_o}}}} \right),
	\end{align}
	where $\Upsilon_o  = {\Upsilon _3}F\left( {2{m_r},{m_r} - {m_o} + \frac{1}{2};{m_r} + {m_o} + \frac{1}{2};1} \right)$, ${\Upsilon _3} = {4^{{m_r} - {m_o} + 1}}\sqrt \pi  {\left( {{m_r}{m_o}} \right)^{{m_r}}}$.
\end{corollary}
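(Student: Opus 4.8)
The plan is to obtain the two asymptotic expressions not by reusing the moment-matched Gamma model behind the exact result in \eqref{the OP of user o ARIS-OMA-HIS}, but by extracting the true leading-order behaviour of the cascade-channel distribution as the outage threshold collapses to the origin. First I would observe that, with $\rho=P_b/\sigma^2\to\infty$, the factor $\varsigma_o=\gamma_{th_o}/[P_b(1-\gamma_{th_o}\chi_o)]$ vanishes, so both $\varphi_o$ and $\vartheta_o$ tend to zero and the argument of the incomplete gamma function in \eqref{the OP of user o ARIS-OMA-HIS} goes to $0$. Starting from the outage event $\gamma_o<\gamma_{th_o}$ in \eqref{The SINR of user o} and treating the ARIS thermal-noise term $\xi\beta N_{tn}\|{\bf{h}}_{ro}^H{\bf{\Theta}}\|^2$ as the constant $\xi\beta N_{tn}L\omega_{ro}$, the event reduces to ${\left| {{\bf{h}}_{ro}^H{\bf{\Theta}}{\bf{h}}_{br}} \right|^2}<(\varphi_o+\vartheta_o)/\beta$ once the large-scale factors $d_{br}^\alpha d_{ro}^\alpha$ are absorbed into $\varphi_o,\vartheta_o$. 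Hence the asymptotic outage equals the CDF of the coherently combined amplitude $R={\left| {{\bf{h}}_{ro}^H{\bf{\Theta}}{\bf{h}}_{br}} \right|}=\sum_{l=1}^{L}|h_{br}^l|\,|h_{ro}^l|$ evaluated at the vanishing point $\sqrt{(\varphi_o+\vartheta_o)/\beta}$.

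Next I would determine the small-argument expansion of the distribution of a single cascade term $X_l=|h_{br}^l|\,|h_{ro}^l|$, a product of a Nakagami-$m_r$ amplitude and a Nakagami-$m_o$ amplitude. Writing $F_{X}(x)=\int_0^\infty F_{|h_{br}|}(x/b)\,f_{|h_{ro}|}(b)\,db$ and inserting the small-argument form $F_{|h_{br}|}(a)\approx m_r^{m_r-1}a^{2m_r}/\Gamma(m_r)$, the leading term becomes $F_{X}(x)\approx C_o\,x^{2m_r}$, where $C_o$ is proportional to the negative moment $E[\,|h_{ro}|^{-2m_r}]$ of the second factor; this moment is finite precisely when $m_r<m_o$, which is why $m_r$ (the BS--ARIS link) fixes the dominant exponent. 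After the Legendre duplication identity and Gauss's summation theorem at unit argument, $C_o$ reorganises into the bracket $\Gamma(2m_r)\Gamma(2m_o)\Upsilon_o/[\Lambda_o\Gamma(m_r+m_o+\tfrac12)]$ with $\Upsilon_o=\Upsilon_3\,F(2m_r,m_r-m_o+\tfrac12;m_r+m_o+\tfrac12;1)$, i.e. the hypergeometric value at $1$ is merely an unsimplified encoding of $\Gamma(2m_o-2m_r)/\Gamma(m_o-m_r+\tfrac12)$.

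I would then combine the $L$ independent products. For a sum of $L$ i.i.d.\ nonnegative variables whose densities behave like $c\,t^{2m_r-1}$ near zero, the Laplace-transform leading term $c\Gamma(2m_r)s^{-2m_r}$ is raised to the $L$-th power, giving the sum's CDF near the origin as $[c\Gamma(2m_r)]^L\,r^{2Lm_r}/\Gamma(2Lm_r+1)$; since $\Gamma(2Lm_r+1)=2Lm_r\,(2Lm_r-1)!=2Lm_r\delta$ and $r^2\propto(\varphi_o+\vartheta_o)/\beta$, this reproduces exactly the $L$-th power of the bracket together with the factor $(\varphi_o+\vartheta_o)^{Lm_r}/(2Lm_r\delta\,\beta^{Lm_r})$ claimed for ARIS. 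The PRIS expression then follows immediately by setting $\xi=0$ (which removes the amplified thermal noise, $\varphi_o\to0$) and $\beta=1$, whereupon the general route collapses back to the closed form already reported in \eqref{the OP of user o PRIS-OMA-HIS}, so no separate leading-order reduction is performed there. Feeding the resulting $P_b^{-Lm_r}$ scaling into \eqref{the definition of diversity order} would finally confirm the diversity order $Lm_r$.

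The hard part will be the second step: pinning down the exact leading coefficient $C_o$ of the product-of-Nakagami CDF, i.e.\ correctly evaluating the negative moment $E[\,|h_{ro}|^{-2m_r}]$ (equivalently, expanding the Bessel-$K$ product density near the origin and summing the resulting hypergeometric series at argument $1$ via Gauss's theorem) without sign or Gamma-factor slips, and then carrying that constant cleanly through the $L$-fold convolution so that the normalisation $\Gamma(2Lm_r+1)$ matches $2Lm_r\delta$. The constraint $m_r<m_o$ must also be flagged, since it is what guarantees convergence of the negative moment and selects $m_r$ as the exponent governing the diversity.
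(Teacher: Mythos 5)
Your proposal is correct and follows essentially the same route that underlies the paper's stated coefficients (the paper gives this corollary without an explicit appendix proof): replace the moment-matched Gamma surrogate by the true small-argument behaviour of the per-element cascade distribution, whose leading coefficient — via Gauss's theorem $F(2m_r,m_r-m_o+\tfrac12;m_r+m_o+\tfrac12;1)=\Gamma(m_r+m_o+\tfrac12)\Gamma(2m_o-2m_r)/[\Gamma(m_o-m_r+\tfrac12)\Gamma(2m_o)]$ and the duplication formula — collapses exactly to the bracket $\Gamma(2m_r)\Gamma(2m_o)\Upsilon_o/[\Lambda_o\Gamma(m_r+m_o+\tfrac12)]$, after which the $L$-fold convolution supplies the factor $r^{2Lm_r}/\Gamma(2Lm_r+1)=(\varphi_o+\vartheta_o)^{Lm_r}/(2Lm_r\delta\,\beta^{Lm_r})$ and, for PRIS, the paper indeed just restates the exact closed form as its ``asymptotic'' expression, as you do. Your explicit flagging of the convergence condition $m_r<m_o$ (finiteness of $\mathbb{E}[\,|h_{ro}|^{-2m_r}]$, equivalently convergence of the hypergeometric series at unit argument) is a point the paper leaves implicit, and is correct.
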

	
\begin{remark}\label{Remark3:the diversity order of the user o}
	Upon substituting \eqref{the asymptotic OP of user o for ARIS} and \eqref{the asymptotic OP of user o for PRIS} into \eqref{the definition of diversity order}, the diversity orders of user $o$ for ARIS-OMA-HIS and PRIS-OMA-HIS networks are equal to $Lm_r$.
\end{remark}
	
\subsection{Delay-Limited Transmission}
In delay-limited schemes, BS sends information at a steady rate and breaks based on the random fading of the wireless channel. The corresponding system throughput is given by
\begin{align}\label{the definition of throughput}
	{R_{ARIS ,dl}^{HIS,\Lambda }} = \left( {1 - P_{ARIS,g}^{HIS,\Lambda }} \right){R_g} + \left( {1 - P_{ARIS,f}^{HIS}} \right){R_f},
\end{align}
where $\Lambda  \in \left\{ {ipSIC,pSIC} \right\}$, ${P_{ARIS,g}^{HIS,ipSIC}}$, ${P_{ARIS,g}^{HIS,pSIC}}$ and ${P_{ARIS,f}^{HIS}}$ can be obtained from ${\left(\ref{the OP with ipSIC of user g ARIS-NOMA-HIS}\right)}$, ${\left(\ref{the OP with pSIC of user g ARIS-NOMA-HIS}\right)}$ and ${\left(\ref{the OP of user f ARIS-NOMA-HIS}\right)}$, respectively.
	
\section{Ergodic Data rate}\label{Ergodic Rate}
Ergodic data rate is also an important index for evaluating the performance of a communication system. When channel conditions determine the user's target rate, the ergodic data rate can be expressed as the maximum rate at which the system can transmit correctly over cascade Nakagami-$m$ fading channels and it can be defined as
\begin{align}\label{the definition of ergodic rate}
	R = \mathbb{E}\left[ {\log \left( {1 + \gamma_{\phi} } \right)} \right],
\end{align}
where $\gamma_{\phi}$ denotes the receiving SINR of user $\phi$.
	
\subsection{The Ergodic Data Rate of ARIS networks}
The average user $g$ achievable rate can be derived from ${\left(\ref{the definition of ergodic rate}\right)}$, which is defined as $R_{ARIS,g}^{erg} = \mathbb{E}\left[ {\log \left( {1 + {\gamma _g}} \right)} \right]$. In this case, the expression of user $g$ with ipSIC/pSIC for ARIS-NOMA-HIS networks can be assigned in the following theorem.
\begin{theorem}\label{Theorem4:the ER of user g}
	Conditioned on cascade Nakagami-$m$ fading channels, the ergodic data rate of user $g$ with ipSIC for ARIS-NOMA-HIS networks can be approximated as
	\begin{align}\label{the ergodic rate with ipSIC of user g in ARIS}
		&R_{ARIS,g}^{HIS,ipSIC} \approx \frac{{\pi {a_g}}}{{2N{\chi _g}\ln 2}}\sum\limits_{n = 1}^N {\frac{{2{\chi _g}\sqrt {1 - {x_n}^2} }}{{2{\chi _g} + {a_g}\left( {{x_n}{\rm{ + }}1} \right)}}} \left\{ {1 - \sum\limits_{u = 1}^U {} } \right.\nonumber\\
		& \times {H_u}x_u^{{m_g} - 1}{\Psi _g}\sum\limits_{k = 0}^{K - g} {{
				{K - g}  \choose
				k  }\frac{{{{\left( { - 1} \right)}^k}}}{{\left( {g + k} \right)\Gamma \left( {{m_g}} \right)}}} \left[ {\frac{1}{{\left[ {\Gamma \left( {{b_g} + 1} \right)} \right]}}} \right.\nonumber\\
		& \times \left. {{{\left. {\gamma \left( {{b_g} + 1,\frac{{\sqrt {\left( {{x_n}{\rm{ + }}1} \right)\left( {{\varpi _{g1}}{x_u} + {\varpi _{g2}}} \right)} }}{{{c_g}\sqrt {{m_g}{\tau _g}\left( {1 - {x_n}} \right)} }}} \right)} \right]}^{g + k}}} \right\},
	\end{align}
	where ${\tau _g} = \beta {P_b}{\chi _g}$, ${\varpi _{g1}} = d_{br}^\alpha d_{rg}^\alpha \varpi {P_b}$ and ${\varpi _{g2}} = d_{br}^\alpha d_{rg}^\alpha {m_g}\left( {\xi \beta {N_{tn}}L{\omega _{rg}} + {\sigma ^2}} \right)$. ${x_n} = \cos \left( {\frac{{2n - 1}}{{2N}}\pi } \right)$ is the abscissas in the Gauss-Chebyshev integral formula \cite[Eq. (25.4.38)]{MathematicalFunctions}.
\end{theorem}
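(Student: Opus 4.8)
The plan is to compute the rate from the distribution of $\gamma_g$ directly, rather than from the joint law of the constituent channels. First I would invoke the standard integration-by-parts identity for the nonnegative variable $\gamma_g$,
\begin{align}
R_{ARIS,g}^{HIS,ipSIC}=\mathbb{E}\left[\log_2\left(1+\gamma_g\right)\right]=\frac{1}{\ln 2}\int_0^\infty\frac{1-F_{\gamma_g}(x)}{1+x}\,dx .
\end{align}
The key structural observation is that $\gamma_g$ in \eqref{The SINR of user g} is bounded: letting the cascade gain $|{\bf h}_{rg}^H{\bf \Theta}{\bf h}_{br}|^2$ grow without bound drives $\gamma_g\to a_g/\chi_g$, so $F_{\gamma_g}(x)=1$ for every $x\ge a_g/\chi_g$ and the integrand vanishes there. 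Hence the effective domain is the finite interval $[0,a_g/\chi_g)$, which is exactly what makes a Chebyshev--Gauss rule applicable.

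Next I would obtain $F_{\gamma_g}(x)=\Pr(\gamma_g<x)$ in approximate closed form. Solving $\gamma_g<x$ for the cascade gain under the admissibility condition $a_g>x\chi_g$ yields the same threshold-crossing inequality treated in Appendix~A for the event $\{\gamma_g<\gamma_{th_g}\}$, but with the fixed threshold replaced by the running variable $x$; conditioning on the residual interference $|h_I|^2$, replacing the ARIS thermal-noise term $\|{\bf h}_{rg}^H{\bf \Theta}\|^2$ by its mean $L\omega_{rg}$, invoking the Gamma approximation of the ordered cascade amplitude (the parameters $b_g,c_g,\mu_g,\Omega_g$ of Theorem~1), and discharging the $|h_I|^2$ integral by Gauss--Laguerre produces the order-statistics form $\Psi_g\sum_k\binom{K-g}{k}\frac{(-1)^k}{(g+k)\Gamma(m_g)}\sum_u H_u x_u^{m_g-1}\big[\frac{1}{\Gamma(b_g+1)}\gamma(b_g+1,\cdot)\big]^{g+k}$, i.e. the expression of Theorem~1 with $\gamma_{th_g}\mapsto x$. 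I would then apply the affine substitution $x=\frac{a_g}{2\chi_g}(t+1)$, which maps $[0,a_g/\chi_g)$ onto $[-1,1)$, turns the weight $\frac{1}{1+x}$ into $\frac{2\chi_g}{2\chi_g+a_g(t+1)}$, and contributes the Jacobian $\frac{a_g}{2\chi_g}$.

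Finally I would approximate the resulting integral over $[-1,1]$ by $\int_{-1}^1 h(t)\,dt\approx\frac{\pi}{N}\sum_{n=1}^N\sqrt{1-x_n^2}\,h(x_n)$ with $x_n=\cos\!\big(\frac{2n-1}{2N}\pi\big)$, which generates the outer sum together with the prefactor $\frac{\pi a_g}{2N\chi_g\ln 2}$ and the weight $\frac{2\chi_g\sqrt{1-x_n^2}}{2\chi_g+a_g(x_n+1)}$. The hard part will be purely the bookkeeping inside the incomplete-gamma argument: I must check that the substitution collapses the factor $\frac{x}{a_g-x\chi_g}$ into $\frac{t+1}{\chi_g(1-t)}$ and, after folding in $\tau_g=\beta P_b\chi_g$, $\varpi_{g1}$ and $\varpi_{g2}$, reproduces the claimed abscissa $\frac{\sqrt{(x_n+1)(\varpi_{g1}x_u+\varpi_{g2})}}{c_g\sqrt{m_g\tau_g(1-x_n)}}$ exactly, while keeping the two quadratures (Gauss--Laguerre in $u$ for the ipSIC interference, Chebyshev--Gauss in $n$ for the rate integral) and the order-statistics index $k$ cleanly separated. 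A secondary point worth resolving is the consistency of $\chi_g$ versus $\chi_g^2$ between \eqref{The SINR of user g} and the definition of $\varsigma_g$, since the ceiling $a_g/\chi_g$ and the factor $\tau_g$ must agree with the linear $\chi_g$ appearing in the final argument.
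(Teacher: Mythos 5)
Your proposal is correct and follows essentially the same route as the paper's Appendix~C: the CDF-form rate integral truncated at the SINR ceiling $a_g/\chi_g$, the Theorem-1-style ordered-Gamma approximation with Gauss--Laguerre handling of the ipSIC term (with threshold $\gamma_{th_g}\mapsto x$), and a Gauss--Chebyshev rule for the outer integral; the paper merely parameterizes the finite domain as $y\in[0,1/(\beta P_b\chi_g)]$ via $t=\beta P_b a_g y$ and uses the $[0,b]$ form of the Chebyshev rule, which is exactly your affine map to $[-1,1]$. Your side remark is also well taken: the $\chi_g^2$ in the definition of $\varsigma_g$ in Theorem~1 is inconsistent with the linear $\chi_g$ in the SINR and in $\tau_g$, and the rate derivation indeed uses the linear form throughout.
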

\begin{proof}
	See Appendix~C.
\end{proof}
	
\begin{corollary}\label{Corollary:the ER with pSIC of ARIS for user g }
	When the parameter $\varpi $ is set to be zero, the ergodic data rate of user $g$ with pSIC for ARIS-NOMA-HIS over cascade Nakagami-$m$ fading channels can be approximated as
	\begin{align}\label{the ergodic rate with pSIC of user g in ARIS}
		&R_{ARIS,g}^{HIS,pSIC} \approx \frac{{\pi {a_g}}}{{2N{\chi _g}\ln 2}}\sum\limits_{n = 1}^N {\frac{{2{\chi _g}\sqrt {1 - {x_n}^2} }}{{2{\chi _g} + {a_g}\left( {{x_n}{\rm{ + }}1} \right)}}} \nonumber\\
		& \times \left\{ {1 - {\Psi _g}\sum\limits_{k = 0}^{K - g} {{
					{K - g}  \choose
					k  }\frac{{{{\left( { - 1} \right)}^k}}}{{g + k}}} } \right.\left[ {\frac{1}{{\Gamma \left( {{b_g} + 1} \right)}}} \right.\nonumber\\
		&  \times \left. {{{\left. {\gamma \left( {{b_g} + 1,\frac{{\sqrt {{\varpi _{g2,pSIC}}\left( {{x_n}{\rm{ + }}1} \right)} }}{{{c_g}\sqrt {{\tau _g}\left( {1 - {x_n}} \right)} }}} \right)} \right]}^{g + k}}} \right\},
	\end{align}
	where ${{\varpi '}_{g2}} = d_{br}^\alpha d_{rg}^\alpha \left( {\xi \beta {N_{tn}}L{\omega _{rg}} + {\sigma ^2}} \right)$.
\end{corollary}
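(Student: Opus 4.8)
The plan is to obtain this corollary as the $\varpi\to 0$ specialization of Theorem~\ref{Theorem4:the ER of user g}, which is the cleanest route since the pSIC SINR ${\gamma _g}$ in \eqref{The SINR of user g} with $\varpi=0$ is precisely the ipSIC SINR with the residual-interference term $\varpi {P_b}{\left| {h_I} \right|^2}$ removed. First I would recall the integral identity underlying Appendix~C: writing $R_{ARIS,g}^{HIS,pSIC}=\frac{1}{\ln 2}\int_0^{a_g/\chi_g}\frac{1-F_{\gamma_g}(x)}{1+x}\,dx$, where the finite upper limit $a_g/\chi_g$ is the SINR ceiling reached as the cascade gain ${\left| {{\bf{h}}_{rg}^H{\bf{\Theta }}{{\bf{h}}_{br}}} \right|^2}\to\infty$ (the HIS coefficient $\chi_g$ in the denominator of \eqref{The SINR of user g} bounds $\gamma_g$ from above once the ARIS thermal-noise term is treated as constant). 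The change of variables $x=\frac{a_g}{2\chi_g}(t+1)$ maps $[0,a_g/\chi_g]$ onto $[-1,1]$, after which the Gauss--Chebyshev rule $\int_{-1}^1 h(t)\,dt\approx\frac{\pi}{N}\sum_{n=1}^N\sqrt{1-x_n^2}\,h(x_n)$ with $x_n=\cos\left(\frac{2n-1}{2N}\pi\right)$ produces both the common prefactor $\frac{\pi a_g}{2N\chi_g\ln 2}$ and the weight $\frac{2\chi_g\sqrt{1-x_n^2}}{2\chi_g+a_g(x_n+1)}$ appearing in \eqref{the ergodic rate with pSIC of user g in ARIS}. This machinery is inherited verbatim from Theorem~\ref{Theorem4:the ER of user g} and needs no rederivation.

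The genuinely new content is to show that the complementary-CDF factor $1-F_{\gamma_g}(x)$ collapses from its ipSIC form to the stated pSIC form once $\varpi=0$. Here I would invoke Corollary~\ref{Corollary:the OP with pSIC of user g ARIS-NOMA-HIS}: the CDF $F_{\gamma_g}(x)=\mathrm{Pr}(\gamma_g<x)$ is exactly the pSIC outage probability \eqref{the OP with pSIC of user g ARIS-NOMA-HIS} with the target $\gamma_{th_g}$ replaced by the running variable $x$. Consequently the Gauss--Laguerre sum $\sum_{u=1}^U H_u x_u^{m_g-1}$ carried by the ipSIC expression, which in Appendix~C served only to average over the exponential residual-interference power $\left| {h_I} \right|^2$, becomes inert: with $\varpi=0$ the incomplete-gamma argument no longer depends on $x_u$, so it factors out and leaves $\frac{1}{\Gamma(m_g)}\sum_{u=1}^U H_u x_u^{m_g-1}\approx\int_0^\infty \frac{x^{m_g-1}e^{-x}}{\Gamma(m_g)}\,dx=1$, i.e.\ the quadrature merely reproduces the normalization of the unit-scale Gamma density. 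I expect this to be the main obstacle, since it requires recognizing the $u$-summation as the Gauss--Laguerre discretization of a normalized Gamma integral that evaluates exactly to unity, rather than an approximation that would leave a spurious residual.

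It then remains to confirm the argument of the lower incomplete gamma function. Substituting $\gamma_{th_g}\mapsto x=\frac{a_g}{2\chi_g}(x_n+1)$ into $\varsigma_g=\frac{\gamma_{th_g}}{P_b(a_g-\gamma_{th_g}\chi_g)}$ yields $a_g-x\chi_g=\frac{a_g(1-x_n)}{2}$, hence $\varsigma_g=\frac{x_n+1}{\chi_g P_b(1-x_n)}$. Inserting this into $\vartheta_{g,pSIC}=d_{br}^\alpha d_{rg}^\alpha\varsigma_g(\xi\beta N_{tn}L\omega_{rg}+\sigma^2)$, and using ${\varpi '}_{g2}=d_{br}^\alpha d_{rg}^\alpha(\xi\beta N_{tn}L\omega_{rg}+\sigma^2)$ together with $\tau_g=\beta P_b\chi_g$, the argument $\frac{\sqrt{\vartheta_{g,pSIC}}}{c_g\sqrt{\beta}}$ simplifies to $\frac{\sqrt{{\varpi '}_{g2}(x_n+1)}}{c_g\sqrt{\tau_g(1-x_n)}}$, which is the $\varpi_{g2,pSIC}$ term displayed in \eqref{the ergodic rate with pSIC of user g in ARIS}. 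Assembling the prefactor, the Gauss--Chebyshev weight, and the collapsed complementary CDF then completes the derivation.
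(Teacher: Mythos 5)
Your proposal is correct and follows essentially the same route as the paper, whose corollary is simply the $\varpi=0$ specialization of the Appendix~C machinery behind Theorem~\ref{Theorem4:the ER of user g}: the same integral representation of the ergodic rate with upper limit set by the HIS ceiling, the same Gauss--Chebyshev substitution yielding the prefactor $\frac{\pi a_g}{2N\chi_g\ln 2}$ and weights $\frac{2\chi_g\sqrt{1-x_n^2}}{2\chi_g+a_g(x_n+1)}$, and the same simplification of the incomplete-gamma argument to $\frac{\sqrt{\varpi'_{g2}(x_n+1)}}{c_g\sqrt{\tau_g(1-x_n)}}$. The only cosmetic difference is bookkeeping: the paper would set $\varpi=0$ inside \eqref{the expression of FX1-1}, where the average over $\left|h_I\right|^2$ integrates out exactly (so no Gauss--Laguerre step is ever introduced), whereas you set $\varpi=0$ in the final ipSIC formula and correctly recognize the now-inert sum $\frac{1}{\Gamma(m_g)}\sum_{u=1}^U H_u x_u^{m_g-1}$ as the quadrature of the Gamma-density normalization (exact for integer $m_g\le 2U$, approximate otherwise, consistent with the stated $\approx$) -- an equivalent argument, modulo your passing description of $\left|h_I\right|^2$ as exponential when the paper's Appendix~A actually takes it Gamma-distributed with parameter $m_g$.
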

	
\begin{theorem}\label{Theorem5:the ER of user f}
	Conditioned on cascade Nakagami-$m$ fading channels, the ergodic data rate of user $f$ for ARIS-NOMA-HIS networks is approximated as
	\begin{align}\label{the ergodic rate of user f in ARIS}
		&R_{ARIS,f}^{HIS} \approx \frac{{\pi {a_f}}}{{2\ln 2N{\chi _f}}}\sum\limits_{n = 1}^N {\frac{{2{\chi _f}\sqrt {1 - {x_n}^2} }}{{2{\chi _f} + {a_f}\left( {{x_n}{\rm{ + }}1} \right)}}} \nonumber\\
		& \times \left\{ {1 - {\Psi _f}\sum\limits_{k = 0}^{K - f} {{
					{K - f}  \choose
					k  }\frac{{{{\left( { - 1} \right)}^k}}}{{f + k}}} } \right.\left[ {\frac{1}{{\Gamma \left( {{b_f} + 1} \right)}}} \right. \nonumber\\
		&  \times \left. {{{\left. {\gamma \left( {{b_f} + 1,\frac{{\sqrt {\left( {{x_n}{\rm{ + }}1} \right)\left( {{\varpi _{f1}} + {\varpi _{f2}}} \right)} }}{{{c_f}\sqrt {{\tau _f}\left( {1 - {x_n}} \right)} }}} \right)} \right]}^{f + k}}} \right\},
	\end{align}
	where ${\varpi _{f1}} = \xi \beta {N_{tn}}d_{br}^\alpha d_{rf}^\alpha L{\omega _{rf}}$, ${\varpi _{f2}} = d_{br}^\alpha d_{rf}^\alpha {\sigma ^2}$ and ${\tau _f} = \beta {P_b}{\chi _f}$.
\end{theorem}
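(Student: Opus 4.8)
The plan is to follow the route of the proof of Theorem~\ref{Theorem4:the ER of user g} (Appendix~C), now specialised to the far user whose SINR $\gamma_f$ is given in \eqref{The SINR of user f}. First I would rewrite the definition \eqref{the definition of ergodic rate} through the standard identity $\mathbb{E}\left[\log_2(1+\gamma_f)\right]=\frac{1}{\ln 2}\int_0^\infty \frac{1-F_{\gamma_f}(z)}{1+z}\,dz$, so that the ergodic rate is governed entirely by the CDF of $\gamma_f$. The key structural observation is that $\gamma_f$ is bounded: after approximating the ARIS thermal-noise factor $\|\mathbf{h}_{rf}^H\mathbf{\Theta}\|^2$ by the constant $L\omega_{rf}$ (as done throughout Section~\ref{Outage Probability}), sending the cascade gain $|\mathbf{h}_{rf}^H\mathbf{\Theta}\mathbf{h}_{br}|^2\to\infty$ drives $\gamma_f\to a_f/\chi_f$. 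Hence $F_{\gamma_f}(z)=1$ for $z\ge a_f/\chi_f$ and the integral truncates to $\int_0^{a_f/\chi_f}$.

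Next I would harvest the CDF directly from Theorem~\ref{Theorem2:the OP of user f}: since $F_{\gamma_f}(z)=\Pr(\gamma_f<z)$ is exactly the outage expression \eqref{the OP of user f ARIS-NOMA-HIS} with the fixed threshold $\gamma_{t{h_f}}$ promoted to the running variable $z$, it suffices to perform the replacement $\varsigma_f\mapsto\varsigma_f(z)=\frac{z}{P_b\left(a_f-z\chi_f\right)}$ inside $\varphi_f$ and $\vartheta_f$. This produces $1-F_{\gamma_f}(z)$ as precisely the braced factor $\{1-\Psi_f\sum_k\cdots\}$ displayed in the statement, with the summation over $k$ and the power $f+k$ inherited verbatim.

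The remaining step is the quadrature. I would apply the change of variable $z=\frac{a_f}{2\chi_f}(x+1)$, which maps $[0,a_f/\chi_f]$ onto $[-1,1]$ with $dz=\frac{a_f}{2\chi_f}\,dx$ and $\frac{1}{1+z}=\frac{2\chi_f}{2\chi_f+a_f(x+1)}$, and then insert the Gauss--Chebyshev rule $\int_{-1}^1 g(x)\,dx\approx\frac{\pi}{N}\sum_{n=1}^N\sqrt{1-x_n^2}\,g(x_n)$ with $x_n=\cos\left(\frac{2n-1}{2N}\pi\right)$; together these reproduce the prefactor $\frac{\pi a_f}{2\ln 2\,N\chi_f}\cdot\frac{2\chi_f\sqrt{1-x_n^2}}{2\chi_f+a_f(x_n+1)}$. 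Under this substitution one finds $a_f-z\chi_f=\frac{a_f}{2}(1-x_n)$, hence $\varsigma_f(z)=\frac{x_n+1}{\chi_f P_b(1-x_n)}$, and feeding this into $\varphi_f+\vartheta_f=d_{br}^\alpha d_{rf}^\alpha\varsigma_f(z)\left(\xi\beta N_{tn}L\omega_{rf}+\sigma^2\right)$ collapses the incomplete-gamma argument $\frac{\sqrt{\varphi_f+\vartheta_f}}{c_f\sqrt\beta}$ into $\frac{\sqrt{(x_n+1)(\varpi_{f1}+\varpi_{f2})}}{c_f\sqrt{\tau_f(1-x_n)}}$ with $\varpi_{f1}=\xi\beta N_{tn}d_{br}^\alpha d_{rf}^\alpha L\omega_{rf}$, $\varpi_{f2}=d_{br}^\alpha d_{rf}^\alpha\sigma^2$ and $\tau_f=\beta P_b\chi_f$, which is exactly the claimed form.

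I expect the main obstacle to be bookkeeping rather than any genuine difficulty: the delicate point is to verify that the algebraic collapse of the incomplete-gamma argument under $z=\frac{a_f}{2\chi_f}(x+1)$ is exact, and to confirm that the truncation limit $a_f/\chi_f$ coincides with the feasibility condition $a_f>\gamma_{t{h_f}}\chi_f$ of Theorem~\ref{Theorem2:the OP of user f} (equivalently $z<a_f/\chi_f$). Because $\gamma_f$ carries no residual-interference term, the derivation is strictly lighter than that of Theorem~\ref{Theorem4:the ER of user g}: no auxiliary Gauss--Laguerre sum over $u$ is needed, so a single Gauss--Chebyshev sum is all that survives.
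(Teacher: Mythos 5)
Your proposal is correct and follows essentially the same route as the paper's Appendix~D: an integral representation of the ergodic rate in terms of the CDF of the (bounded) SINR, truncation of the integral at the SINR ceiling $a_f/\chi_f$, the sorted cascade-channel CDF (which is exactly the outage expression of Theorem~\ref{Theorem2:the OP of user f} with the threshold promoted to the integration variable), and a Gauss--Chebyshev quadrature. The only differences are cosmetic reparametrizations --- you integrate over the SINR $z$ directly where the paper integrates over $y=z/(a_f\beta P_b)\in[0,1/\tau_f]$, and you cite Theorem~\ref{Theorem2:the OP of user f} instead of re-deriving the CDF --- and your algebraic collapse of the incomplete-gamma argument checks out exactly.
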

\begin{proof}
	See Appendix~D.
\end{proof}
	
\begin{theorem}\label{Theorem6:the ER of user o}
	Conditioned on cascade Nakagami-$m$ fading channels, the ergodic data rate of the orthogonal user $o$ for ARIS-OMA-HIS networks is approximated as
	\begin{align}\label{the ergodic rate of user o in ARIS}
		&R_{ARIS,o}^{HIS} \approx \frac{\pi }{{2N{\chi _o}\ln 2}}\sum\limits_{n = 1}^N {\frac{{2{\chi _o}\sqrt {1 - {x_n}^2} }}{{2{\chi _o} + \left( {{x_n}{\rm{ + }}1} \right)}}} \nonumber\\
		& \times \left[ {1 - \frac{1}{{\Gamma \left( {{b_o} + 1} \right)}}\gamma \left( {{b_o} + 1,\frac{{\sqrt {\left( {{x_n}{\rm{ + }}1} \right)\left( {{\varpi _{o1}} + {\varpi _{o2}}} \right)} }}{{{c_o}\sqrt {{\tau _o}\left( {1 - {x_n}} \right)} }}} \right)} \right],
	\end{align}
	where ${\tau _o} = \beta {P_b}{\chi _o}$, ${\varpi _{o1}} = \xi \beta {N_{tn}}d_{br}^\alpha d_{ro}^\alpha L{\omega _{ro}}$ and ${\varpi _{o2}} = d_{br}^\alpha d_{ro}^\alpha {\sigma ^2}$.
\end{theorem}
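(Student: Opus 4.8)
The plan is to begin from the definition \eqref{the definition of ergodic rate} specialized to $\gamma_\phi=\gamma_o$ and convert the expectation into an integral of the complementary cumulative distribution function via the standard identity
\begin{align}
R_{ARIS,o}^{HIS}=\mathbb{E}\left[\log\left(1+\gamma_o\right)\right]=\frac{1}{\ln 2}\int_0^{\infty}\frac{1-F_{\gamma_o}\left(x\right)}{1+x}\,dx.
\end{align}
The crucial structural observation is that the SINR in \eqref{The SINR of user o} is upper bounded: since the numerator $\beta P_b|{\bf{h}}_{ro}^H{\bf{\Theta}}{\bf{h}}_{br}|^2$ reappears in the denominator scaled by $\chi_o$ together with strictly positive noise terms, one has $\gamma_o<1/\chi_o$ almost surely. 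Hence $1-F_{\gamma_o}(x)$ vanishes for $x\ge 1/\chi_o$ and the integration range collapses to the finite interval $\left(0,1/\chi_o\right)$, which is what ultimately produces the bounded throughput ceiling and the $2\chi_o+(x_n+1)$ denominators in the claimed expression.

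Next I would supply the CDF $F_{\gamma_o}$. Rather than re-deriving it, I would reuse Theorem~\ref{Theorem3:the OP of user o}: the outage probability \eqref{the OP of user o ARIS-OMA-HIS} is precisely $F_{\gamma_o}$ evaluated at the threshold $\gamma_{th_o}$, so replacing $\gamma_{th_o}$ by the running variable $x$ (equivalently $\varsigma_o\mapsto\varsigma_o(x)=x/[P_b(1-x\chi_o)]$) gives
\begin{align}
F_{\gamma_o}\left(x\right)=\frac{1}{\Gamma\left(b_o+1\right)}\gamma\left(b_o+1,\frac{1}{c_o}\sqrt{\frac{x\chi_o\left(\varpi_{o1}+\varpi_{o2}\right)}{\tau_o\left(1-x\chi_o\right)}}\right),
\end{align}
after using $\varphi_o+\vartheta_o=\varsigma_o(\varpi_{o1}+\varpi_{o2})$ and $\tau_o=\beta P_b\chi_o$ to absorb the large-scale fading and noise constants into $\varpi_{o1},\varpi_{o2}$.

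The analytic heart is then a change of variables mapping the finite support onto the Chebyshev interval. I would substitute $x=(t+1)/(2\chi_o)$, so that $t\in(-1,1)$, $1-x\chi_o=(1-t)/2$, $dx=dt/(2\chi_o)$ and $1/(1+x)=2\chi_o/(2\chi_o+t+1)$. Under this map the argument of the lower incomplete gamma function simplifies to $\sqrt{(t+1)(\varpi_{o1}+\varpi_{o2})}/(c_o\sqrt{\tau_o(1-t)})$, exactly the node-wise expression in \eqref{the ergodic rate of user o in ARIS}. Finally I would apply the Gauss--Chebyshev rule $\int_{-1}^1 \frac{f(t)}{\sqrt{1-t^2}}\,dt\approx\frac{\pi}{N}\sum_{n=1}^N f(x_n)$ with $x_n=\cos\left(\frac{2n-1}{2N}\pi\right)$, by writing the integrand as $\frac{\sqrt{1-t^2}}{2\chi_o+t+1}[1-F_{\gamma_o}]\cdot\frac{1}{\sqrt{1-t^2}}$; collecting the $\frac{1}{\ln 2}$, the $\frac{1}{2\chi_o}$ Jacobian and the quadrature weight reproduces the prefactor $\frac{\pi}{2N\chi_o\ln 2}$ together with the summand $\frac{2\chi_o\sqrt{1-x_n^2}}{2\chi_o+(x_n+1)}$ (the factor $2\chi_o$ is left unsimplified in the statement).

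I expect the main obstacle to be justifying the quadrature-based ``$\approx$'' rather than the algebra: the integral $\int_0^{1/\chi_o}\frac{1-F_{\gamma_o}(x)}{1+x}\,dx$ has no closed form because $F_{\gamma_o}$ is a lower incomplete gamma function of an irrational argument, so the Gauss--Chebyshev step is an approximation whose accuracy is controlled by $N$. A secondary point to flag is that $F_{\gamma_o}$ itself already rests on two approximations inherited from the outage analysis --- the moment-matched Gamma surrogate for the cascade Nakagami-$m$ gain that defines $b_o$ and $c_o$, and the treatment of the ARIS thermal-noise term $\xi\beta N_{tn}\|{\bf{h}}_{ro}^H{\bf{\Theta}}\|^2$ as the constant $\xi\beta N_{tn}L\omega_{ro}$ --- so the final statement is an approximation layered on an approximation, which I would state explicitly rather than claim as an equality.
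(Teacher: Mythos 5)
Your proposal is correct and follows essentially the same route as the paper's own (implicit) proof of this theorem, which mirrors Appendices~C and~D: the CCDF integral identity for the ergodic rate, the finite integration range induced by the SINR ceiling $\gamma_o < 1/\chi_o$, the moment-matched Gamma CDF inherited from the outage analysis (Theorem~\ref{Theorem3:the OP of user o}, unordered since user $o$ is orthogonal), and Gauss--Chebyshev quadrature. The only difference is bookkeeping: the paper applies the quadrature rule directly on $[0,1/\tau]$ in the scaled variable, whereas you make the affine change of variables to $[-1,1]$ explicit --- these are the same step.
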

	
\subsection{The Ergodic Data Rate of PRIS networks}
The ergodic data rate of PRIS-NOMA-HIS networks can be obtained with the special case, i.e., $\beta = 1$, since the reflective element for the PRIS does not carry amplification.
\begin{corollary}\label{Corollary:the ER with ipSIC of PRIS for user g }
	In the unique scenario of $\beta = 1$, the ergodic data rate of user $g$ with ipSIC for PRIS-NOMA-HIS over cascade Nakagami-$m$ fading channels is approximated as
	\begin{align}\label{the ergodic rate with ipSIC of user g in PRIS}
		&R_{PRIS,g}^{HIS,ipSIC} \approx \frac{{\pi {a_g}}}{{2N{\chi _g}\ln 2}}\sum\limits_{n = 1}^N {\frac{{2{\chi _g}\sqrt {1 - {x_n}^2} }}{{2{\chi _g} + {a_g}\left( {{x_n}{\rm{ + }}1} \right)}}} \left\{ {1 - \sum\limits_{u = 1}^U {} } \right.  \nonumber\\
		& \times {H_u}x_u^{{m_g} - 1}{\Psi _g}\sum\limits_{k = 0}^{K - g} {{
				{K - g}  \choose
				k  }\frac{{{{\left( { - 1} \right)}^k}}}{{\left( {g + k} \right)\Gamma \left( {{m_g}} \right)}}} \left[ {\frac{1}{{\left[ {\Gamma \left( {{b_g} + 1} \right)} \right]}}} \right. \nonumber\\
		& \times \left. {{{\left. {\gamma \left( {{b_g} + 1,\frac{{\sqrt {\left( {{x_n}{\rm{ + }}1} \right)\left( {{\varpi _{g1}}{x_u} + {{\varpi ''}_{g2}}} \right)} }}{{{c_g}\sqrt {{m_g}{\tau _g}\left( {1 - {x_n}} \right)} }}} \right)} \right]}^{g + k}}} \right\},
	\end{align}
	where ${{\varpi ''}_{g2}} = d_{br}^\alpha d_{rg}^\alpha {m_g}{\sigma ^2}$.
\end{corollary}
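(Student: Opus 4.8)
The plan is to obtain this result as a direct specialization of Theorem~\ref{Theorem4:the ER of user g}, exploiting the fact that a passive RIS is simply an active RIS whose elements neither amplify the reflected signal nor inject thermal noise. Physically, the passive mode corresponds to fixing the reflection amplification coefficient at $\beta=1$ and switching off the surface thermal-noise contribution through $\xi=0$; the latter removes the term $\xi\beta N_{tn}\|\mathbf{h}_{rg}^H\mathbf{\Theta}\|^2$ from the SINR in \eqref{The SINR of user g}. I would therefore begin by writing down the PRIS SINR of user $g$ obtained by setting $\xi=0$ and $\beta=1$ in \eqref{The SINR of user g}, and observe that it retains exactly the same algebraic form as the ARIS SINR, with the noise-plus-interference denominator reduced to $P_b|\mathbf{h}_{rg}^H\mathbf{\Theta}\mathbf{h}_{br}|^2\chi_g+\varpi P_b|h_{\mathrm I}|^2+\sigma^2$.

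Next I would retrace the derivation of Appendix~C that underlies Theorem~\ref{Theorem4:the ER of user g}. That derivation evaluates $\mathbb{E}[\log(1+\gamma_g)]$ through the definition \eqref{the definition of ergodic rate}, approximates the resulting probability integral by the Gauss-Chebyshev rule with nodes $x_n$, and inserts the cascade-Nakagami CDF approximation of $|\mathbf{h}_{rg}^H\mathbf{\Theta}\mathbf{h}_{br}|^2$ together with the Gauss-Laguerre weights $H_u$ and abscissas $x_u$. The key structural observation is that $\beta$ and $\xi$ enter the closed form \eqref{the ergodic rate with ipSIC of user g in ARIS} only through the two composite parameters $\tau_g=\beta P_b\chi_g$ and $\varpi_{g2}=d_{br}^\alpha d_{rg}^\alpha m_g(\xi\beta N_{tn}L\omega_{rg}+\sigma^2)$, while the prefactor $\frac{\pi a_g}{2N\chi_g\ln 2}\sum_n\frac{2\chi_g\sqrt{1-x_n^2}}{2\chi_g+a_g(x_n+1)}$, the quantities $\Psi_g$, $b_g$, $c_g$, $\varpi_{g1}$, and both the binomial and quadrature sums are all independent of $\beta$ and $\xi$.

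Carrying out the two substitutions then forces $\tau_g\to P_b\chi_g$ (through its own definition with $\beta=1$) and $\varpi_{g2}\to\varpi''_{g2}=d_{br}^\alpha d_{rg}^\alpha m_g\sigma^2$, while every other factor of \eqref{the ergodic rate with ipSIC of user g in ARIS} is left untouched, which reproduces \eqref{the ergodic rate with ipSIC of user g in PRIS} verbatim. The only genuine verification, and the step I would treat most carefully, is confirming that $\beta$ and $\xi$ do not enter the Appendix~C derivation implicitly anywhere else; in particular, that the cascade-Nakagami CDF approximation depends on the channel statistics $m_r,m_g,L$ alone and is unaffected by the passive specialization, so that the entire effect of $\beta=1$, $\xi=0$ is confined to $\tau_g$ and $\varpi_{g2}$. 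Once this containment is checked, no further calculation is required.
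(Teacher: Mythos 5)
Your proposal is correct and takes essentially the same route as the paper: the paper gives no separate derivation for this corollary, treating it exactly as you do---as a direct specialization of Theorem~\ref{Theorem4:the ER of user g} (Appendix~C) in which $\beta$ and $\xi$ enter the closed form only through $\tau_g$ and $\varpi_{g2}$, so that the substitution yields $\varpi''_{g2}=d_{br}^\alpha d_{rg}^\alpha m_g\sigma^2$ with all other factors unchanged. Your explicit insistence on setting $\xi=0$ in addition to $\beta=1$ is in fact the correct reading of the corollary (whose wording mentions only $\beta=1$), since the thermal-noise term $\xi\beta N_{tn}L\omega_{rg}$ inside $\varpi_{g2}$ cannot vanish otherwise, consistent with the paper's earlier convention that $\xi=0$ converts ARIS into PRIS.
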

	
\begin{corollary}\label{Corollary:the ER with pSIC of PRIS for user g }
	On the condition of $\varpi =0$ and $\beta = 1$, the ergodic data rate with pSIC of user $g$ for PRIS-NOMA-HIS over cascade Nakagami-$m$ fading channels is approximated as
	\begin{align}\label{the ergodic rate with pSIC of user g in PRIS}
		&R_{PRIS,g}^{HIS,pSIC} \approx \frac{{\pi {a_g}}}{{2N{\chi _g}\ln 2}}\sum\limits_{n = 1}^N {\frac{{2{\chi _g}\sqrt {1 - {x_n}^2} }}{{2{\chi _g} + {a_g}\left( {{x_n}{\rm{ + }}1} \right)}}}  \nonumber\\
		&  \times \left\{ {1 - {\Psi _g}\sum\limits_{k = 0}^{K - g} {{
					{K - g}  \choose
					k  }\frac{{{{\left( { - 1} \right)}^k}}}{{g + k}}} } \right. \left[ {\frac{1}{{\Gamma \left( {{b_g} + 1} \right)}}} \right.\nonumber\\
		&  \times \left. {{{\left. {\gamma \left( {{b_f} + 1,\frac{{\sqrt {d_{br}^\alpha d_{rg}^\alpha {\sigma ^2}\left( {{x_n}{\rm{ + }}1} \right)} }}{{{c_g}\sqrt {{\tau _g}\left( {1 - {x_n}} \right)} }}} \right)} \right]}^{g + k}}} \right\}.
	\end{align}
\end{corollary}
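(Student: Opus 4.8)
The plan is to derive this as the joint $\varpi=0$ (perfect SIC) and $\beta=1$, $\xi=0$ (passive reflection) specialization of the ipSIC ARIS ergodic rate in Theorem~\ref{Theorem4:the ER of user g}. I would first reconstruct the generic argument of Appendix~C for user $g$, and then collapse the two features that are specific to imperfect SIC and to active operation.

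Starting from $R_{PRIS,g}^{HIS,pSIC}=\mathbb{E}\left[\log_2\left(1+\gamma_g\right)\right]$, I would note that under $\varpi=0$, $\beta=1$, $\xi=0$ the SINR in \eqref{The SINR of user g} collapses to $\gamma_g=\frac{P_b X a_g}{P_b X \chi_g+\sigma^2}$, a strictly increasing map of the cascade gain $X=\left|{\bf{h}}_{rg}^H{\bf{\Theta}}{\bf{h}}_{br}\right|^2$ that saturates at the ceiling $a_g/\chi_g$ (the origin of the high-SNR throughput floor). Using $\mathbb{E}\left[\log_2\left(1+\gamma_g\right)\right]=\frac{1}{\ln 2}\int_0^{a_g/\chi_g}\frac{1-F_{\gamma_g}\left(x\right)}{1+x}\,dx$, whose finite upper limit equals this ceiling, I would invert the SINR to get $F_{\gamma_g}\left(x\right)=\Pr\left(X<\frac{x\sigma^2}{P_b\left(a_g-x\chi_g\right)}\right)$ for $x<a_g/\chi_g$, where $X$ is the $g$-th largest cascade gain. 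The order-statistics expansion of Appendix~A then supplies the $\Psi_g\sum_{k=0}^{K-g}\binom{K-g}{k}\frac{\left(-1\right)^k}{g+k}\left[\cdots\right]^{g+k}$ structure, while moment-matching the sum-of-products of Nakagami magnitudes onto a Gamma variable renders the per-user CDF as $\frac{1}{\Gamma\left(b_g+1\right)}\gamma\left(b_g+1,\frac{\sqrt{\cdot}}{c_g}\right)$, the square root reflecting that $X$ is the square of that Gamma sum.

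Next I would reduce the bounded integral to $[-1,1]$ via $x=\frac{a_g}{2\chi_g}\left(t+1\right)$, under which $a_g-x\chi_g=a_g\frac{1-t}{2}$ and $\frac{1}{1+x}=\frac{2\chi_g}{2\chi_g+a_g\left(t+1\right)}$, with Jacobian $dx=\frac{a_g}{2\chi_g}\,dt$. These generate exactly the outer factor $\frac{\pi a_g}{2N\chi_g\ln 2}$, the kernel $\frac{2\chi_g}{2\chi_g+a_g\left(x_n+1\right)}$, and the gamma argument $\frac{\sqrt{d_{br}^\alpha d_{rg}^\alpha\sigma^2\left(x_n+1\right)}}{c_g\sqrt{\tau_g\left(1-x_n\right)}}$ with $\tau_g=\beta P_b\chi_g$. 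Casting the transformed integral into Chebyshev-weighted form and applying Gauss--Chebyshev quadrature with nodes $x_n=\cos\frac{\left(2n-1\right)\pi}{2N}$ introduces the $\sqrt{1-x_n^2}$ factor and yields the stated finite sum.

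The step I expect to need the most care is reconciling the two quadratures with the ipSIC/ARIS parent in \eqref{the ergodic rate with ipSIC of user g in ARIS}. One must verify that $\varpi=0$ collapses the Gauss--Laguerre sum $\frac{1}{\Gamma\left(m_g\right)}\sum_{u=1}^U H_u x_u^{m_g-1}\left(\cdots\right)$ to a single term, because the residual-interference variable $\left|h_I\right|^2$ then drops out of the incomplete gamma and $\frac{1}{\Gamma\left(m_g\right)}\sum_{u=1}^U H_u x_u^{m_g-1}\approx 1$ is precisely the Gauss--Laguerre rule for $\int_0^\infty x^{m_g-1}e^{-x}\,dx$, whereas the Gauss--Chebyshev sum over $n$ persists. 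One must also confirm that $\beta=1$, $\xi=0$ removes the $\xi\beta N_{tn}L\omega_{rg}$ thermal-noise contribution, leaving $d_{br}^\alpha d_{rg}^\alpha\sigma^2$ inside the radical, so the expression agrees with the $\beta=1$ limit of \eqref{the ergodic rate with ipSIC of user g in PRIS}. The remaining algebra is routine.
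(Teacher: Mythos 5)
Your proposal is correct and takes essentially the same route as the paper: the corollary is obtained exactly as you describe, by specializing the Appendix~C derivation of Theorem~\ref{Theorem4:the ER of user g} (with $\varpi=0$ the residual-interference average over $\left|h_I\right|^2$ disappears so the Gauss--Laguerre sum collapses to unity, with $\xi=0$, $\beta=1$ the thermal-noise term drops and $\tau_g=P_b\chi_g$, while the Gauss--Chebyshev quadrature over the bounded SINR integral persists), and your change of variables reproduces each factor of the stated formula. Your derivation also confirms that the $b_f$ inside the incomplete gamma function of the corollary is a typo for $b_g$.
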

	
\begin{corollary}\label{Corollary:the ER of PRIS for user f }
	On the condition of $\beta = 1$, the ergodic data rate of user $f$ for PRIS-NOMA-HIS over cascade Nakagami-$m$ fading channels is approximated as
	\begin{align}\label{the ergodic rate of user f in PRIS}
		&R_{PRIS,f}^{HIS} \approx \frac{{\pi {a_f}}}{{2\ln 2N{\chi _f}}}\sum\limits_{n = 1}^N {\frac{{2{\chi _f}\sqrt {1 - {x_n}^2} }}{{2{\chi _f} + {a_f}\left( {{x_n}{\rm{ + }}1} \right)}}} \nonumber\\
		& \times \left\{ {1 - {\Psi _f}\sum\limits_{k = 0}^{K - f} {{
					{K - f}  \choose
					k  }\frac{{{{\left( { - 1} \right)}^k}}}{{f + k}}} } \right.\left[ {\frac{1}{{\Gamma \left( {{b_f} + 1} \right)}}} \right.\nonumber\\
		& \times \left. {{{\left. {\gamma \left( {{b_f} + 1,\frac{{\sqrt {\left( {{x_n}{\rm{ + }}1} \right){\varpi _{f2}}} }}{{{c_f}\sqrt {{\tau _f}\left( {1 - {x_n}} \right)} }}} \right)} \right]}^{f + k}}} \right\}.
	\end{align}
\end{corollary}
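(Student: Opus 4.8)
The plan is to obtain this corollary as a direct specialization of Theorem \ref{Theorem5:the ER of user f} rather than by redoing the ergodic-rate integration from scratch. The passive RIS is precisely the ARIS with its active amplification switched off: from the SINR construction in \eqref{The SINR of user f} the convert parameter is set to $\xi = 0$ (so that the amplified RIS thermal-noise term $\sqrt{\beta}\,\mathbf{h}_{rf}^H\mathbf{\Theta}\mathbf{n}_r$ appearing in \eqref{The received signals of users} vanishes), and the reflection amplification coefficient is set to $\beta = 1$ (a passive element cannot boost the incident signal). Everything else — the cascade Nakagami-$m$ statistics of $|\mathbf{h}_{rf}^H\mathbf{\Theta}\mathbf{h}_{br}|^2$, the single-stage decoding model for user $f$, and the definition \eqref{the definition of ergodic rate} of the ergodic rate — is unchanged, so the functional form derived in \eqref{the ergodic rate of user f in ARIS} carries over verbatim and only the $\xi$- and $\beta$-dependent quantities must be re-evaluated.

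First I would isolate the quantities in \eqref{the ergodic rate of user f in ARIS} that actually depend on $\xi$ or $\beta$. These are exactly $\varpi_{f1} = \xi\beta N_{tn} d_{br}^\alpha d_{rf}^\alpha L\omega_{rf}$ and $\tau_f = \beta P_b\chi_f$; the remaining constants $\varpi_{f2} = d_{br}^\alpha d_{rf}^\alpha\sigma^2$, $b_f$, $c_f$, $\Psi_f$, $\chi_f$, the binomial weights, and the Gauss–Chebyshev nodes $x_n$ are inert. Substituting $\xi = 0$ forces $\varpi_{f1} = 0$, so the argument $\varpi_{f1}+\varpi_{f2}$ of the lower incomplete gamma collapses to $\varpi_{f2}$, while substituting $\beta = 1$ gives $\tau_f = P_b\chi_f$. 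Carrying these two replacements through the outer Gauss–Chebyshev sum and the inner binomial sum reproduces \eqref{the ergodic rate of user f in PRIS} term by term.

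Second, I would check that this specialization is legitimate, which is where the only real care is needed. In the proof of Theorem \ref{Theorem5:the ER of user f} the RIS thermal-noise magnitude $\|\mathbf{h}_{rf}^H\mathbf{\Theta}\|^2$ was replaced by its mean $L\omega_{rf}$, per the constant-noise approximation noted in Section \ref{Outage Probability}; I must confirm that this step does not fail at $\beta = 1,\ \xi = 0$. It does not, because that entire contribution enters multiplied by $\xi\beta$, so at $\xi = 0$ it vanishes identically and the mean-value replacement is rendered exact rather than approximate. One also verifies that no intermediate expression in Appendix~D divides by $\beta$ or $\xi$, so the ARIS formula is continuous and well defined at the PRIS operating point and the substitution may be performed by simple evaluation.

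Finally, I would explain why the approximation symbol $\approx$ is retained in \eqref{the ergodic rate of user f in PRIS}. After the specialization the only surviving source of error is the $N$-point Gauss–Chebyshev quadrature used to evaluate $\mathbb{E}[\log(1+\gamma_f)]$; this quadrature is wholly independent of $\xi$ and $\beta$, so its accuracy is unaffected by passing to PRIS. The main obstacle is therefore conceptual rather than algebraic — confirming the continuity of the ARIS expression at the degenerate point $\beta = 1,\ \xi = 0$ and tracking which approximation stays active — after which the stated result follows immediately.
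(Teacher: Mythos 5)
Your proposal is correct and takes essentially the same approach as the paper: the paper gives no separate proof of this corollary, presenting it precisely as the specialization of Theorem \ref{Theorem5:the ER of user f} that you describe. Your observation that $\xi = 0$ must accompany $\beta = 1$ (even though the corollary statement mentions only the latter) is exactly right, since otherwise the thermal-noise term $\varpi_{f1}$ would not vanish from the argument of the lower incomplete gamma function.
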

	
\begin{corollary}\label{Corollary:the ER of PRIS for user o }
	The ergodic data rate of the orthogonal user $o$ for PRIS-OMA-HIS networks can be expressed as
	\begin{align}\label{the ergodic rate of user o in PRIS}
		&R_{PRIS,o}^{HIS} \approx \frac{\pi }{{2\ln 2N{\chi _o}}}\sum\limits_{n = 1}^N {\frac{{2{\chi _o}\sqrt {1 - {x_n}^2} }}{{2{\chi _o} + \left( {{x_n}{\rm{ + }}1} \right)}}} \nonumber\\
		& \times \left[ {1 - \frac{1}{{\Gamma \left( {{b_o} + 1} \right)}}\gamma \left( {{b_o} + 1,\frac{{\sqrt {d_{br}^\alpha d_{ro}^\alpha {\sigma ^2}\left( {{x_n}{\rm{ + }}1} \right)} }}{{{c_o}\sqrt {{\tau _o}\left( {1 - {x_n}} \right)} }}} \right)} \right].
	\end{align}
\end{corollary}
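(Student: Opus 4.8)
The plan is to derive Corollary~\ref{Corollary:the ER of PRIS for user o } as a one-line specialization of Theorem~\ref{Theorem6:the ER of user o}, exploiting the fact that a PRIS is nothing but an ARIS whose reflecting elements perform no active amplification. Two substitutions encode this passive limit: the reflection amplification coefficient reduces to $\beta = 1$, and the convert parameter is set to $\xi = 0$ because a purely passive surface injects no amplified thermal noise into the reflected stream. This is exactly the device already used to pass from the active to the passive formulas in the outage section, e.g.\ in Corollary~\ref{Corollary:the OP of user f PRIS-NOMA-HIS}, so the same mechanism should carry over verbatim to the ergodic-rate expression of the orthogonal user.

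First I would isolate, in the statement of Theorem~\ref{Theorem6:the ER of user o}, the only quantities that depend on the active/passive mode: the lumped constant $\varpi_{o1} = \xi\beta N_{tn} d_{br}^\alpha d_{ro}^\alpha L\omega_{ro}$ and the scale $\tau_o = \beta P_b \chi_o$; by contrast $\varpi_{o2} = d_{br}^\alpha d_{ro}^\alpha \sigma^2$ and the shape parameters $b_o,c_o$ of the moment-matched cascade channel contain neither $\beta$ nor $\xi$ and are therefore inert. Setting $\xi = 0$ annihilates $\varpi_{o1}$, so that $\varpi_{o1}+\varpi_{o2}$ collapses to $d_{br}^\alpha d_{ro}^\alpha\sigma^2$, while $\beta = 1$ turns $\tau_o$ into $P_b\chi_o$. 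Substituting these into the argument of the lower incomplete gamma function in Theorem~\ref{Theorem6:the ER of user o} reproduces precisely $\dfrac{\sqrt{(x_n+1)\,d_{br}^\alpha d_{ro}^\alpha\sigma^2}}{c_o\sqrt{\tau_o(1-x_n)}}$, and since the Gauss--Chebyshev prefactor $\dfrac{\pi}{2N\chi_o\ln 2}\sum_n \dfrac{2\chi_o\sqrt{1-x_n^2}}{2\chi_o+(x_n+1)}$ involves neither $\beta$ nor $\xi$ it is carried over unchanged, yielding the claimed expression.

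The genuine content therefore lives entirely in the parent Theorem~\ref{Theorem6:the ER of user o}, whose derivation I would organize around the standard identity $R=\frac{1}{\ln 2}\int_0^\infty \frac{1-F_{\gamma_o}(x)}{1+x}\,dx$, noting that $\gamma_o$ is bounded above by $1/\chi_o$ so the integral has effectively finite support and its ceiling $\log(1+1/\chi_o)$ motivates the affine change of variable underlying the $\tfrac{2\chi_o}{2\chi_o+(x_n+1)}$ weights. I expect the only real obstacle to sit there: approximating the distribution of the cascade gain $|\mathbf{h}_{ro}^H\mathbf{\Theta}\mathbf{h}_{br}|^2$ over $L$ Nakagami-$m$ product terms by a Gamma law via moment matching (which produces $b_o,c_o$) and controlling the Gauss--Chebyshev truncation error. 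Relative to that, the passive specialization carried out here is purely mechanical, and the last step I would perform is a consistency check that the $\xi\to0,\ \beta\to1$ limit is simultaneous and reproduces the stated argument of $\gamma(\cdot,\cdot)$.
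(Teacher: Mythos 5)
Your proposal is correct and matches the paper's own route: the paper states this corollary without a separate proof, obtaining it exactly as you do by specializing Theorem~\ref{Theorem6:the ER of user o} to the passive regime, i.e., $\beta = 1$ (no amplification, so $\tau_o = P_b\chi_o$) together with dropping the amplified thermal-noise term $\varpi_{o1}$ (the paper's $\xi = 0$ convention for PRIS), while $b_o$, $c_o$, $\varpi_{o2}$ and the Gauss--Chebyshev prefactor are unaffected. Your observation that the substantive work resides in the parent theorem (Gamma moment matching of the cascade gain plus the Gauss--Chebyshev quadrature, as in Appendices~C and~D) is also consistent with the paper's structure.
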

	
\subsection{Multiplexing Gains Analysis}
The multiplexing gains at high SNRs is used as an indicators evaluation metric aimed at capturing the diversity of traversal rates using the transmit SNRs \cite{Lizhong2003,Xiejin2023STAR}, and can be given by
\begin{align}\label{the definition of slope analysis}
	S =  - \mathop {\lim }\limits_{\rho  \to \infty } \frac{{R_{erg}^\infty \left( \rho  \right)}}{{\log \left( \rho  \right)}},
\end{align}
where ${R_{erg}^\infty \left( {{\rho}} \right)}$ is the asymptotic ergodic data rate at high SNRs.
	
Based on \eqref{the ergodic rate with ipSIC of user g in ARIS}, we find the ergodic data rate of user $g$ reaches a throughput ceiling when $P_b$ tends to infinity, which can be respectively approximated as
\begin{align}\label{the throughput ceiling with ipSIC of user g}
	&R_{ARIS,g}^{ipSIC,\infty } \approx \frac{{\pi {a_g}}}{{2N{\chi _g}\ln 2}}\sum\limits_{n = 1}^N {\frac{{2{\chi _g}\sqrt {1 - {x_n}^2} }}{{2{\chi _g} + {a_g}\left( {{x_n}{\rm{ + }}1} \right)}}} \nonumber\\
	& \times \left\{ {1 - \sum\limits_{u = 1}^U {{H_u}} x_u^{{m_g} - 1}{\Psi _g}\sum\limits_{k = 0}^{K - g} {{
				{K - g}  \choose
				k  }\frac{{{{\left( { - 1} \right)}^k}}}{{\left( {g + k} \right)\Gamma \left( {{m_g}} \right)}}} } \right.\nonumber\\
	& \times \left. {{{\left[ {\frac{1}{{\left[ {\Gamma \left( {{b_g} + 1} \right)} \right]}}\gamma \left( {{b_g} + 1,\frac{{\sqrt {d_{br}^\alpha d_{rg}^\alpha {x_u^{{m_g} - 1}}} }}{{{c_g}\sqrt {\beta {m_g}{\chi _g}\left( {1 - {x_n}} \right)} }}} \right)} \right]}^{g + k}}} \right\},
\end{align}
and
\begin{align}\label{the throughput ceiling with pSIC of user g}
	R_{ARIS,g}^{pSIC,\infty } \approx \frac{{\pi {a_g}}}{{2N{\chi _g}\ln 2}}\sum\limits_{n = 1}^N {\frac{{2{\chi _g}\sqrt {1 - {x_n}^2} }}{{2{\chi _g} + {a_g}\left( {{x_n}{\rm{ + }}1} \right)}}}.
\end{align}
	
\begin{remark}\label{Remark4:the throughput ceiling of the user g}
	Substituting ${\left(\ref{the throughput ceiling with ipSIC of user g}\right)}$ and ${\left(\ref{the throughput ceiling with pSIC of user g}\right)}$ into ${\left(\ref{the definition of slope analysis}\right)}$, the multiplexing gains for user $g$ for ARIS-NOMA-HIS are equal to $zero$, which is inconsistent with traditional NOMA.
\end{remark}
	
According to \eqref{the ergodic rate of user f in ARIS}, the throughput ceiling on the ergodic data rate of user $f$ when $P_b$ tends to infinity can be written by
\begin{align}\label{the throughput ceiling of user f}
	R_{ARIS,f}^{HIS,\infty }  = \frac{{\pi {a_f}}}{{2N{\chi _f}\ln 2}}\sum\limits_{n = 1}^N {\frac{{2{\chi _f}\sqrt {1 - {x_n}^2} }}{{2{\chi _f} + {a_f}\left( {{x_n}{\rm{ + }}1} \right)}}}.
\end{align}
	
\begin{remark}\label{Remark5:the throughput ceiling of the user f}
	Substituting ${\left(\ref{the throughput ceiling of user f}\right)}$ into ${\left(\ref{the definition of slope analysis}\right)}$, user $f$'s multiplexing gains for ARIS-NOMA-HIS is also equivalent to $zero$. This is because the HIS has an effect on the system.
\end{remark}
	
According to \eqref{the ergodic rate of user o in ARIS}, the throughput ceiling on the ergodic data rate of user $o$ when $P_b$ approaches infinity is
\begin{align}\label{the throughput ceiling of user o}
	R_{ARIS,o}^{HIS,\infty } = \frac{\pi }{{2N{\chi _o}\ln 2}}\sum\limits_{n = 1}^N {\frac{{2{\chi _o}\sqrt {1 - {x_n}^2} }}{{2{\chi _o} + \left( {{x_n}{\rm{ + }}1} \right)}}}.
\end{align}
	
\begin{remark}\label{Remark6:the throughput ceiling of the user o}
	Substituting ${\left(\ref{the throughput ceiling of user o}\right)}$ into ${\left(\ref{the definition of slope analysis}\right)}$, ARIS-OMA-HIS networks yield a $zero$ multiplexing gain of user $o$, which is caused by the the effect of HIS.
\end{remark}
	
\subsection{Delay-Tolerate Transmission}
In delay-tolerant schemes, BS transmits data at a steady rate depending on the user's channel condition constraints, using the ergodic capacity as an upper limit. Accordingly, the system throughput for ARIS-NOMA-HIS networks can be expressed as
\begin{align}\label{the definition of delay-tolerate}
	{R_{ARIS,dt}^{HIS,\Lambda}} = R_{ARIS,g}^{HIS,\Lambda } + R_{ARIS,f}^{HIS},
\end{align}
where $\Lambda  \in \left\{ {ipSIC,pSIC} \right\}$, $R_{ARIS,g}^{HIS,ipSIC}$, $R_{ARIS,g}^{HIS,pSIC}$ and $R_{ARIS,f}^{HIS}$ can be obtained from ${\left(\ref{the ergodic rate with ipSIC of user g in ARIS}\right)}$, ${\left(\ref{the ergodic rate with pSIC of user g in ARIS}\right)}$ and ${\left(\ref{the ergodic rate of user f in ARIS}\right)}$, respectively.
	
\section{Energy Efficiency}\label{Energy Efficiency}
For ARIS-NOMA-HIS networks, the system's total power consumption primarily comprises the power transmitted at BS, the power consumed at BS, the power consumed at ARIS, the output power of ARIS and the hardware power consumption at users. This can be denoted specifically by
\begin{align}\label{the total power consumption}
	{P_{total}} = \kappa {P_b} + {P_{BS}} + L{P_{RE}} + {P_{out}} + {P_{U,\phi }},
\end{align}
where $\kappa \buildrel \Delta \over = {\nu ^{ - 1}}$ with $\nu$ is the transmitter power amplifier efficiency, ${P_b}$ indicates the emitted power at BS. ${P_{BS}}$ indicates the total static power loss at BS, $L{P_{RE}}$ is the total hardware static power loss of $L$ reflector elements at ARIS.
${P_{out}}$ indicates the output power of ARIS and ${P_{U,\phi }}$ indicates the hardware static power consumption of the $\phi$-th user. It is possible to derive that the system's energy efficiency can be written as the ratio of sum rate to total power consumption of system, which is represented as
	\begin{align}\label{the energy efficiency}
		{\eta _{EE}} = \frac{{{R_\Phi }}}{{{P_{total}}}},
	\end{align}
	where ${R_\Phi } \in \left( {R_{ARIS,dl}^{HIS,\Lambda},R_{ARIS,dt}^{HIS,\Lambda}} \right)$. $R_{ARIS,dl}^{HIS,\Lambda}$ and $R_{ARIS,dt}^{HIS,\Lambda}$ can be obtained from ${\left(\ref{the definition of throughput}\right)}$ and ${\left(\ref{the definition of delay-tolerate}\right)}$, respectively.
	
\begin{table}[!t]
	\centering
	\caption{The simulation parameters for ARIS-NOMA.}
	\tabcolsep5pt
	\renewcommand\arraystretch{1.2} 
	\begin{tabular}{|l|l|}
		\hline
		Monte Carlo simulations repeated  &  ${10^6}$ iterations \\
		\hline
		\multirow{2}{*}{The power allocation factors of user $g$ and user $f$ }&  \multirow{1}{*}{$a_g=0.25$}   \\
		&  \multirow{1}{*}{$a_f=0.75$}   \\
		\hline
		\multirow{2}{*}{The targeted rates of user $g$ and user $f$ } & \multirow{1}{*}{$R_{{g}}=1.5$ BPCU}  \\
		& \multirow{1}{*}{$R_{{f}}=1.5$ BPCU} \\
		\hline
		\multirow{1}{*}{The distance between BS and ARIS  }
		& \multirow{1}{*}{$d_{br}=10$ m } \\
		\hline
		\multirow{1}{*}{The distance between ARIS and user $g$   }
		& \multirow{1}{*}{$d_{rg}=10$ m }  \\
		\hline
		\multirow{1}{*}{The distance between ARIS and user $f$   }
		& \multirow{1}{*}{$d_{rf}=20$ m }  \\
		\cline{1-2}
		\multirow{1}{*}{The distance between ARIS and user $o$   }
		& \multirow{1}{*}{$d_{ro}=30$ m }  \\
		\cline{1-2}
		The reflection coefficient of ARIS &  $\beta  = 5 $   \\
		\hline
		The thermal noise power at ARIS   &  $N_{tn} =-30  $ dBm   \\
		\hline
		The AWGN power at users  &  $\sigma ^2 =-20  $ dBm   \\
		\hline
		Pass loss expression   &  $\alpha  = 2.2 $   \\
			
		\cline{1-2}
	\end{tabular}
	\label{parameter}
\end{table}
\section{Simulation Results}\label{Numerical Results}
In this section, the numerical results are provided to verify the effectiveness of analytical results for ARIS-NOMA networks. We also consider the impact of HIS on outage probability, ergodic data rate and energy efficiency of ARIS-NOMA networks. The noise power and bandwidth are set to be ${\sigma ^2} =  - 174 + 10\log \left( {BW} \right)$ and $BW=1000$ MHz, respectively. For illustration purposes, the simulation results used are concluded in Table~\ref{parameter} \cite{Hou2020PRISNOMA,You2021ActiveRIS}, where the abridgement of BPCU denotes bit per channel use.
To facilitate comparison, PRIS-NOMA-HIS, PRIS-OMA-HIS are regarded as benchmarks for ARIS-NOMA-HIS.
In addition, conventional cooperative communication schemes, i,e., multi-antenna AF relaying is also selected to compare the behaviour of ARIS-NOMA/OMA-HIS.
	
Something deserving of clarification is that to demonstrate the accuracy of the results, we have kept the total power consumed by PRIS-NOMA-HIS and ARIS-NOMA-HIS the same. More specifically, $P_b^{ARIS} = P_{BS}^{ARIS} + P_{RIS}^{ARIS} + L\left( {{P_{SW}} + {P_{DC}}} \right)$, $P_b^{PRIS} = P_{BS}^{PRIS} + L{P_{SW}}$ and $P_b^{ARIS} = P_b^{PRIS} = {P_b}$, where $P_{BS}^{ARIS}$ and $P_{BS}^{PRIS}$ are the transmit power of BS in ARIS and PRIS networks, $P_{RIS}^{ARIS}$ is the signal output power in ARIS and $P_{RIS}^{ARIS} = P_{BS}^{ARIS}\beta {\left\| {{\bf{\Theta }}{{\bf{h}}_{br}}} \right\|^2} + \beta {N_{tn}}{\left\| {\bf{\Theta }} \right\|^2}$, ${P_{SW}}$ indicates the amount of power used by each RIS element's control circuit and phase shift switches, ${P_{DC}}$ means the direct current biasing power \cite{Pan2022ARIS}.
	
	
\begin{figure}[t!]
	\begin{center}
		\includegraphics[width=2.8in,  height=2.0in]{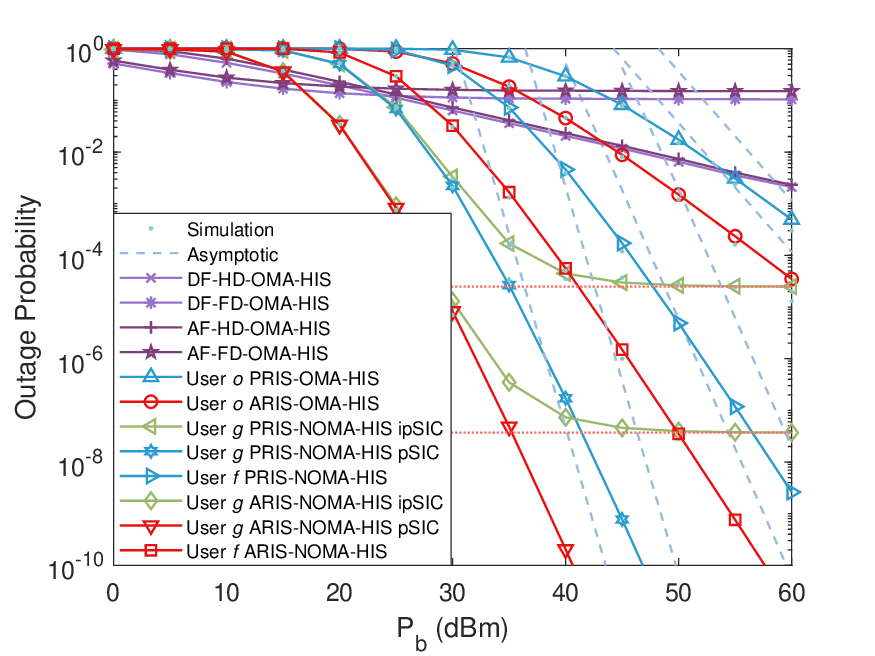}
		\caption{Outage probability versus the transmit power ${P_b}$, with $K=3$, $g=3$, $f=2$, $L=5$, $\beta=5$, and $m = 0.5$, $R_g=R_f=1.5$ BPCU.}
		\label{ARIS_NOMA_HI_add_Relay}
	\end{center}
\end{figure}
	
Fig. \ref{ARIS_NOMA_HI_add_Relay} plots the outage probability of ARIS-NOMA-HIS networks versus ${P_b}$ with $K=3$, $g=3$, $f=2$, $L=5$, $\beta=5$, $m = 0.5$ and $R_g=R_f=1.5$ BPCU.
	It shows that the  Monte Carlo simulation values of outage probability agree perfectly with the above theoretical analyses. We can find that the outage probability of user $g$ with ipSIC congregates to an error floor at higher transmit power $P_b$, which is identical to the conclusion in \textbf{Remark \ref{Remark1:the diversity order of the user g}}. ARIS-NOMA-HIS outage performance outperforms that of ARIS-OMA-HIS because NOMA has higher fairness than OMA when serving multiple customers at the same time.
	Another critical finding is that ARIS-OMA-HIS outperforms DF and AF relay switching between full-duplex (FD) and half-duplex (HD) modes for outages at high SNRs. The reason for it is that ARIS enhances the signals reflected back to users. This indicates that even though ARIS generates amplified noise, its outage performance is still superior to that of traditional cooperative communications, i.e., HD/FD AF relaying.
	
\begin{figure}[t!]
	\begin{center}
		\includegraphics[width=2.9in,  height=2.1in]{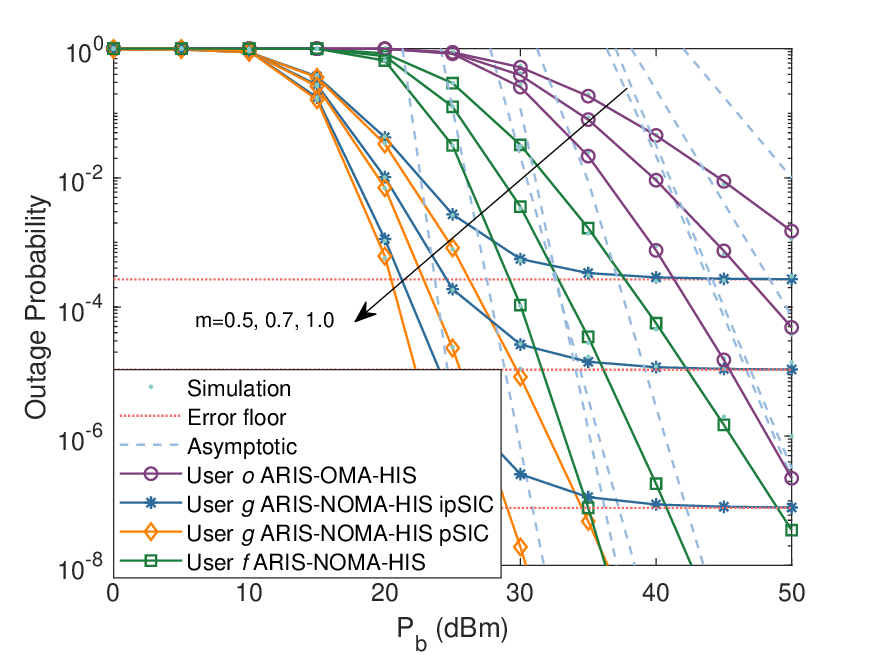}
		\caption{Outage probability versus the transmit power ${P_b}$, with $L=5$, $K=3$, $g=3$, $f=2$, $\beta=5$, and $R_g=R_f=1.5$ BPCU.}
		\label{ARIS_NOMA_HI_diff_m}
	\end{center}
\end{figure}
	
\begin{figure}[t!]
	\begin{center}
		\includegraphics[width=2.9in,  height=2.1in]{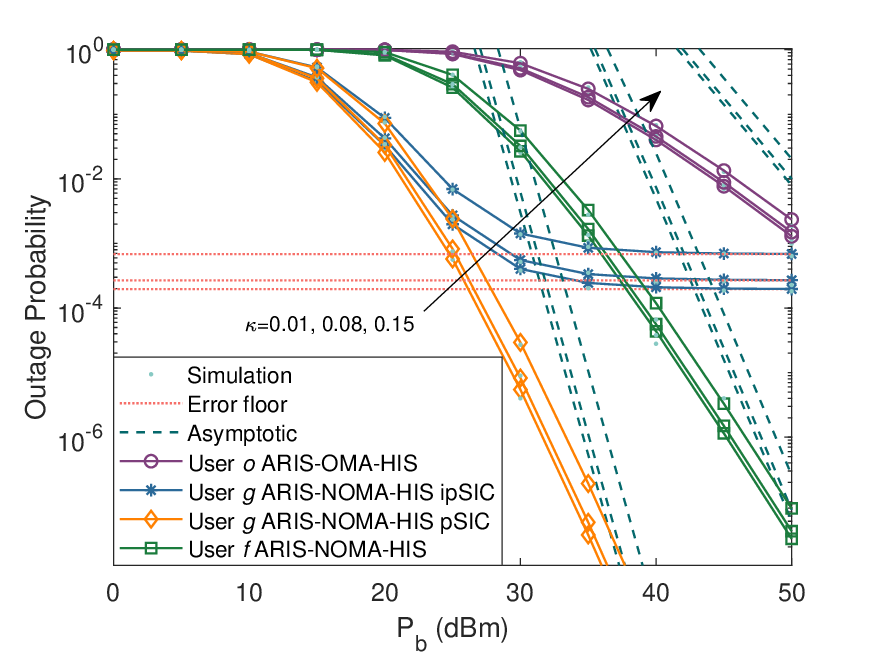}
		\caption{Outage probability versus the transmit power ${P_b}$, with $L=5$, $K=3$, $g=3$, $f=2$, $m=0.5$, $\beta=5$, and $R_g=R_f=1.5$ BPCU.}
		\label{ARIS_NOMA_HI_diff_kappa}
	\end{center}
\end{figure}
	
Fig. \ref{ARIS_NOMA_HI_diff_m} plots the outage probability for ARIS-NOMA/OMA-HIS networks versus ${P_b}$ for different values of the fading parameter $m$ with $K=3$, $L=5$, $g=3$, $f=2$, $\beta=5$ and $R_g=R_f=1.5$ BPCU.
It can be observed that the outage probability of ARIS-NOMA-HIS increases with the fading factor $m$. As $m$ becomes higher, i.e., $m=0.5$, $0.7$ and $1$, ARIS-NOMA-HIS networks are enabling improved outage performance.
The main reason for this phenomenon is that larger values of $m$ correspond to smaller channel fading, and when $m \to \infty$ indicates no fading. It is worth noting that when $m = 1$, the cascaded Nakagami-$m$ fading channel is converted to a Rayleigh fading channel.
As a further advance, Fig. \ref{ARIS_NOMA_HI_diff_kappa} shows the impact of HIS on system outage performance. It is visible that the outage behaviors of ARIS-NOMA-HIS are in connection with the HIS. As $\kappa$ increases, the gap between outage probabilities becomes larger at high SNRs.
It takes $P_b = 10$ to improve outage performance even if $\kappa =0.01$. The slope of the curve stays identical regardless of the value of $\kappa$ changes. This phenomenon indicates that we can improve the outage behaviours of ARIS-NOMA-HIS by reducing the HIS.
	
\begin{figure}[t!]
	\begin{center}
		\includegraphics[width=2.9in,  height=2.1in]{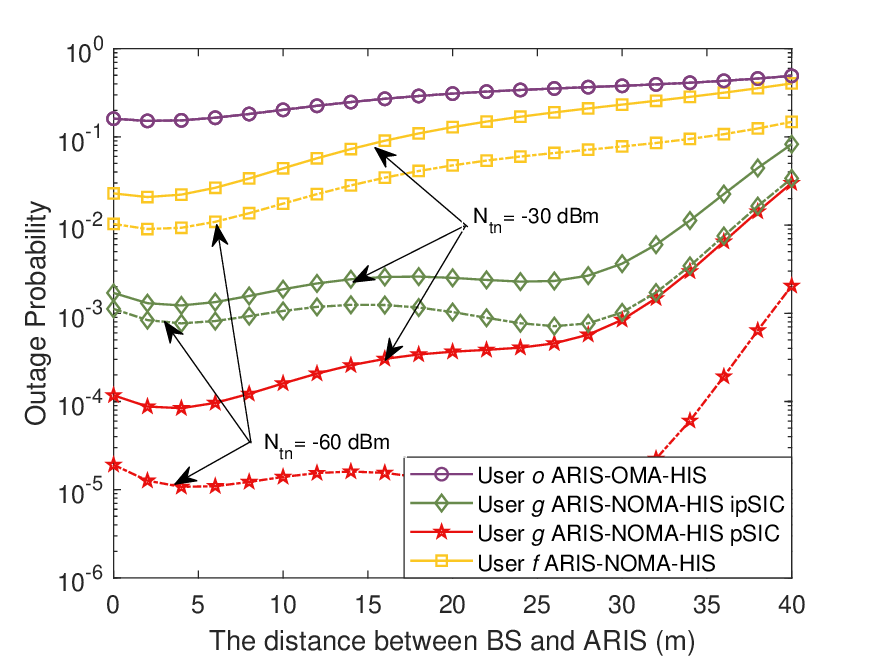}
		\caption{Outage probability versus the different distance between BS and ARIS, with $L=5$, $K=3$, $g=3$, $f=2$, $m=0.5$, $\beta=5$, and $R_g=R_f=1.5$ BPCU.}
		\label{ARIS_NOMA_HI_diff_dbr}
	\end{center}
\end{figure}
	
Fig. \ref{ARIS_NOMA_HI_diff_dbr} plots the outage probability of ARIS-NOMA/OMA-HIS networks versus the different distance from BS to ARIS. Assuming that the BS is in the same plane as user $g$ and user $f$, and they are located at $\left( {0,0} \right)$, $\left( {30,0} \right)$ and $\left( {40,-10} \right)$, respectively. Besides, ARIS are located at $\left( {x_{ARIS},10} \right)$, where ${x_{ARIS}}$ indicates the distance between BS and ARIS.
It can be noted that the outage performance of user $f$ is gradually deteriorating. The causes of this phenomenon is that as $d_{rf}$ decreases, the increasing trend of denominator in ${\left(\ref{The SINR of user f}\right)}$ is much greater than numerator. In summary, ARIS deployed on BS side offers higher performance gains. In addition, reducing the thermal noise values $N_{tn}$ does not improve the performance of ARIS-OMA-HIS networks.
	
\begin{figure}[t!]
	\begin{center}
		\includegraphics[width=2.9in,  height=2.1in]{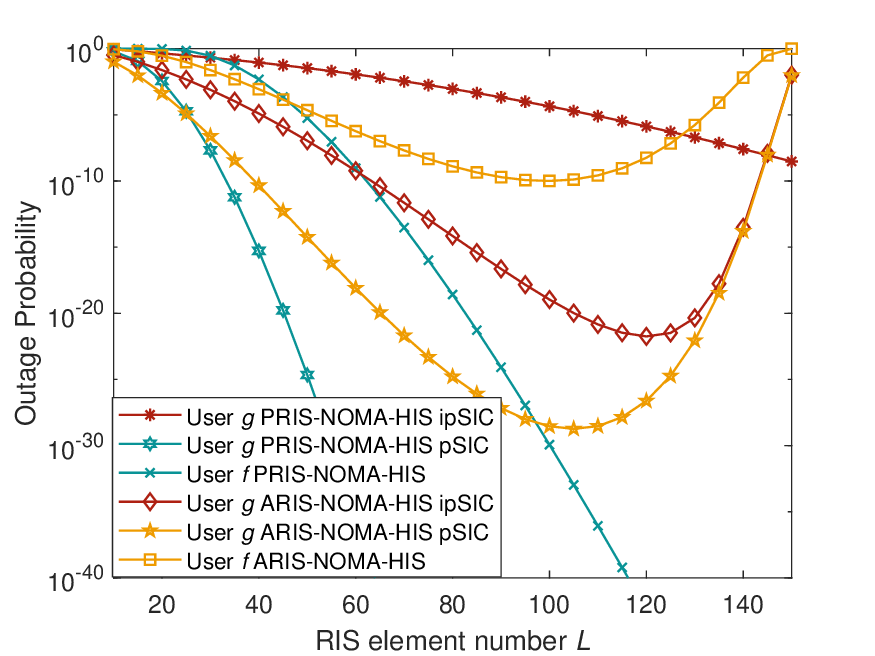}
		\caption{Outage probability versus the reflection elements number $L$, with $K=3$, $g=3$, $f=2$, $m=0.5$, $\beta=10$, and $R_g=R_f=1.5$ BPCU.}
		\label{ARIS_NOMA_HI_diff_L}
	\end{center}
\end{figure}
	
Fig. \ref{ARIS_NOMA_HI_diff_L} plots the outage probability of ARIS-NOMA-HIS networks versus the different reflection elements number $L$ with $P_b=20$ dBm, $K=3$, $g=3$, $f=2$, $m=0.5$, $\beta=10$ and $R_g=R_f=1.5$ BPCU. One can observe that initially both ARIS-NOMA-HIS and PRIS-NOMA-HIS achieve better performance when $L$ rises. However, as $L$ increase to a certain level, the outage performance of ARIS-NOMA-HIS starts to deteriorate. The reason for this phenomenon is that as $L$ increases, thermal noise from ARIS becomes larger.
This means that ARIS-NOMA-HIS networks performs about similar as PRIS-NOMA-HIS networks when there are fewer reflective elements. It is worth considering the trade-off between reflection elements number of ARIS and outage performance.
	
\begin{figure}[t!]
	\begin{center}
		\includegraphics[width=2.9in,  height=2.1in]{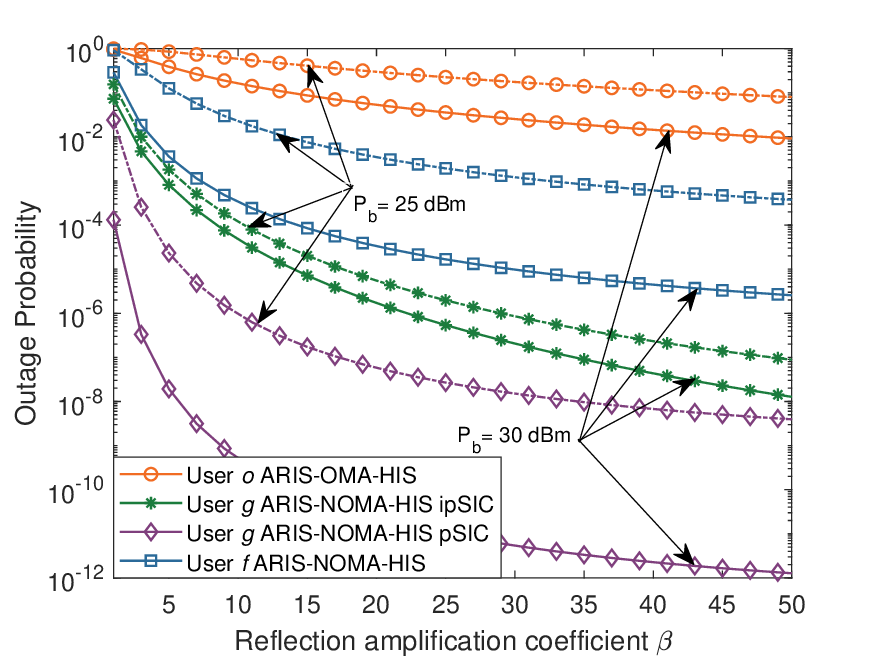}
		\caption{Outage probability versus the reflection amplitude factors, with $L=5$, $K=3$, $g=3$, $f=2$, $m=0.7$, and $R_g=R_f=1.5$ BPCU.}
		\label{ARIS_NOMA_HI_diff_beta}
	\end{center}
\end{figure}
	
\begin{figure}[t!]
	\begin{center}
		\includegraphics[width=2.9in,  height=2.1in]{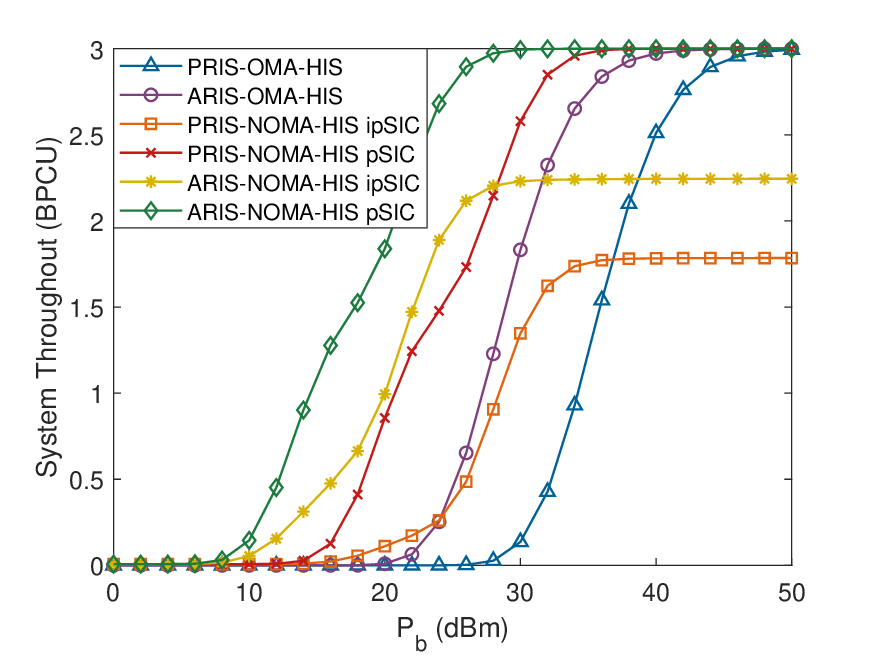}
		\caption{System throughput versus the transmit power ${P_b}$, with $L=5$, $K=3$, $g=3$, $f=2$, $m=0.7$, and $R_g=R_f=1.5$ BPCU.}
		\label{ARIS_NOMA_HI_throughput}
	\end{center}
\end{figure}
	
Fig. \ref{ARIS_NOMA_HI_diff_beta} plots the outage probability of ARIS-NOMA/OMA-HIS networks versus the different reflection amplitude factors $\beta$ with $L=5$, $K=3$, $g=3$, $f=2$, $m=0.7$ and $R_g=R_f=1.5$ BPCU. It can be seen that as $\beta$ increases, the outage performance gain gradually reaches saturation. This is because that as $\beta$ tends to infinity, the outage probability approaches a particular value. Moreover, the gain in outage probability for the ARIS-NOMA-HIS networks with pSIC is more pronounced when $P_b$ is increased from 25 dBm to 30 dBm.
Additionally, Fig. \ref{ARIS_NOMA_HI_throughput} plots the system throughput versus the transmit power ${P_b}$ in delay-limited case with $L=5$, $K=3$, $g=3$, $f=2$, $m=0.7$ and $R_g=R_f=1.5$ BPCU. The system throughput of ARIS-NOMA-HIS is plotted on the basis of ${\left(\ref{the definition of throughput}\right)}$.
As can be observed that ARIS-NOMA-HIS is able to reach throughput ceiling earlier than PRIS-NOMA-HIS. The throughput ceiling of ARIS/PRIS-NOMA-HIS with ipSIC is lower than that with pSIC because of the residual interference from ipSIC.
The system throughput of ARIS-NOMA-HIS is far more favourable than that of PRIS-NOMA-HIS and ARIS-OMA-HIS. The major contributing factors is that ARIS-NOMA-HIS networks have better spectral efficiency and the capability to offer service for multiple users.
	
\begin{figure}[t!]
	\begin{center}
		\includegraphics[width=2.9in,  height=2.1in]{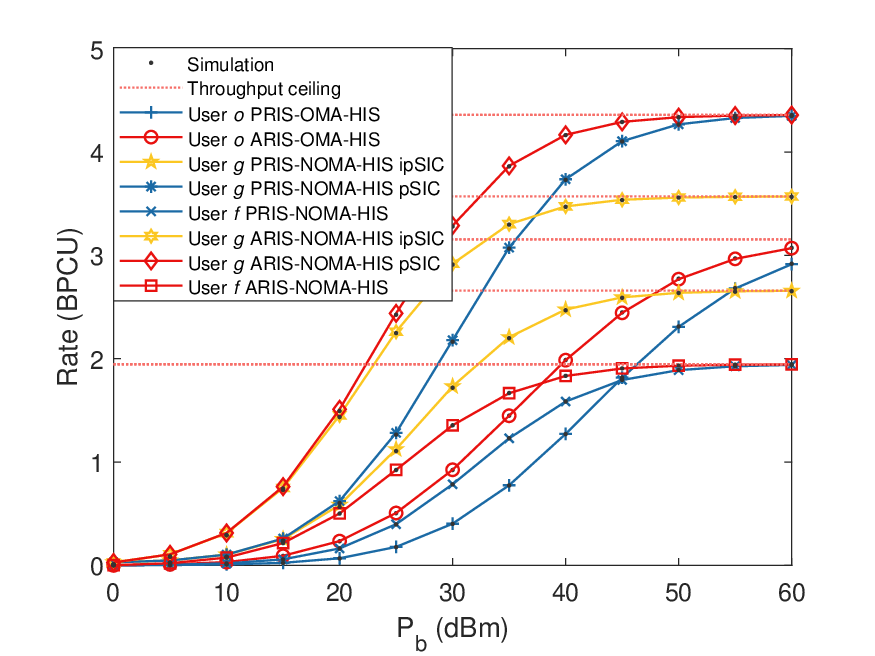}
		\caption{Rate versus the transmit power ${P_b}$, with $L=2$, $K=3$, $g=3$, $f=2$, $m=0.7$, $\beta =5$ and $R_g=R_f=1.5$ BPCU.}
		\label{ARIS_NOMA_HI_erg}
	\end{center}
\end{figure}
	
\begin{figure}[t!]
	\begin{center}
		\includegraphics[width=2.9in,  height=2.1in]{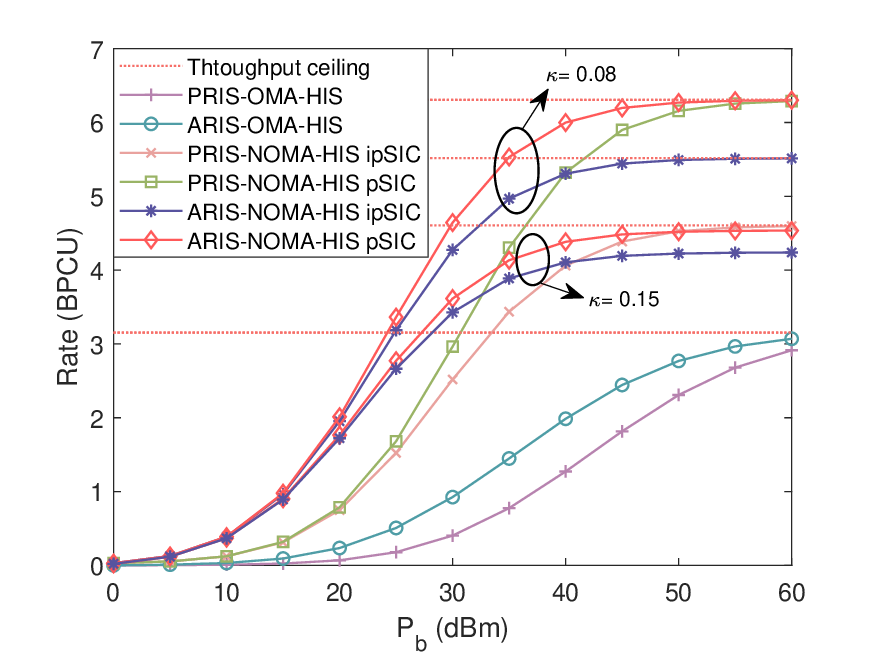}
		\caption{Rate versus the transmit power ${P_b}$, with $L=2$, $K=3$, $g=3$, $f=2$, $m=0.7$, and $R_g=R_f=1.5$ BPCU.}
		\label{ARIS_NOMA_HI_erg_dt}
	\end{center}
\end{figure}
	
Fig. \ref{ARIS_NOMA_HI_erg} plots the ergodic data rates versus the transmit power ${P_b}$ with $L=2$, $K=3$, $g=3$, $f=2$, $m=0.7$, $\beta =5$ and $R_g=R_f=1.5$ BPCU.
We can observe that the ergodic data rate for the ARIS-NOMA-HIS networks is still higher than that for PRIS-NOMA-HIS. Future more, ARIS-NOMA-HIS with ipSIC achieves a weaker rate than that with pSIC. As $P_b$ increases, the rate for user $f$ gradually reaches the upper bound and this is in alignment with the findings in \textbf{Remark \ref{Remark4:the throughput ceiling of the user g}} and \textbf{Remark \ref{Remark5:the throughput ceiling of the user f}}.
Fig. \ref{ARIS_NOMA_HI_erg_dt} plots the ergodic data rates versus the transmit power ${P_b}$ in delay-tolerate schemes with $L=2$, $K=3$, $g=3$, $f=2$, $m=0.7$ and $R_g=R_f=1.5$ BPCU. The influence of changing the HIS factor $\kappa$ on ergodic data rate of system was plotted, which basing on ${\left(\ref{the definition of delay-tolerate}\right)}$.
We can observe that as $\kappa$ increases, i.e., from 0.08 to 0.15, the ergodic performance degradation in ergodic data rate increases, and is more pronounced in ARIS/PRIS-NOMA-HIS with pSIC. At the same time, the ergodic data rata reached an upper ceiling due to the effects of HIS. This demonstrates the importance of accurately modeling HIS values when evaluating ARIS-NOMA networks. Moreover, the residual interference from ipSIC has no significant effect on ARIS-NOMA-HIS networks when $P_b$ is lower.

\begin{figure}[htbp]
	\centering
		
	\subfigure[Delay-tolerant mode]{
		\begin{minipage}[t]{0.5\linewidth} 
			\centering
			\includegraphics[width=0.9\textwidth,height=0.8\textwidth]{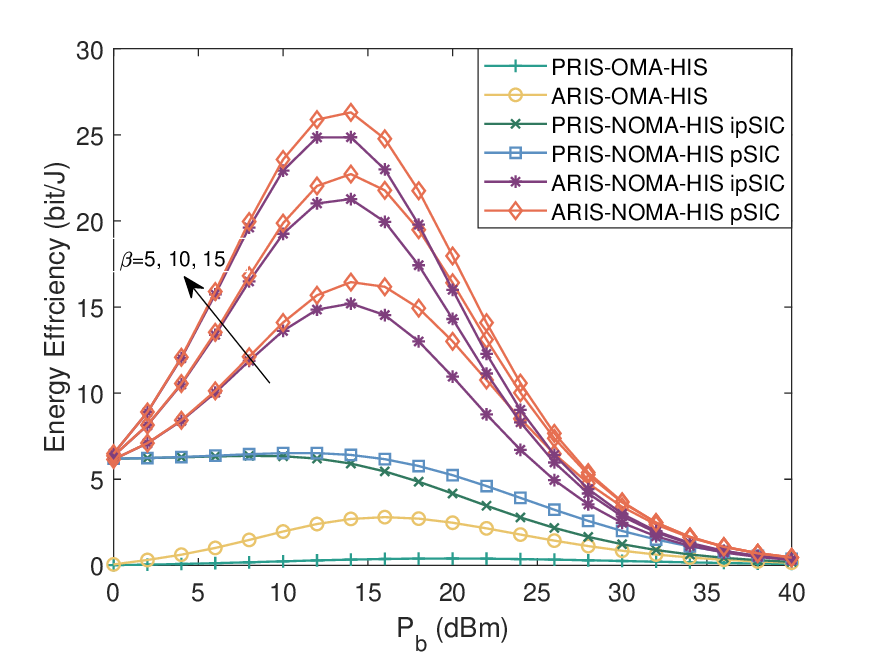}
			\label{ARIS_NOMA_HI_ee_dt}
		\end{minipage}%
	}%
	\subfigure[Delay-limited mode]{
		\begin{minipage}[t]{0.5\linewidth} 
			\centering
			\includegraphics[width=0.9\textwidth,height=0.8\textwidth]{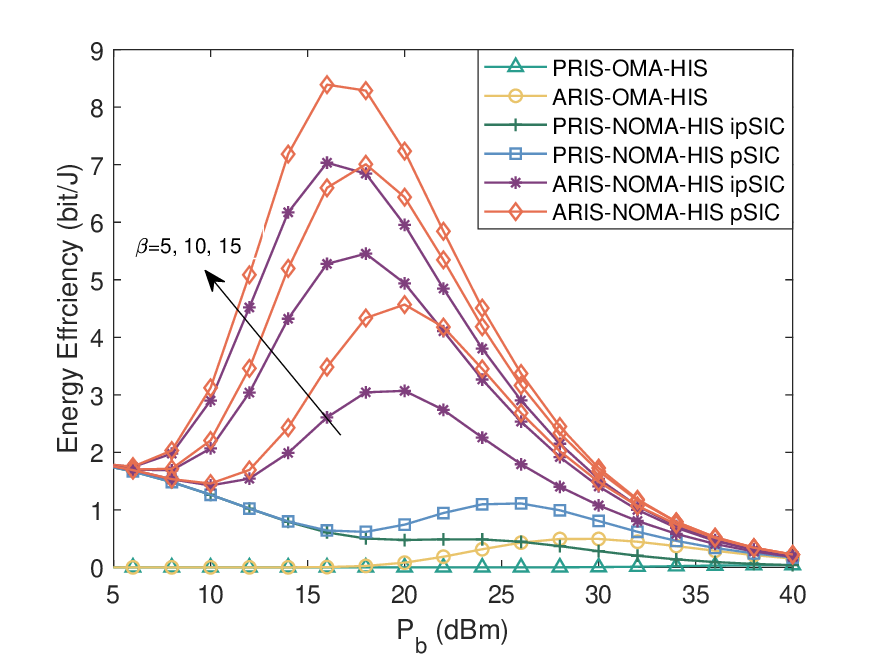}
			\label{ARIS_NOMA_HI_ee_dl}
		\end{minipage}%
	}%
		
	\centering
	\caption{Energy efficiency in delay-tolerant and delay-limited transmission mode, with $L=2$, $K=3$, $g=3$, $f=2$, $m=0.5$, and $R_g=R_f=1.5$ BPCU.}
\end{figure}
	
Fig. \ref{ARIS_NOMA_HI_ee_dt} and Fig. \ref{ARIS_NOMA_HI_ee_dl} plot the energy efficiency curves in delay-tolerant and delay-limited transmission modes with $L=2$, $K=3$, $g=3$, $f=2$, $m=0.5$ and $R_g=R_f=1.5$ BPCU, respectively.
Fig. \ref{ARIS_NOMA_HI_ee_dt} shows the delay-tolerant transmission model of ARIS/PRIS-NOMA-HIS networks. It can be observed that ARIS-OMA-HIS and PRIS-OMA-HIS have lower energy efficiency and that of the ARIS-NOMA-HIS networks is significantly improved compared to PRIS-NOMA-HIS.
As can be seen the system's energy efficiency begins to deteriorate at $P_b > 15$ dBm, This can be inferred by the fact that  the total power consumption of the system becomes larger as $P_b$ increases. Furthermore, the energy efficiency of the ARIS-NOMA-HIS dramatically increases as the growth of the reflection amplification $\beta$.
This conclusions can also be applied in delay-limited schemes. The difference is that the energy efficiency of ARIS-NOMA-HIS with ipSIC is significantly worse than that with pSIC. This demonstrates how delay-limited systems are substantially more sensitive to residual interference from ipSIC than delay-tolerant ones. As the weak trend of ARIS-NOMA-HIS networks indicates its stability in transmitting information.

\section{Conclusion}\label{Conclusion}
This paper has investigated the performance of ARIS-NOMA-HIS networks over the cascade Nakagami-m fading channels. More specifically, we derived the outage probability and ergodic data rate of ARIS-NOMA-HIS networks with ipSIC/pSIC. On the basis of the asymptotic results, the diversity orders and multiplexing gains for user $g$ and user $f$ are acquired. The results indicated that the outage performance of ARIS-NOMA-HIS outperformed PRIS-NOMA-HIS, ARIS-OMA-HIS, and other traditional collaborative relays. Based on the analysis, the diversity order is related to reflection elements number and the order of the channel.
Numerical results revealed that the outage probability and ergodic data rate of user $g$ in ARIS-NOMA-HIS networks was significantly better than that of orthogonal users at high SNRs. We studied the system throughput and energy efficiency of ARIS-NOMA-HIS in both delay-limited and delay-tolerant schemes. The single antenna and perfect CSI setup in this paper leads to an overestimation in performance of ARIS-NOMA-HIS networks, so our future work will consider the influence of imperfect CIS and take into account more complex assumptions.

\appendices
\section*{Appendix~A: Proof of Theorem \ref{Theorem1:the OP of user g}} \label{Appendix:A}
\renewcommand{\theequation}{A.\arabic{equation}}
\setcounter{equation}{0}
	
Based on  the definitions of outage probability for user $g$, the proof processes start by substituting  \eqref{The SINR of user g to detect user f} and \eqref{The SINR of user g} into \eqref{the outage probability of user g}, and then using simple arithmetic operations, the corresponding outage probability for ARIS-NOMA-HIS networks can be calculated as
\begin{small}
	\begin{align}\label{the Appendix for OP expression of user g}
		P_{ARIS,g}^{ipSIC} =& {\rm{Pr}}\left[ {0 < {{\left| {{\bf{h}}_{rg}^H{\bf{\Theta }} {{\bf{h}}_{br}}} \right|}^2} < {\varsigma _g}{\beta ^{ - 1}}\left( {\varpi {P_b}{{\left| {{h_{\rm{I}}}} \right|}^2} + {n_g}} \right)} \right]\nonumber\\
		=& {\rm{Pr}}\left[ {0 < \underbrace {\left| {\sum\limits_{l = 1}^L {h_{br}^lh_{rg}^l} } \right|}_Y < \sqrt {{\zeta _g}{\beta ^{ - 1}}\left( {\varpi {P_b}\underbrace {{{\left| {{h_{\rm{I}}}} \right|}^2}}_X + {n_g}} \right)} } \right]\nonumber\\
		=& \int_0^\infty  {{f_X}\left( x \right){F_Y}\left( {\sqrt {{\zeta _g}{\beta ^{ - 1}}\left( {\varpi {P_b}x + {n_g}} \right)} } \right)} dx,
	\end{align}
\end{small}
where $\varpi  = \xi  = 1$, ${n_g} = \xi \beta {N_{tn}}L{\omega _{rg}} + {\sigma ^2}$, ${\zeta _g} = d_{br}^\alpha d_{rg}^\alpha {\varsigma _g}$. Note that the equation \eqref{the Appendix for OP expression of user g} is derived with the assistance of coherent phase shifting, which can maximize the received performance of the desired users. To compute the equation \eqref{the Appendix for OP expression of user g}, we should first solve the PDF and CDF of random variables $X = {\left| {{h_{\rm{I}}}} \right|^2}$ and $Y= \sum\nolimits_{l = 1}^L {\left| {h_{br}^lh_{rg}^l} \right|} $, respectively. Since ${h_{I}}$ is assumed to obey the Nakagami-$m$ distribution with the parameter of $m_g$, ${\left| {h_{I}} \right|^2}$ is subjected to the Gamma distribution having the parameter of $\left( {{{m_g}},{{m_g}}} \right)$ and the corresponding PDF is given by
\begin{align}\label{the Appendix for PDF of X}
	{f_X}\left( x \right) = \frac{{{m_g}^{{m_g}}{x^{{m_g} - 1}}}}{{\Gamma \left( {{m_g}} \right)}}{e^{ - {m_g}x}}.
\end{align}
It may be seen that the CDF of $Y$ cannot be derived in a straightforward way. However, we can make use of the Laguerre polynomials to provide the approximated CDF. Let ${Y_l} = \left| {h_{br}^lh_{rg}^l} \right|$, the expectation and variance of ${Y_l}$ can be respectively given by
$ {\mu _g} ={\mathbb{E}}\left( {{Y_l}} \right) = \frac{{\Gamma \left( {{m_r} + \frac{1}{2}} \right)\Gamma \left( {{m_g} + \frac{1}{2}} \right)}}{{\Gamma \left( {{m_r}} \right)\Gamma \left( {{m_g}} \right)\sqrt {{m_r}{m_g}} }} $ and ${\Omega _g} = {\mathbb{D}}\left( {{Y_l}} \right)= 1 - \frac{1}{{{m_r}{m_g}}}{\left[ {\frac{{\Gamma \left( {{m_r} + \frac{1}{2}} \right)\Gamma \left( {{m_g} + \frac{1}{2}} \right)}}{{\Gamma \left( {{m_r}} \right)\Gamma \left( {{m_g}} \right)}}} \right]^2}$, where $m_r$ denote the Nakagami-$m$ fading parameter from BS to ARIS.
Conditioned that there is no channel sorting, the PDF of $Y$ can be approximated as ${f_Y}\left( y \right)  \approx  \frac{{{y^{{b_g}}}{e^{ - \frac{y}{{{c_g}}}}}}}{{c_g^{{b_g} + 1}\Gamma \left( {{b_g} + 1} \right)}}$ \cite[Eq. (2.76)]{Primak2004}, where ${b_g} = \frac{{L{{\left[ {{\mathbb{E}}\left( {{Y_l}} \right)} \right]}^2}}}{{{\mathbb{D}}\left( {{Y_l}} \right)}} - 1$, ${c_g} = \frac{{{\mathbb{D}}\left( {{Y_l}} \right)}}{{{\mathbb{E}}\left( {{Y_l}} \right)}}$. The integration operations of ${f_Y}\left( y \right) $ are carried out, and then its CDF can be further derived as ${F_Y}\left( y \right) \approx \frac{1}{{\Gamma \left( {{b_g} + 1} \right)}}\gamma \left( {{b_g} + 1,\frac{y}{{{c_g}}}} \right)$. As a further advance, the CDF of $Y$ with channel sorting can be expressed as
\begin{align}\label{the Appendix: the approximated CDF of Y}
	{F_Y}\left( y \right) \approx & \frac{{K!}}{{\left( {K - g} \right)!\left( {g - 1} \right)!}}\sum\limits_{k = 0}^{K - g} {{
			{K - g}  \choose
			k  }}  \nonumber \\
	&\times \frac{{{{\left( { - 1} \right)}^k}}}{{g + k}}{\left[ {\frac{1}{{\Gamma \left( {{b_g} + 1} \right)}}\gamma \left( {{b_g} + 1,\frac{y}{{{c_g}}}} \right)} \right]^{g + k}}.
\end{align}
By substituting \eqref{the Appendix for PDF of X} and \eqref{the Appendix: the approximated CDF of Y} into \eqref{the Appendix for OP expression of user g}, the outage probability of user $g$ for ARIS-NOMA-HIS is approximated as
\begin{small}
	\begin{align}\label{the Appendix: the approximated OP of user g}
		P_{ARIS,g}^{ipSIC} =& \frac{{K!}}{{\left( {K - g} \right)!\left( {g - 1} \right)!}}\sum\limits_{k = 0}^{K - g} {{
				{K - g}  \choose
				k  }}\frac{{{{\left( { - 1} \right)}^k}{m_g}^{{m_g}}}}{{\left( {g + k} \right){\psi_g}}}\nonumber \\
		& \times {\int_0^\infty  {\frac{{{x^{{m_g} - 1}}}}{{{e^{{m_g}x}}}}\left[ {\gamma \left( {{b_g} + 1,\frac{{\sqrt {{\varphi _g}x + {{\tilde \vartheta }_g}} }}{{{c_g}\sqrt \beta  }}} \right)} \right]} ^{g + k}}dx.
	\end{align}
\end{small}
Applying the Gauss-Laguerre quadrature \cite[Eq. (8.6.5)]{Hildebrand1987introduction} into the above equation, we will acquire \eqref{the OP with ipSIC of user g ARIS-NOMA-HIS}. The proof is completed.
	
\section*{Appendix~B: Proof of Theorem \ref{Theorem2:the OP of user f}} \label{Appendix:B}
\renewcommand{\theequation}{B.\arabic{equation}}
\setcounter{equation}{0}
	
By substituting \eqref{The SINR of user f} into \eqref{the outage probability of user f}, the outage probability for user $f$ is evaluated by
\begin{align}\label{the Appendix for OP expression of user f}
	P_{ARIS,f}^{HIS} &= {\rm{Pr}}\left[ {{{\left| {{\bf{h}}_{rf}^H{\bf{\Theta }} {{\bf{h}}_{br}}} \right|}^2} < {\varsigma _f}{\beta ^{ - 1}}\left( {\xi \beta {N_{tn}}L{\omega _{rf}} + {\sigma ^2}} \right)} \right] \nonumber\\
	& = {\rm{Pr}}\left[ {\underbrace {\left| {\sum\limits_{l = 1}^L {h_{br}^lh_{rf}^l} } \right|}_W < \sqrt {d_{br}^\alpha d_{rf}^\alpha {\varsigma _f}{\beta ^{ - 1}}{n_f}} } \right]\nonumber\\
	& = {F_W}\left( {\sqrt {d_{br}^\alpha d_{rf}^\alpha {\varsigma _f}{\beta ^{ - 1}}{n_f}} } \right),
\end{align}
where $\xi  = 1$, ${n_f} = \xi \beta {N_{tn}}L{\omega _{rf}} + {\sigma ^2}$. For the same reason as \eqref{the Appendix for PDF of X} and \eqref{the Appendix: the approximated CDF of Y}, we can derive the CDF of $W$ with respect to the sorted as
\begin{align}\label{the Appendix: the approximated CDF of W}
	{F_W}\left( w \right) \approx & \frac{{K!}}{{\left( {K - f} \right)!\left( {f - 1} \right)!}}\sum\limits_{k = 0}^{K - f} {{
			{K - f}  \choose
			k  }}  \nonumber \\
	&\times \frac{{{{\left( { - 1} \right)}^k}}}{{f + k}}{\left[ {\frac{1}{{\Gamma \left( {{b_f} + 1} \right)}}\gamma \left( {{b_f} + 1,\frac{w}{{{c_f}}}} \right)} \right]^{f + k}}.
\end{align}
By substituting \eqref{the Appendix: the approximated CDF of W} into \eqref{the Appendix for OP expression of user f}, we will acquire \eqref{the OP of user f ARIS-NOMA-HIS}. The proof is completed.
	
\section*{Appendix~C: Proof of Theorem \ref{Theorem4:the ER of user g}} \label{Appendix:C}
\renewcommand{\theequation}{C.\arabic{equation}}
\setcounter{equation}{0}
	
By substituting \eqref{The SINR of user g} into \eqref{the definition of ergodic rate} and further exploiting the coherent phase shifting scheme, the ergodic data rate for user $g$ of ARIS-NOMA-HIS is calculated by
\begin{align}\label{the Appendix for ER expression of user g}
	R_{ARIS,g}^{erg} &= \mathbb{E}\left[ {\log \left( {1 + {X_1}} \right)} \right] \nonumber\\
	&= \frac{{\beta {P_b}{a_g}}}{{\ln 2}}\int_0^{\frac{1}{{\beta {P_b}\left( {\kappa _b^2 + \kappa _g^2} \right)}}} {\frac{{1 - {F_{{X_1}}}\left( y \right)}}{{1 + \beta {P_b}{a_g}y}}} dy,
\end{align}
and
\begin{align}
	{X_1} = \frac{{\beta {P_b}{a_g}d_{br}^{ - \alpha }d_{rg}^{ - \alpha }{{\left| {\sum\limits_{l = 1}^L {h_{br}^lh_{rg}^l} } \right|}^2}}}{{\beta {P_b}d_{br}^{ - \alpha }d_{rg}^{ - \alpha }{{\underbrace {\left| {\sum\limits_{l = 1}^L {h_{br}^lh_{rg}^l} } \right|}_Y}^2}{\chi _g} + \varpi {P_b}\underbrace {{{\left| {{h_{\rm{I}}}} \right|}^2}}_X + {n_g}}}.\nonumber
\end{align}
The CDF of ${X_1}$ can be written specifically as
\begin{small}
	\begin{align}\label{the expression of FX1-1}
		{F_{{X_1}}}\left( y \right) &= \Pr \left[ {\left| {\sum\limits_{l = 1}^L {h_{br}^lh_{rg}^l} } \right| < \sqrt {\frac{{yd_{br}^\alpha d_{rg}^\alpha \left( {\varpi {P_b}{{\left| {{h_{\rm{I}}}} \right|}^2} + {n_g}} \right)}}{{1 - y\beta {P_b}{\chi _g}}}} } \right]\nonumber\\
		&= \int_0^\infty  {{f_X}\left( x \right)} {F_Y}\left( {\sqrt {\frac{{yd_{br}^\alpha d_{rg}^\alpha \left( {\varpi {P_b}x + {n_g}} \right)}}{{1 - y\beta {P_b}\left( {\kappa _b^2 + \kappa _g^2} \right)}}} } \right)dx.
	\end{align}
\end{small}
By substituting \eqref{the Appendix for PDF of X} and \eqref{the Appendix: the approximated CDF of Y} into \eqref{the expression of FX1-1} and applying the Gauss-Laguerre integral, ${F_{{X_1}}}\left( y \right)$ can be further written as
\begin{small}
	\begin{align}\label{the expression of FX1}
		{F_{{X_1}}}\left( y \right) =& \sum\limits_{u = 1}^U {{H_u}} x_u^{{m_g} - 1}{\Psi _g}\sum\limits_{k = 0}^{K - g} {{
				{K - g}  \choose
				k  }\frac{{{{\left( { - 1} \right)}^k}}}{{\left( {g + k} \right)\Gamma \left( {{m_g}} \right)}}}  \nonumber\\
		& \times {\left[ {\frac{1}{{\left[ {\Gamma \left( {{b_g} + 1} \right)} \right]}}\gamma \left( {{b_g} + 1,\frac{{\sqrt {y{\varpi _{g1}}{x_u} + y{\varpi _{g2}}} }}{{{c_g}\sqrt {{m_g}\left( {1 - y{\tau _g}} \right)} }}} \right)} \right]^{g + k}}.
	\end{align}
\end{small}
In order to adequately simplify the results obtained, we use the following Gauss-Chebyshev integral formula
\begin{align}\label{Gauss-Chebyshev integral}
	\int_0^b {f\left( x \right)} dx \approx \frac{{\pi b}}{{2N}}\sum\limits_{n = 1}^N {f\left( {\frac{{\left( {{x_n}{\rm{ + }}1} \right)b}}{2}} \right)\sqrt {1 - {x_n}^2} } .
\end{align}

By substituting \eqref{the expression of FX1} into \eqref{the Appendix for ER expression of user g} and applying \eqref{Gauss-Chebyshev integral}, we will acquire \eqref{the ergodic rate with ipSIC of user g in ARIS}. The proof is completed.
	
\section*{Appendix~D: Proof of Theorem \ref{Theorem5:the ER of user f}} \label{Appendix:D}
\renewcommand{\theequation}{D.\arabic{equation}}
\setcounter{equation}{0}
	
By substituting \eqref{The SINR of user f} into \eqref{the definition of ergodic rate}, the ergodic data rate for user $f$ of ARIS-NOMA-HIS networks can be calculated by
\begin{align}\label{the Appendix for ER expression of user f}
	R_{ARIS,f}^{erg} &= \mathbb{E}\left[ {\log \left( {1 + {X_2}} \right)} \right] \nonumber\\
	&= \frac{{\beta {P_b}{a_f}}}{{\ln 2}}\int_0^{\frac{1}{{\beta {P_b}\left( {{a_g} + \kappa _b^2 + \kappa _f^2} \right)}}} {\frac{{1 - {F_{{X_2}}}\left( y \right)}}{{1 + \beta {P_b}{a_f}y}}} dy,
\end{align}
and
\begin{align}
	{X_2} = \frac{{\beta {P_b}d_{br}^{ - \alpha }d_{rf}^{ - \alpha }{{\left| {\sum\limits_{l = 1}^L {h_{br}^lh_{rf}^l} } \right|}^2}{a_f}}}{{\beta {P_b}d_{br}^{ - \alpha }d_{rf}^{ - \alpha }{{\left| {\sum\limits_{l = 1}^L {h_{br}^lh_{rf}^l} } \right|}^2}{\chi _f} + \xi \beta {N_{tn}}L{\omega _{rf}} + {\sigma ^2}}}.\nonumber
\end{align}
The CDF of ${X_2}$ can be written specifically as
\begin{small}
	\begin{align}\label{the expression of FX2-1}
		{F_{{X_2}}}\left( y \right) &= \Pr \left[ {\left| {\sum\limits_{l = 1}^L {h_{br}^lh_{rf}^l} } \right| < \sqrt {\frac{{yd_{br}^\alpha d_{rf}^\alpha \left( {\xi \beta {N_{tn}}L{\omega _{rf}} + {\sigma ^2}} \right)}}{{1 - y\beta {P_b}\left( {{a_g} + \kappa _b^2 + \kappa _f^2} \right)}}} } \right]\nonumber\\
		&= {F_Y}\left( {\sqrt {\frac{{yd_{br}^\alpha d_{rf}^\alpha \left( {\xi \beta {N_{tn}}L{\omega _{rf}} + {\sigma ^2}} \right)}}{{1 - y\beta {P_b}\left( {{a_g} + \kappa _b^2 + \kappa _f^2} \right)}}} } \right).
	\end{align}
\end{small}
By substituting \eqref{the Appendix for PDF of X} and \eqref{the Appendix: the approximated CDF of Y} into \eqref{the expression of FX2-1} and applying the Gauss-Laguerre integral, ${F_{{X_2}}}\left( y \right)$ can be rewritten as
\begin{small}
	\begin{align}\label{the expression of FX2}
		{F_{{X_2}}}\left( y \right) &= \frac{{K!}}{{\left( {K - f} \right)!\left( {f - 1} \right)!}}\sum\limits_{k = 0}^{K - f} {{
				{K - f}  \choose
				k  }\frac{{{{\left( { - 1} \right)}^k}}}{{f + k}}}   \nonumber\\
		& \times {\left[ {\frac{1}{{\Gamma \left( {{b_f} + 1} \right)}}\gamma \left( {{b_f} + 1,\frac{{\sqrt {y{\varpi _{f1}} + y{\varpi _{f2}}} }}{{{c_f}\sqrt {1 - y{\tau _f}} }}} \right)} \right]^{f + k}}.
	\end{align}
\end{small}
	
By substituting \eqref{the expression of FX2} into \eqref{the Appendix for ER expression of user f}, we will acquire \eqref{the ergodic rate of user f in ARIS}. The proof is completed.
	
\bibliographystyle{IEEEtran}
\bibliography{mybib}

\end{document}